\documentclass[11pt]{article}
\usepackage{amsmath,amsthm,amsfonts, amssymb}
\usepackage{multicol}
\usepackage{hyperref}
\usepackage{color}

\textheight 8.5in
\topmargin -0.2in
\oddsidemargin 0.2in
\textwidth 6.3in

\newtheorem{definition}{Definition}
\newtheorem{theorem}{Theorem}
\newtheorem{proposition}{Proposition}
\newtheorem{lemma}{Lemma}
\newtheorem{axiom}{Axiom}
\newtheorem{SCaxiom}{Social Choice Axiom}
\newtheorem{property}{Property}
\newtheorem{Rule}{Rule}

\newtheorem{remark}{Remark}

\def\prob#1#2{\Pr_{#1}\left[ #2 \right]}
\def\expec#1#2{\mbox{\bf E}_{#1}\left[ #2 \right]}

\def\B3CT{B$^3$CT}
\def\calT{\mathcal{T}}
\def\calC{\mathcal{C}}
\DeclareMathAlphabet\mathbfcal{OMS}{cmsy}{b}{n}
\def\calbfC{\mathbfcal{C}}
\def\calA{\mathcal{A}}
\def\calH{{\cal H}}
\def\LV{L(V)}
\def\bLV{\overline{L(V)}}

\newcommand{\overbar}[1]{\mkern 1mu\overline{\mkern-1mu#1\mkern-1mu}\mkern 1mu}

\begin{document}
\title{Fixed-Points of Social Choice: An Axiomatic Approach to Network Communities}

\author{Christian Borgs\\ Microsoft Research \and Jennifer Chayes \\ Microsoft Research
\and Adrian Marple\\ Stanford \and Shang-Hua Teng\thanks{Supported in
  part by NSF grants CCF-1111270 and CCF-0964481 and
  by a Simons Investigator Award from the Simons Foundation.}\\  USC
}

\maketitle

\begin{abstract}
We provide the first social choice theory approach to the   question of
what constitutes a community in a social network.
Inspired by social choice theory in voting and other
contexts \cite{ArrowBook}, we start
from an abstract social network framework, called
{\em preference networks} \cite{B3CT};
 these consist of a finite set of members
and a vector giving a total ranking of
the members in the set
for each of them (representing the preferences of
that member).

Within
this framework, we axiomatically study the formation
and structures of communities.
Our study naturally involves two complementary approaches.  In the
first,
we apply social choice theory
and
 define communities
 indirectly by postulating that they are
fixed points of a preference aggregation function
obeying certain desirable axioms.
In the second, we
directly postulate desirable axioms for communities without reference
to preference aggregation, leading to
a natural set of eight community
axioms.

These two approaches allow us to formulate and analyze community
rules.
We prove a taxonomy theorem that provides a
  {\em structural characterization} of the family of those
  community rules that satisfies all eight axioms.  The 
  structure is actually quite beautiful:
the family satisfying all eight
axioms forms a bounded lattice under the natural
 intersection and union operations of community rules.
The taxonomy theorem also gives an explicit characterization
of the most comprehensive community rule and the most
selective community rule consistent with all community axioms.
This structural theorem is complemented with a {\em complexity result}: we show
that while identifying a community by the selective rule is
straightforward, deciding if a subset satisfies the comprehensive rule
is coNP-complete. Our studies also shed light on the limitations of
defining community rules solely based on preference aggregation. In
particular, we show that many aggregation functions lead to
communities which violate at least one of our community axioms. These
include any aggregation function satisfying Arrow's independence
of irrelevant alternative axiom as well as commonly used aggregation
schemes like the Borda count or generalizations thereof. Finally,
we give a polynomial-time rule consistent
with seven axioms and weakly satisfying the eighth axiom.

\end{abstract}

\newpage

\section{Introduction: Formulating Preferences and Communities}

A fundamental problem in network analysis is the characterization and
  identification of subsets of nodes in a network
  that have significant structural coherence.
This problem is usually studied in the context of {\em community
  identification} and {\em network clustering.}
Like other inverse problems in machine learning, this one is
  conceptually challenging: There are many possible ways
  to measure the degree of coherence of a subset and many possible
  interpretations of affinities to model network data.
As a result, various seemingly reasonable/desirable conditions
to qualify a subset as a community have been
studied in the
  literature \cite{modularity,SpielmanTengLocalCluster,Rosvall08,mishra,B3CT,Leskovec08www,hop-hierarc,Radicchi,condkonst,KDD}.
The fact that there are an exponential number of candidate subsets
   to consider makes direct comparison of different community
   characterizations quite difficult.

Among the challenges in the study of communities in a social and
  information network are the following two basic mathematical problems:
\begin{itemize}
\item {\bf Extension of individual affinities/preferences to community coherence}:
A (social) network usually
  represents 
  pairwise interactions among its members, while the
  notion of communities is defined over its larger subsets.
Thus, to model the formation of communities, we need
  a set of consistent rules
  to extend the pairwise relations or individual preferences
  to community coherence.
\item {\bf Inference of missing links}:
Since networks typically are sparse, we also need
  methods to properly infer the missing
  links from the given network data.
\end{itemize}

In this paper, we take what we believe is a novel and principled approach to the
problem of community identification.
Inspired by the classic work in social choice theory \cite{ArrowBook},
we propose an axiomatic approach towards understanding network
communities, providing both a framework
for comparison of different community characterizations, and relating
community identification to well-studied problems in social
choice theory \cite{ArrowBook}.
Here, we focus on the
problem of defining community
rules and coherence measures from individual preferences presented in
the input social/information network, but
we think that this study will also provide the foundation for an axiomatic
approach to the
 problem of inferring missing links.
We plan to address this second problem in a subsequent
paper, which    will  use this paper as a foundation.

Through the lens of axiomatization, we examine both mathematical
and complexity-theoretic structures of communities that satisfy
a community rule or a set of community axioms.
We also study the stability of network communities, and design
  algorithms for identifying and enumerating communities with
  desirable properties.

While the study initiated here is conceptual,
we believe it will ultimately
enable a  more principled way to choose among community formation models
for interpretation of current experiments, and also suggest future experiments.

\subsection{Preference Networks}

Before presenting the highlights of our work,
  we first define an {\em abstract} social network framework
  which enables us
  to focus on the axiomatic study of community rules.
This framework is inspired by social choice theory \cite{ArrowBook}
  and was first used in \cite{B3CT} in the context
  of community identification for modeling social
  networks with complete preference information.
We will refer to each instance of this framework as
 a {\em preference network}.
Below, for a non-empty finite set $V$, let $\LV$
denote the set of all linear orders on $V$, represented, e.g., by
the set of all bijections $\pi:V\to [1:|V|]$,
 where as usual, $[n:m]$ is the set $\{n, n+1,,\dots, m\}$.
Alternatively, $\pi$ can be represented by the ordered list
$\pi=[x_1,x_2,\dots, x_{|V|}]$, where $x_i\in V$ is such that $\pi(x_i)=i$; in our
notation, $\pi(x)$ thus represents the rank of $x$ in the ordered list
$\pi=[x_1,x_2,\dots, x_{|V|}]$.

\begin{definition}[Preference Networks]
A {\em preference network} is a pair $A = (V,\Pi)$, where
  $V$ is a  non-empty finite set and $\Pi$ is a {\em preference profile on $V$},
  defined as an element $\Pi =\{\pi_i\}_{i\in V}\in \LV^V$.  Here
$\pi_i$ specifies the total ranking\footnote{
In
broader settings, one may want to consider  preferences
that allow {\em indifference} or partially ordered preferences, or
both.
One may also model a social network by a cardinal
 affinity network that specify each member's preference by a weighted
affinity vector, for example with weights from $[0,1]$ where
  $1$ and $0$, respectively, represent the highest and
  lowest preferences.
To distinguish ordinal and cardinal preferences, we refer to the
latter as an {\em affinity network}.
Both models are referred to as {\em
affinity systems} in \cite{B3CT}.
For simplicity of exposition, we first focus
  on preference networks.
In Section \ref{Sec:Remarks}, we discuss the possible
  extension of our framework.}
 of
$V$ in the order of $i$'s preference:
$\forall s,u,v\in V$, $s$ {\em prefers} $u$ to $v$,
denoted by $u\succ_{\pi_s} v$, if and only if $\pi_s(u) < \pi_s(v)$.
\end{definition}

As argued in \cite{B3CT}, a real-life
   social network may be viewed as sparse,
  observed social interactions
of an underlying latent preference network.
In this view, the communities of a preference
network may
  be considered to be the ground truth set of
  potential
communities in its observed social~network.

\subsection{Highlights of the Paper}

Our main contribution is an axiomatic framework for
   studying community formation in preference networks, and
   mathematical, complexity-theoretic, and algorithmic
   investigation of community structures in this framework.
Our work on axiomatization of network communities can be
  organized into two related parts:  (1) communities as
  fixed points of social choice aggregation functions; and (2)
  communities via direct axiomatic characterization.  In the second
  part, we specify eight axioms we would like the communities to obey,
  and find conditions under which such communities exist.  In
  the first part, we specify social choice aggregation functions
  for which the communities will be fixed points; this first method
  allows for an ``indirect'' axiomatic characterization in that
  the aggregation functions themselves could be taken to obey axioms
  \cite{ArrowBook,Young1975},
  which would then indirectly characterize the communities which
  result as fixed points.

\subsubsection*{Communities as fixed points of social choice}\label{sub:fixedpoints}

Our approach of starting from preference networks to study
communities naturally
connects community formation to social
  choice theory \cite{ArrowBook},
  which provides a theoretical framework for
  understanding the problem of combining individual
  preferences into a collective preference or decision.
In this  first part of our analysis, we use
   {\em preference aggregation functions}
    studied in
    social choice theory \cite{ArrowBook} to
    characterize communities by defining communities
   as fixed points of a preference aggregation
    function.

Since real-world voting schemes and preference aggregation functions
do not always produce a total order, we will use the following
notation in the definition below.  Let $\bLV$ denote the set of
all {\em ordered partitions} of
 $V$.
For a
$\sigma \in \bLV$, for $i,j\in V$, we use
$i\succ_{\sigma} j$ to denote that $i$
is {\em strictly preferred to} $j$ (that is, $i$ and $j$ belong to
different partitions, and the partition containing $i$ is ahead of the
partition containing $j$ in $\sigma$). In this case, we also say
$j \prec_{\sigma} i$.  We will use $i\succeq_{\sigma} j$ to denote
that $i \succ_{\sigma} j$ or $i$ and $j$ are in the same partition.

To continue, we need some notions motivated by social choice theory.
In this context, $V$ will be considered a set of ``candidates''.  We'll also need a set of possible voters, $\cal S$, which is assumed to be
a countable set -- if not otherwise specified, we identify $\cal S$ with the positive integers $\mathbb N$.  With a slight abuse of notation,
we denote the union of $\LV^S$ over all non-empty finite $S\subseteq
  \mathcal S$ by
$\LV^*$.  A {\em preference aggregation function} is then defined to be an
arbitrary function
 $F: \LV^* \rightarrow \bLV$.  Given a non-empty finite set of voters $S$ and a preference profile
$\Pi_S = \left\{\pi_s \ : \ s\in S\right\}\in \LV^S$, the image $F(\Pi_S)$ is called the
{\em aggregated preference}
\footnote{Note that in our notation without further requiring $F$
to satisfy additional conditions, the labels in $S$ matter: e.g., even if $\pi_1$ and $\pi_2$ are the same permutation of $[n]$, the values of $F(\Pi_{\{1\}})$ and $F(\Pi_{\{2\}})$ can be different.}
of $S$.

\begin{definition}{\sc (Communities as Fixed Points of Social Choice)}\label{SocialChoiceVoting}
Let $A = (V,\Pi)$ be a 
preference network,
$F: \LV^* \rightarrow \bLV$ be a preference aggregation function,
and $\emptyset\neq S\subseteq V$.  $S$ is called a
{\em community of $A$ with respect to $F$} if and
only if $u \succ_{ F(\Pi_S)} v$, $\forall u\in S, v\in V-S$.

The function $\calC_F$ mapping $A$ into the set
  of communities defined above is called
  the {\em fixed point rule with respect to $F$}.
If $F$ is not specified, i.e., if there exists an $F$ such that $\calC=\calC_F$, we call $\calC$ simply {\em a fixed point rule}.
\end{definition}

Informally,  this definition says  that
  a community
is a subset  $S\subseteq V$
such that, when we aggregate the preferences of all
its members, 
the resulting
aggregated preference puts
the members $S$ as the top $|S|$ elements.\footnote{In the case of ties, we allow for ties
among the top $|S|$ members, as well as among the lower ranked members, but not
between the top $|S|$ members and anyone below.} In other words, under
the aggregation function $F$, the members of the
community ``vote'' for themselves.
Thus, $S$ is a {\em fixed point} of its aggregated preference.
The community characterization of Definition \ref{SocialChoiceVoting}
   generalizes the following concept of self-determination of
   \cite{B3CT}:

\begin{definition}{\sc (\B3CT Communities)}\label{B3CTVoting}
Let $A = (V,\Pi)$ be a preference network.
For  $\emptyset\neq S\subseteq V$ and $i\in V$, let $\phi_S^\Pi(i)$
denote the number of votes that member $i$ would receive if each
member $s\in S$ was casting one vote for each of its $|S|$ most
preferred members according to its preference $\pi_s$.
In other words,
$\phi_S^{\Pi}(i)
=\left|\{ s\ :\ (s\in S) \ \&\
 (\pi_s(i)\in [1:|S|])\}\right|.$
Then, $S$ is {\em \B3CT-self-determined} if
  everyone in $S$ receives more votes from
  $S$ then everyone outside $S$.
\end{definition}

It is easy to see that the \B3CT voting rule is an instance of
a fixed-point rule, with preference aggregation function $F$
defined by
$v\succ_{F(\Pi_S)} w$ iff $\phi_S^\Pi(v)>\phi_S^\Pi(w)$.

We will also refer to a community according to
   Definition \ref{SocialChoiceVoting} as an {\em $F$-self-determined community}.
We are particularly interested in those aggregation functions
   that satisfy various axioms in social choice theory \cite{ArrowBook},
since this
  enables us to  utilize  established social choice
  theory
  to   study all conceivable
  self-determination community rules
   within one unified framework.
For example, it allows us to
  reduce the fairness analysis for community formation
  to the fairness of preference aggregation functions.

Arrow's celebrated impossiblility theorem and subsequent work
  in social choice theory \cite{ArrowBook}
  point to both challenges and exciting opportunities for
  understanding communities in preference networks.
Recall that Arrow's theorem states that for $n>2$, no
  (strictly linear)
preference aggregation function satisfies all of the following
  three  axiomatic conditions: {\sf Unanimity},
  {\sf Independence of Irrelevant Alternatives}, and {\sf
  Non-Dictatorship}
(see Section \ref{Sec:Axioms} for
definitions.)
On the other hand, preference aggregation functions
  exist if one
  relaxes any
  of these
  conditions.
For instance, the well-known Borda count \cite{YoungBorda} is a
unanimous
  voting method
  with no dictators.

In this paper, we will examine the impact of preference aggregation
  functions
  on
  the structure of the self-determined communities that they
  define, as well as the limitations of
formulating community rules solely based on preference aggregation.
See below for more discussion.

\subsubsection*{Communities via direct axiomatic characterization}\label{sub:axioms}

In this second approach, we will use a more direct axiomatic characterization to
  study network communities.
To this end, we use a set-theoretical {\em community function} as a means
  to characterize a community rule.

\begin{definition}[Community Functions] \label{defi:community}
Let $\calA$ denote the set of all preference networks.
A {\em community function}
is a function $\calC$ that maps a preference network $A = (V,\Pi)$
  to a characteristic function of non-empty subsets of $V$.
In other words,  $\calC(A) \subseteq 2^{2^V{-\{\emptyset\}}}$
  is an indicator function of $2^V{-\{\emptyset\}}$.
We say a subset $S\subseteq V$ is a {\em community} in a
  preference network $A = (V,\Pi)$ according to
  a community function
  $\calC$  if and only if $S\in \calC(A)$.
To simplify our notation, for $A=(V,\Pi)$ we often write $\calC(V,\Pi)$ instead
  of $\calC((V,\Pi))$.
\end{definition}
We use axioms to state
properties, such as fairness and
  consistency, that a desirable community
  function should have when applied to all preference networks.
An example is the property that the community function
should be isomorphism-invariant:
Here an {\em isomorphism} between
two preference networks $A=(V,\Pi)$ and $A' = (V',\Pi')$ is
a bijection $\sigma: V\rightarrow V'$ such that
$\Pi'=\sigma(\Pi)$, i.e., such that
for all $s,v\in V$, $\pi'_{\sigma(s)}(\sigma(v)) =\pi_{s}(v)$,
and two  preference networks $A$ and $A'$ are
{\em isomorphic} to each other if there exists
such an isomorphism.
Isomorphism invariance then requires that
for any pair of isomorphic preference networks $A = (V,\Pi)$
and $A'=(V',\Pi')$ and any
isomorphism $\sigma$ between $A$ and $A'$,
if $S \subset V$ is a community in $A$, then $\sigma(S)$ should still
be a community in the $A'$.
Another example is the property of
 {\em monotonic characterization}:
  If $S$ is a community in $A = (V,\Pi)$,
  then $S$ should remain a
  community in every preference network
  $A' = (V,\Pi')$
such that
for all $u,s\in S$ and $v\in V$,
  if  $u \succ_{\pi_s}v$ then $u \succ_{\pi'_s}v$.

In Section \ref{Sec:Axioms}, we propose a natural set of eight
  desirable community axioms.
Six of them, including both examples above, provide a positive
  characterization of communities.
These axioms concern the consistency, fairness, and robustness of a
  community function, as well as the community structures when a
  preference network is embedded in a larger preference network.
The other two axioms address the necessary stability and self-approval
  conditions that a community should satisfy.

\subsubsection*{Constructing and Analyzing Community Rules}

While Definition \ref{defi:community}
is convenient for the study of the mathematical structure of our theory,
community identification is a computational problem as
  much as a mathematical problem.
Thus, it is desirable that communities
  can be characterized by a constructive community function $\calC$
  that is:
\begin{itemize}
\item {\bf Consistent}: $\calC$ satisfies all (or nearly all) axioms;
\item {\bf Constructive}: Given a preference network
  $A = (V,\Pi)$, and a subset $S \subseteq V$, one can determine in polynomial-time
  (in $n=|V|$) if
   $S \in \calC(A)$.
\item {\bf Samplable}: One can efficiently
obtain a random sample of $\calC(A)$.
\item {\bf Enumerable}: One can efficiently enumerate $\calC(A)$, for instance,
  in time $O(n^k \cdot |\calC(A)|)$ for a constant $k$.
\end{itemize}
Our two axiomatic approaches allow us to formulate a rich family of community
  rules
and analyze their properties.
Using the fixed-point rule,
  we can define a constructive community function based on any
  polynomial-time
  computable aggregation function.
Alternatively, we can use one axiom or
a set of
  axioms as a community rule.
We can also define a community rule by the intersection
  of a fixed-point rule and a set of axioms.
In this paper, we aim to {\em characterize the community rules that satisfy
 a set of ``reasonable'' axioms},
  and address the basic questions:
\begin{quote}
\begin{itemize}
\item
{\em Is there an aggregation function leading to a community
  rule satisfying this set of ``reasonable'' axioms?}
\item {\em What is the complexity of the community rules based on these axioms?}
\item {\em How are different community rules satisfying our axioms related to each other? For example, given two community rules $\calC_1$ and $\calC_2$ satisfying our axioms, does the rule $\calC$ defined by
    $\calC(A) :=\calC_1(A)\cap\calC_2(A)$ obey our axioms as well?}
\end{itemize}
\end{quote}

\subsubsection*{Structural and Complexity-Theoretic Results}

Our main structural result is a taxonomy theorem that provides a complete
  characterization of the most comprehensive community rule and the most
  selective community rule consistent with all our community axioms.
This result illustrates an interesting contrast to
  the classic axiomatization result of Arrow \cite{ArrowBook}
  and the more recent result of Kleinberg on
  clustering \cite{KleinbergClustering}
 that inspired our work.
Unlike voting or clustering where the basic axioms lead to
  impossiblity theorems,
  the preference network framework
  offers a natural community rule,
  which we call the {\em Clique Rule,}
  that is intuitively
  fair, consistent, and stable, although selective (See Section
\ref{sec:taxonomy} for more details):
$S$ is a community according the Clique Rule
  iff each member of $S$ prefers every member of $S$ over every non-member.
Indeed the Clique Rule satisfies all our axioms.
Our  analysis then leads us to a community rule which is consistent with all axioms -- we
 call it the {\em Comprehensive Rule} --
 such that for any community rule $\calC$
  satisfying all axioms and all preference network $A$,
 $\calC_{clique}(A) \subseteq \calC(A) \subseteq \calC_{comprehensive}(A)$.
Perhaps more interesting, under the natural operations of union and intersections,
 the set of all community rules satisfying all our axioms becomes a lattice with
 $\calC_{clique}(A)$ and $\calC_{comprehensive}(A)$ forming a lower and upper bound, respectively.

We complement this structural theorem with a complexity result: we show
that while identifying a community by the Clique Rule is
straightforward, it is {\sf coNP-complete} to determine if a subset
satisfies the comprehensive rule.

Our studies also shed light on the limitations of
  formulating community rules solely based on
  preference aggregation.
In particular, we show that many aggregation functions lead to
  communities which violate at least one of our community axioms.
We give two impossibility-like theorems.
\begin{enumerate}
\item  Any fixed-point rule based on commonly used aggregation
  schemes like Borda count or generalizations thereof -- such as the
 \B3CT self-determination rule -- is
  inconsistent with (at least) one of our axioms.
\item  For any aggregation function satisfying Arrow's independence
 of irrelevant alternative axiom, its fixed-point rule
 must violate one of our axioms.
\end{enumerate}
Finally, using our direct axiomatic framework, we analyze the following
  natural constructive community function inspired by preference
  aggregations.

\begin{definition}[Harmonious Communities]\label{def:harmonious}
A non-empty subset $S \subseteq V$ is a {\em harmonious community} of
  a preference network $A  = (V,\Pi)$ if for all $u\in S$ and $v \in V
  -S$,
  the majority of    $\{\pi_s\ : \ s\in S\}$  prefer $u$ over $v$.
\end{definition}

We will show that the harmonious community rule
  is consistent with seven axioms and  satisfies a
  weaker form of the eighth axiom.
In addition, various stable versions of harmonious communities (see
  the discussion below) enjoy some degree of samplablility
  and enumerability.

\subsubsection*{Stability of Communities and Algorithms}\label{sec:stability}

In real-world social interactions, some communities
  are more stable or durable than others when people's
  interests and preferences evolve over time.
For example, some music bands stay together longer than others.
Inspired by the work of \cite{B3CT} and Mishra {\em et al.}
\cite{mishra} on modeling this phenomenon,
  we examine the impact of stability
  on the community structure.

To motivate our discussion, we first recall the main definition and
result of \cite{B3CT}:
\begin{definition}
\label{def:a-b-community}
For $0 \leq \beta < \alpha \leq 1$, a non-empty  subset $S \subseteq V$ is an
{\em $(\alpha,\beta)-$\B3CT community} in $A = (V,\Pi)$
iff $\phi_S^\Pi(u) \geq \alpha\cdot |S|$ $\forall u\in S$ and
  $\phi_S^\Pi(v) < \beta\cdot |S|$ $\forall v\not\in S$.
\end{definition}
It was shown in \cite{B3CT} that, in any preference network, there are only
polynomially many stable \B3CT communities when the parameters
$\alpha, \beta$ are constants, and they can be enumerated in
polynomial time, showing
 that the strength of
community coherence has both structural and computational
implications.

In Section \ref{Sec:Stability}, we consider several stability
  conditions in our axiomatic community framework.
In one direction, we examine the structure of the communities
   (defined by a fixed-point community rule) that remain self-determined even after
  a certain degree of perturbation in its members' preferences.
In this context, for example, we can reinterpret the \B3CT-stability
defined above
as follows: A subset $S \subseteq V$ is an
  {\em $(\alpha,\beta)-$\B3CT community} in a preference network $A$
   if it remains self-determined when $|S|\cdot
  (\alpha-\beta)/2$ members of $S$ make arbitrary changes to their preferences.
In the other direction, we consider some notions of
 stability derived directly from the
 social-choice based community framework
 where members of a community
  separate themselves from the rest.
We can further use the separability as a
  measure of the community strength and stability
  to capture the intuition that
  stronger communities are also themselves more integrated.
As a concrete example, we show in Section \ref{Sec:Stability} that there are
 a quasi-polynomial number of stable harmonious communities for
 all these notions of  stability.
This result demonstrates that there exists a constructive community function that
  essentially satisfies all our axioms, whose stable
  communities are quasi-polynomial-time samplable and enumerable.

\section{Coherent Communities: Axioms}\label{Sec:Axioms}

In this section, we define our eight core axioms, give a more formal treatment of social choice axioms, and examine several properties of community rules and the relations these have with each other.

\subsection{Lexicographic Preference}
\label{sec:lex-pref}

The following notion will be crucial in several parts of this paper, and is implicitly used in our first two axioms below.

\begin{definition}[Lexicographical Preferences] 
Given a preference network $(V,\Pi)$ and two non-empty disjoint
subsets $G$ and $G'$ of equal size, we say that $s\in V$ {\em lexicographically prefers} $G'$ over $G$  if there exists a bijection $f_s: G \rightarrow G'$ such that $ f_s(u)\succ_{\pi_s}u$ for all $u\in G$.

We say that a group $T\subset V$ lexicographically prefers $G'$ over $G$ if every member $s\in T$ lexicographically prefers $G'$ to $G$, i.e., if there exists a set of bijections $\{f_s: G \rightarrow G'\mid s\in T\}$ such that $ f_s(u)\succ_{\pi_s}u$ for all $u\in G$ and all $s\in T$.
\end{definition}

Note that, in contrast to the standard lexicographical order, lexicographical preference is only a partial order.  The notion is motivated by the following proposition.

\begin{proposition}\label{prop:lex}
	Let $\pi\in\LV$, let $G$ and $G'$ be  disjoint subsets of $V$ with $|G| = |G'|$. Let  $G[i]$
(and $G'[i]$)
be the $i^{th}$ highest ranked element of $G$
(and $G'$) according to $\pi$.
Then
 there exists a bijection $f: G\longrightarrow G'$ such that for all $g\in G$, $f(g) \succ_{\pi} g$ if and only for all $i\in [1:|G|]$,  $G'[i] \succ_{\pi} G[i]$.
\end{proposition}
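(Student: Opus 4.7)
The plan is to handle the two directions of the biconditional separately, with the backward direction being immediate and the forward direction proved by a pigeonhole/counting argument.

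For the backward direction, I would define the canonical bijection $f^*: G \to G'$ by $f^*(G[i]) = G'[i]$. If $G'[i] \succ_\pi G[i]$ holds for every $i \in [1:|G|]$, then by construction $f^*(g) \succ_\pi g$ for every $g \in G$, so $f^*$ witnesses the existence of the required bijection. This is just unpacking the definitions.

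For the forward direction, I would argue by contradiction. Suppose a bijection $f: G \to G'$ with $f(g) \succ_\pi g$ for all $g \in G$ exists, but for some index $i$ we have $G'[i] \not\succ_\pi G[i]$. Since $G$ and $G'$ are disjoint and $\pi$ is a linear order on $V$, this forces $G[i] \succ_\pi G'[i]$ strictly. Now I would count, in two different ways, the elements of $G'$ that are ranked strictly above $G[i]$. On one hand, for each $j \le i$ we have $G[j] \succeq_\pi G[i]$ and $f(G[j]) \succ_\pi G[j]$, so $f(G[1]), \ldots, f(G[i])$ are $i$ distinct elements of $G'$ all strictly above $G[i]$ in $\pi$. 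On the other hand, the elements of $G'$ strictly above $G[i]$ form a prefix of the $\pi$-ordering of $G'$, and since $G'[i] \prec_\pi G[i]$ that prefix has length at most $i-1$. These two counts contradict each other.

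The proof is essentially a matching/Hall-type argument applied to a totally ordered situation, and no step is genuinely hard; the only subtlety to be careful about is the strict-versus-weak inequalities. In particular I want to note that the disjointness of $G$ and $G'$ (which guarantees $G[i] \neq G'[i]$) together with $\pi$ being a bijection into $[1:|V|]$ is what lets me convert $G'[i] \not\succ_\pi G[i]$ into the strict inequality $G[i] \succ_\pi G'[i]$ needed for the counting step.
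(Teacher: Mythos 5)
Your proof is correct. The backward direction is the same as the paper's (take the canonical order-matching bijection $f^*(G[i])=G'[i]$), but your forward direction takes a genuinely different route: the paper proves it by an exchange argument, repeatedly composing $f$ with a transposition of $G'$ so that $G[1]$ gets mapped to $G'[1]$ while preserving the defining property, then peeling off the top elements and inducting; you instead fix a violating index $i$ and derive a contradiction by double-counting the elements of $G'$ ranked strictly above $G[i]$ --- injectivity of $f$ gives at least $i$ such elements via $f(G[1]),\dots,f(G[i])$, while $G[i]\succ_\pi G'[i]$ caps the count at $i-1$. Your counting argument is self-contained, avoids the induction and the (slightly delicate) verification that the swapped bijection still satisfies the hypothesis, and your explicit remark about where disjointness is used (to upgrade $G'[i]\not\succ_\pi G[i]$ to the strict $G[i]\succ_\pi G'[i]$) is exactly the right point of care; the paper's exchange argument, on the other hand, has the mild advantage of constructively transforming an arbitrary witness $f$ into the canonical one. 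Either proof is acceptable.
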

\begin{proof}
Suppose $f$ satisfies the condition of the proposition.  Then $G'[1]\succeq_{\pi} f(G[1]) \succ_{\pi} G[1]$.
If
$G'[1]\neq f(G[1])$, define $h$ to be the bijection on $G'$ which exchanges $G'[1]$ and
$ f(G[1])$, and define $\tilde f=g\circ f$.  Then $G'[1]=\tilde f(G[1]) \succ_{\pi} G[1]$
while $\tilde f$ still satisfies the condition of the proposition.
Removing $G[1]$ from $G$ and $G'[1]$ from $G'$, we continue by induction to prove the only if statement.  The if statement is obvious - just define $f$ by $f(G[i])=G'[i]$.
\end{proof}

\subsection{Axioms for Community Functions}\label{sub:CommunityAxioms}

For the following definitions, fix a ground set $V$ and a community function $\calC$.

\begin{axiom}[{\sf Group Stability (GS)}]
If $\Pi$ is a preference profile over $V$ and $S\in\calC(V,\Pi)$, then
$S$ is {\sc group stable} with respect to $\Pi$.  Here a subset
$S \subset V$ is called {\em group stable with respect to} $\Pi$ if
for all  non-empty  $G \subsetneq S$,  all $G' \subset V - S$ of the same
size as $G$, and all tuples of bijections,
  $(f_i: G \rightarrow G', i\in S-G)$,
  there exists $s \in S - G$, $u\in G$ such that $u \succ_{\pi_s} f_s(u)$.
\end{axiom}

This axiom provides a type of game-theoretic stability
\cite{NashNonCooperative,NAS50,core1,core2,core3},
  and states that no subgroup in a community can be replaced by an equal-size
group of non-members that are lexicographically  preferred by the
remainder of the community members.
For instance, if the subgroup is
of size 1, this means that there is no outsider that is universally
preferred to this member, excluding that member's own opinion.  On the
other end of the spectrum, if the subgroup is all but one person, then
group stability states that there must be someone from that member's
top choices, and thus represents a type of individual rationality
condition. Note that 
the set $V$ is vacuously
group stable for all $\Pi$.

\begin{axiom}[{\sf Self-Approval (SA)}]
If $\Pi$ is a preference profile over $V$, and $S\in\calC(V,\Pi)$ then $S$
is {\sc self-approving} with respect to $\Pi$.
Here a subset $S \subset V$ is called {\em self-approving with respect to}
$\Pi$ if for all 
 $G' \subseteq V - S$ of the same size as $S$,
and all tuples of bijections $(f_i: S \rightarrow G', i\in S)$
 there exists $s, u \in S$, such that $u \succ_{\pi_s} f_s(u)$.
\end{axiom}

Axiom {\sf SA} uses the same partial ordering of groups as the first, and  requires that there is no outside group of the same size as $S$ which is lexicographically preferred to $S$ by  everyone in $S$.
It generalizes the intuition that a singleton should be a
community only if that member prefers herself to everyone else.  Note that 
any set $S$ of size larger than $|V|/2$ 
is vacuously self-approving for all $\Pi$.

\begin{axiom}[{\sf Anonymity (A)}]
Let $S, S'\subset V$ and $\Pi,\Pi'$ be such
$S'=\sigma(S)$ and $\Pi'=\sigma(\Pi)$ for some
permutation $\sigma: V \rightarrow V$.
Then	
$S \in \calC(V,\Pi)
        \Longleftrightarrow S' \in \calC(V,\Pi')$.
\end{axiom}

A staple axiom, {\sf Anonymity}, states that labels should have
  no effect on a community function.

\begin{axiom}[{\sf Monotonicity (Mon)}]
	Let $S \subset V$.  If $\Pi$ and $\Pi'$ are such that for all $s\in S$
$$
u \succ_{\pi_s'}v\Longrightarrow u\succ_{\pi_{s}} v \quad\text{ for all }u\in S,v\in V
$$
then $S \in \mathcal{C}(V,\Pi')\Longrightarrow S \in \mathcal{C}(V,\Pi).$
\end{axiom}

The Axiom {\sf Monotonicity} states that, if a member of a community gets promoted
without negatively impacting other members, then that subset must
remain a community.  Thus this axiom reflects the fact that high
positions imply greater affinities towards those people.
Note that
  {\sf Mon} also allows non-members to change arbitrarily, as long as
 their positions relative to any members remains the same or worse.

\begin{axiom}[{\sf Coherence Robustness of Non-Members (CRNM)}]
Let $S \subset V$.  If $\Pi$ and $\Pi'$ are such that for all $s , t\in S$
$$
v \succ_{\pi_s'} w \Longleftrightarrow v \succ_{\pi_t'} w
\quad\text{for all }v,w\notin S
$$
and
$$
\pi_s'(u) = \pi_s(u) \quad\text{  for all }u\in S,
$$
then $S\in \mathcal{C}(V,\Pi')\Longrightarrow S\in \mathcal{C}(V,\Pi)$.
\end{axiom}

\begin{axiom}[{\sf Coherence Robustness of Members (CRM)}]
Let $S \subset V$.  If $\Pi$ and $\Pi'$ are such that for all $s , t\in S$
we have
$$
u \succ_{\pi_s'} w\Longleftrightarrow u \succ_{\pi_{t}'} w
\quad\text{for all }u, w \in S
$$
and
$$
\pi'_{s}(v) = \pi_{s}(v)\quad\text{for all }v \notin S,
$$
then $S\in \mathcal{C}(V,\Pi')\Longrightarrow S\in \mathcal{C}(V,\Pi)$.
\end{axiom}
The two {\sf Coherence Robustness} Axioms reflect the fact that, if
community members agree about their preferences concerning either members
or non-members, they are less likely to be a community.
In the
  case of non-members, agreement implies that some non-member is more
  preferred and therefore more likely to break up the community.
  Contrariwise, in the case of members, agreement implies some member
  is less preferred and more likely to be ousted.
\begin{axiom}[{\sf World Community (WC)}]
For all preference profiles $\Pi$,
	$V \in \calC(V,\Pi)$.
\end{axiom}

To state the next axiom, we define the projection $\left.A\right|_{V'}$ of a preference
network $A=(V,\Pi)$ onto a subset $V'\subset V$ as the preference network
$\left.A\right|_{V'}=(V',\left.\Pi\right|_{V'})$ where
$\left.\Pi\right|_{V'}=\{\pi'_s\}_{s\in V'}$ is defined by setting
$\pi_s'$ to be the linear order on $L(V')$ which keeps
the relative ordering of all members of $V'$, i.e., for all $s,u,v\in V'$,
$u\succ_{\pi'_s} v\Longleftrightarrow u\succ_{\pi_s} v$.
We say that $A'$ is {\sc embedded} into $A$ if $A'=\left. A\right|_{V'}$ for some
$V'\subset V$.

\begin{axiom}[{\sf Embedding (Emb)}]
If $A'=(V',\Pi')$ is embedded into $A=(V,\Pi)$ and
$\pi_i(j) = \pi'_i(j)$ for all $i,j \in V'$ then
$${\cal C}(A') = {\cal C}(A)\ \cap\ 2^{V'}.$$
\end{axiom}

In other words, if  a  network $(V',\Pi')$ is embedded into a larger network $(V,\Pi)$
in such a way that,
with respect to the preferences in the larger network, the members of the smaller network
prefer
each other over everyone else, then the set of communities in the
larger network which are subsets of $V'$ is identical to the  set of communities
in the smaller network.

Note that, in contrast to the first seven axioms, which refer to a fixed finite ground set $V$, the last axiom
links different grounds sets to each other.
Strictly speaking, a community rule $\calC$ is therefore not just one function
$\calC: (V,\Pi) \longmapsto 2^{2^{V}{-\{\emptyset\}}}$, but a collection of such functions, one for each finite set $V$
contained in some countable reference set, say the natural numbers
\footnote{While we use the embedding axiom to makes statements about subsets of a given ground set $V$,
see, e.g., Propositions~\ref{prop:Clique-Null} and \ref{prop:MonEb->OD}
 below, we never use that we can embed a given preference network into an even larger one.
Therefore, all results of this paper, except for those involving complexity statements, hold if one
restricts oneself to a finite set $V_0$, and only considers preference networks defined on subsets
$V\subset V_0$.}
$\mathbb N$.
In a similar way, preference aggregation is not defined by a single function $F:L^*(V)\to \overline{ L(V)}$ but by a
set of such functions, one for each finite $V$ contained in the reference set.  However, when we define preference aggregation,
we usually define it for a fixed $V$, leaving the dependence on $V$ implicit.

Note also that together, Axioms {\sf Anonymity} and {\sf Embedding} imply the isomorphism invariance discussed in
the introduction.

\subsection{Properties of Social Choice Axioms}\label{sub:SCaxioms}

Before we begin to study the properties induced by social choice
axioms, we look at the properties that fixed point rules have without
any further assumptions.  To this end, we will define two properties
of a community rule $\calC$.
\begin{property}[{\sf Independence of Outside Opinions (IOO)}]
A community funcion $\calC$
  satisfies {\sf Independence of Outside Opinions}
  if, for all subsets $S\subseteq V$ and all pairs of preference profiles $\Pi, \Pi'$
  on $V$
  such that $\pi_s' = \pi_s$ for all $s\in S$, we have that
$$S \in \calC(V, \Pi') \Longleftrightarrow S\in \calC(V, \Pi).$$
\end{property}

Property {\sf IOO} simply states that the preferences of outsiders
cannot influence whether or not a subset is a community.
It turns out that 
this property (and one of our Axioms) is 
  always satisfied by any fixed-point community rule.

\begin{proposition}
\label{prop:fix-point-props}
All fixed-point rules satisfy {\sf Independence of Outside Opinions}
and {\sf World Community}.
\end{proposition}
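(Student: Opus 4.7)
The plan is to observe that both properties fall out almost immediately by unpacking Definition \ref{SocialChoiceVoting}, so the proof is essentially a matter of carefully tracking which data the fixed-point condition depends on.

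For {\sf Independence of Outside Opinions}, I would fix a fixed-point rule $\calC_F$, a subset $\emptyset\neq S\subseteq V$, and two preference profiles $\Pi,\Pi'$ on $V$ that agree on $S$, i.e., $\pi_s'=\pi_s$ for all $s\in S$. The key point is that the aggregated preference $F(\Pi_S)$ depends on $\Pi$ only through the restriction $\Pi_S=\{\pi_s : s\in S\}$. Since the hypothesis gives $\Pi_S=\Pi_S'$, we have $F(\Pi_S)=F(\Pi_S')$, and therefore the defining condition ``$u\succ_{F(\Pi_S)} v$ for all $u\in S$ and $v\in V-S$'' is literally identical under $\Pi$ and $\Pi'$. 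Hence $S\in\calC_F(V,\Pi)\iff S\in\calC_F(V,\Pi')$.

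For {\sf World Community}, I would take $S=V$ and any preference profile $\Pi$. The defining condition for $V$ to be an $F$-self-determined community is ``$u\succ_{F(\Pi_V)} v$ for all $u\in V$ and all $v\in V-V$,'' and since $V-V=\emptyset$ the universally quantified statement is vacuously true. Note that $V$ is non-empty by the definition of a preference network, so $V$ is an admissible candidate community. Hence $V\in\calC_F(V,\Pi)$ for every $\Pi$.

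There is no real obstacle here; the only mild care needed is to notice that the fixed-point definition refers to $F(\Pi_S)$ (the aggregation of the profile restricted to $S$) rather than $F(\Pi)$, which is exactly what makes {\sf IOO} automatic, and to observe that the universal quantifier over $V-S$ is vacuous precisely when $S=V$, giving {\sf WC}.
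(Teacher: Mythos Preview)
Your proposal is correct and follows essentially the same approach as the paper's own proof: the paper also observes that {\sf IOO} holds because the preferences of outsiders are ignored in the fixed-point condition, and that {\sf WC} holds vacuously since the condition quantifies over $v\in V-V=\emptyset$. Your write-up is simply a more detailed unpacking of the same two observations.
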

\begin{proof}
	Clearly, any fixed point rule satisfies {\sf IOO} since the preferences of outsiders are entirely ignored when deciding if a subset constitutes a fixed point.
The axiom {\sf WC} is satisfied vacuously, because 
 it involves looking at all 
$v \in V - V$.
\end{proof}

Turning now to social choice axioms, we must first formally define the axioms informally described in Section~\ref{sub:fixedpoints}.
To this end,  we need the notion of an election, which will be defined as a triple $(V,F,S)$
where $V$ and $S$ are finite sets (called the set of candidates and voters, respectively), and $F:\LV^*\to \bLV$ is a preference aggregation function.

\begin{SCaxiom}[{\sf Unanimity (U)}]
\label{axiom:SCpareto}
 An election $(V,F,S)$ satisfies
\emph{Unanimity} if, for all
preference profiles, $\Pi_S=\{\pi_s\colon s\in S\}\in \LV^S$ and all pairs of candidates, $\{i,j\} \subseteq V$,
$$
 \pi_s(i) > \pi_s(j), \forall s\in S \Longrightarrow F(\Pi_S)(i) > F(\Pi_S)(j).
$$
\end{SCaxiom}

The question then is: what properties capture the intuition behind {\sf Unanimity}
and how do they relate to this social choice axiom?
To  answer this, we define the following two properties of a community function $\calC$.

\begin{property}[{\sf Pareto Efficiency (PE)}]
\label{axiom:PE}
	A community function, $\calC$, is {\sf Pareto Efficient}
        if, for a given preference network $A$
        and a given community $S \in \calC(A)$,
  it is the case that for all $u \in S$, $v \notin S$,
   there is a $s \in S$ such that $u \succ_{\pi_s} v$.
\end{property}

\begin{property}[{\sf Clique (Cq)}]
\label{axiom:Cq}
	A community function $\calC$
   satisfies the {\sf Clique} Property  if
for all  $A = (V,\Pi)$,
\[ u \succ_{\pi_s} v, \forall u,s \in S, \forall v \notin S \Longrightarrow S \in \mathcal{C}(A).\]
\end{property}

Property {\sf Pareto Efficiency} is a negative property that states that subsets
in which a non-member is  preferred to a member by everyone inside the subset,
 should not be a community.
 In contrast, {\sf
Clique} is a positive Property, in that it states that a completely self-loving group (i.e., a clique)
must be a community.

It turns out that both of these properties are implied by {\sf Unanimity}.

\begin{proposition}
\label{prop:Pareto}
	Fix $V$ and a preference aggregation function $F$, and let $\calC_F$ be the fixed point rule with respect to $F$.
If all elections $(V,F,S)$ with $S\subsetneq V$ satisfies {\sf Unanimity},
  then $\calC_F$ satisfies
  the properties {\sf Pareto Efficiency} and {\sf Clique}.
\end{proposition}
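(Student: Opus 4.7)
The plan is to derive both properties by straightforward applications of Unanimity, using that each fixed-point community satisfies the defining inequality $u\succ_{F(\Pi_S)} v$ for $u\in S$, $v\in V-S$.

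For \textsf{Pareto Efficiency}, I would argue by contradiction. Suppose $S\in\calC_F(A)$ but there exist $u\in S$ and $v\in V-S$ such that $u$ is \emph{not} preferred to $v$ by any member of $S$; since each $\pi_s$ is a linear order, this means $v\succ_{\pi_s} u$ for every $s\in S$, equivalently $\pi_s(v)<\pi_s(u)$, i.e.\ $\pi_s(u)>\pi_s(v)$ for all $s\in S$. Because $v\notin S$ we have $S\subsetneq V$, so the hypothesis gives Unanimity for the election $(V,F,S)$, which forces $F(\Pi_S)(u)>F(\Pi_S)(v)$, i.e.\ $v\succ_{F(\Pi_S)} u$. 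This contradicts the definition of $S$ being a fixed point of $F$, which requires $u\succ_{F(\Pi_S)} v$ for every $u\in S$, $v\in V-S$.

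For the \textsf{Clique} property, I would argue directly. Assume $S\subseteq V$ is such that $u\succ_{\pi_s} v$ for all $u,s\in S$ and all $v\in V-S$; we must show $S\in\calC_F(A)$, i.e.\ $u\succ_{F(\Pi_S)} v$ for every $u\in S$, $v\in V-S$. If $S=V$ this holds vacuously, so assume $S\subsetneq V$. Fix any $u\in S$ and $v\in V-S$. The clique hypothesis gives $\pi_s(u)<\pi_s(v)$, equivalently $\pi_s(v)>\pi_s(u)$, for every $s\in S$. Applying Unanimity for the election $(V,F,S)$ yields $F(\Pi_S)(v)>F(\Pi_S)(u)$, which is exactly $u\succ_{F(\Pi_S)} v$. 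Since $u,v$ were arbitrary, $S$ meets the fixed-point condition and hence belongs to $\calC_F(A)$.

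Essentially no obstacle arises: both arguments are one-line invocations of Unanimity once the inequality direction is tracked correctly (recall $\pi_s(i)>\pi_s(j)$ encodes $j\succ_{\pi_s} i$ because lower rank number means higher preference). The only minor care points are (i) ensuring $S\subsetneq V$ so that Unanimity is available in the hypothesis, which is automatic in both properties since each references some $v\notin S$, and (ii) handling $S=V$ separately in the \textsf{Clique} direction, where the conclusion holds vacuously from the fixed-point definition.
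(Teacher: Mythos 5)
Your proposal is correct and follows essentially the same route as the paper: both parts are direct applications of {\sf Unanimity} to the relevant pairs $(u,v)$ with $u\in S$, $v\notin S$, and your inequality directions are tracked correctly. The only cosmetic difference is that for {\sf Pareto Efficiency} you contradict the fixed-point condition $u\succ_{F(\Pi_S)}v$ directly, whereas the paper phrases the same contradiction via a pigeonhole argument about $S$ occupying the top $|S|$ positions.
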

\begin{proof}
	Fix a preference network $A = (V,\Pi)$.

	First, let us show that $\calC_F$ satisfies {\sf Pareto
          Efficiency}.  Assume otherwise.  In this case there must be
        a community $S\subsetneq V$ such that for some $s \in S$ and
        $j \notin S$, everyone in $S$ prefers $j$ to $s$.  However,
        this implies that $j$ must be ranked higher than $s$ in
        $F(\Pi_S)$ by {\sf Unanimity}.  By the pigeon hole principle
        this implies that the elements of $S$ cannot occupy the first
        $|S|$ positions of this preference aggregation, and therefore
        $S$ is not a community.
	
	Now to show that $\calC_F$ satisfies the {\sf Clique}
        Property,  assume $S\subsetneq V$ is a clique
        ($\forall i,j \in S$ and $k \notin S$, $j \succ_{\pi_i} k$).
        Then all elements of $S$ are preferred by all members of
        $S$ to all members of $V-S$ and therefore must appear in the
        first $|S|$ slots of $F(\Pi_S)$ by {\sf Unanimity}.  This then implies that $S$ is a community as required.
\end{proof}

\begin{SCaxiom}[{\sf Non-Dictatorship (ND)}]\label{axiom:SCND}
 An election $(V,S,F)$ is {\sf Non-Dictatorial} if there exists no
 dictator, i.e., no voter $i\in S$
such that $F(\Pi_S)=\pi_i$ for all preference profiles $\Pi_S\in \LV^S$.
\end{SCaxiom}
Instead of showing properties
    implied by {\sf ND} as we did with {\sf Unanimity}, we do the
    inverse, and show that a dictatorship violates some of our
    axioms.

\begin{proposition}
\label{prop:ND}
Fix $V$ and a preference aggregation function $F$.
 If $\calC_F$, the fixed point rule with respect to $F$, satisfies {\sf Group Stability} or {\sf Anonymity}, then all elections $(V,F,S)$ with $S\subset V$ and $1<|S|<|V|$ satisfy {\sf Non-Dictatorship}.
\end{proposition}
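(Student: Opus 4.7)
The plan is to prove the contrapositive: suppose some election $(V,F,S_0)$ with $1<|S_0|<|V|$ admits a dictator $d\in S_0$, so $F(\Pi_{S_0})=\pi_d$ for every profile $\Pi$. I will build a single preference profile $\Pi$ on $V$ whose existence simultaneously contradicts {\sf GS} and {\sf Anonymity} for $\calC_F$. This works because, under a dictator, $S_0\in\calC_F(V,\Pi)$ reduces to the concrete condition ``$\pi_d$ ranks $S_0$ in the top $|S_0|$ slots,'' which I can control and exploit.

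Pick any $s'\in S_0\setminus\{d\}$ (available since $|S_0|>1$) and any $v'\in V\setminus S_0$ (available since $|S_0|<|V|$). Define $\Pi$ subject to three mutually compatible constraints on different voters: (i) $\pi_d$ puts all of $S_0$ in positions $1,\dots,|S_0|$; (ii) $\pi_{s'}$ places $v'$ in position $1$; (iii) for every $s\in S_0\setminus\{d,s'\}$, $v'\succ_{\pi_s}d$. These act on disjoint voters, so $\Pi$ exists. By (i) and the dictator assumption, $S_0\in\calC_F(V,\Pi)$.

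Violation of {\sf GS}: apply the axiom with $G=\{d\}$, $G'=\{v'\}$, and the forced bijections $f_s\colon d\mapsto v'$ for each $s\in S_0-G$. Constraints (ii) and (iii) give $f_s(d)=v'\succ_{\pi_s}d$ for every $s\in S_0\setminus\{d\}$, so the existential witness demanded by the definition of group stability fails. Hence $S_0\in\calC_F(V,\Pi)$ is not group stable, contradicting {\sf GS}.

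Violation of {\sf Anonymity}: let $\sigma$ be the transposition swapping $d$ and $s'$, so $\sigma(S_0)=S_0$. Unwinding $\pi'_{\sigma(i)}(\sigma(v))=\pi_i(v)$ at $i=s'$ (and using $\sigma^{-1}=\sigma$) gives $\pi'_d(x)=\pi_{s'}(\sigma(x))$ for every $x\in V$. Since $\sigma$ permutes $S_0$ and $V\setminus S_0$ each to itself, $\pi'_d$ ranks $S_0$ at the top iff $\pi_{s'}$ does. But (ii) places $v'\notin S_0$ in position $1$ of $\pi_{s'}$, so $\pi_{s'}$, and therefore $\pi'_d=F(\Pi'_{S_0})$, does not rank $S_0$ at the top. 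Thus $S_0\notin\calC_F(V,\Pi')$ while $S_0\in\calC_F(V,\Pi)$, and since $\sigma(S_0)=S_0$ this contradicts {\sf Anonymity}.

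The only delicate step is the anonymity calculation: the dictator $d$ is a specific \emph{label} in the reference set $\mathbb{N}$ rather than a functional ``role,'' so after relabeling via $\sigma$ the voter still named $d$ inherits preferences from the old voter $s'$. Once one sees the identity $\pi'_d=\pi_{s'}\circ\sigma$, picking $\pi_{s'}$ so that $S_0$ is not on top completes the argument; everything else is routine.
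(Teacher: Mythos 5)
Your proof is correct and follows essentially the same route as the paper's: construct a profile where the dictator ranks $S$ on top (so $S$ is a community) while every other member ranks some outsider $v'$ above the dictator, then refute {\sf GS} via the singleton group $G=\{d\}$, $G'=\{v'\}$, and refute {\sf Anonymity} by transposing the dictator with another member so the relabeled dictator no longer ranks $S$ on top. Your version merely makes explicit the relabeling identity $\pi'_d=\pi_{s'}\circ\sigma$ that the paper leaves implicit.
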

\begin{proof}
Assume $(V,F,S)$ is dictatorial, with dictator $s \in S$.
Let $\pi_s$
be such that all members of $S$ are ranked above those outside of $S$.
Because $s$ is a dictator, we have that $S$ is a community ($S \in
\calC_F$).  Additionally let every other member of $S$ rank some
non-member $v \notin S$ above $s$.
	
	However, if $\calC_F$ satisfies {\sf Group Stability}, $S$ cannot be a community.  Furthermore, if $\calC_F$ satisfies {\sf Anonymity}, if the preferences of any two members of $S$ are swapped, $S$ should remain a community.  However, if  $s$ swaps with any other member of $S$, $v$ will be ranked above $s$ in the aggregate preference and thus $S$ cannot be a community.
\end{proof}

The last of the three social choice axioms, {\sf Independence of Irrelevant Alternatives}, simply states that the aggregate relation between any two pairs of candidates should not depend on the preferences for any other candidate.

\begin{SCaxiom}{\sc ({\sf Independence of Irrelevant Alternatives})}
\label{axiom:SCIOO}
	An  election $(V,F,S)$ satisfies {\sf Independence of Irrelevant Alternatives (IIA)} if for
all preference profiles, $\Pi_S, \Pi'_S \in \LV^{S}$ { and all candidates $a,b\in V$ we have that}
\[\left(\forall s\in {S}, a \succ_{\pi_s} b
\Leftrightarrow a \succ_{\pi'_s} b\right)
\Longrightarrow \left(a \succ_{F(\Pi_{S})} b
\Leftrightarrow a \succ_{F(\Pi'_{S})} b\right).\]
\end{SCaxiom}

This axiom is can reasonably be considered the strongest of the three,
in that it says that the aggregate preference between two candidates does not even depend on the preferences voters have between either of the two and some other candidate.  We will demonstrate this strength by proving an impossibility result involving modest assumptions about the fixed point rule of an aggregation function that satisfies {\sf IIA}.

\begin{theorem}
\label{thm:impossibility}
{ Let $F$ be an aggregation function such that the fixed point rule with respect to
$F$ satisfies the {\sf Clique} Property and the {\sf Group Stability}
Axiom.  Then no election $(V,F,S)$ with $S\subseteq V$ and $1<|S|<|V|$
satisfies
{\sf IIA}.}
\end{theorem}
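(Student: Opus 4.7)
My approach is to argue by contradiction: suppose that, for some $S \subseteq V$ with $1 < |S| < |V|$, the election $(V, F, S)$ satisfies IIA, while $\calC_F$ satisfies both the Clique Property and Group Stability (GS). From these three assumptions I plan to construct a preference profile in which $S$ is forced to be a community of $\calC_F$ but in which an explicit lexicographic swap witnesses a failure of group stability.

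The first step is a Pareto-type lemma for pairs that straddle $S$ and $V-S$: whenever every voter in $S$ ranks $u \in S$ above $v \in V - S$, then $u \succ_{F(\Pi_S)} v$. This is obtained by comparing $\Pi_S$ to any clique profile $\Pi^{\text{cl}}_S$ in which each $s \in S$ lists all of $S$ ahead of all of $V - S$: the Clique Property forces $S \in \calC_F(V, \Pi^{\text{cl}})$, hence $u \succ_{F(\Pi^{\text{cl}}_S)} v$, and since both $\Pi_S$ and $\Pi^{\text{cl}}_S$ exhibit the same unanimous pattern on the pair $(u,v)$, IIA transfers the comparison to $\Pi_S$. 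This lemma serves as a restricted substitute for Unanimity, covering exactly the $S$-vs-$(V-S)$ pairs.

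For the main construction, I fix $g \in S$ and $t \in V - S$ (both exist since $1 < |S| < |V|$) and choose a clique profile $\Pi^{\text{cl}}$ in which $g$ sits at position $|S|$ and $t$ at position $|S|+1$ in every voter's ranking, with the other members of $S$ above and the members of $V - S - \{t\}$ below. I then form $\Pi^*$ by performing, in each voter $s \in S - \{g\}$, the single adjacent transposition swapping $g$ and $t$; voter $g$'s ranking is left unchanged. Each such transposition alters exactly the $(g, t)$ relative order and no other pair, so by IIA, $F(\Pi^*_S)$ agrees with $F(\Pi^{\text{cl}}_S)$ on every pair other than $(g, t)$. Combined with the Pareto lemma, this yields $u \succ_{F(\Pi^*_S)} v$ for every $u \in S$ and $v \in V - S$ with $(u, v) \neq (g, t)$. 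Moreover, the construction guarantees that the bijections $f_s : g \mapsto t$, $s \in S - \{g\}$, satisfy $f_s(g) \succ_{\pi^*_s} g$, so $(\{g\}, \{t\}, (f_s)_{s \in S - \{g\}})$ is a GS-violating witness whenever $S \in \calC_F(V, \Pi^*)$.

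The whole contradiction therefore reduces to establishing $g \succ_{F(\Pi^*_S)} t$, and this is the step I expect to be the main obstacle, because IIA together with the Clique-plus-Pareto consequences above does not by itself pin the value of $F$ on the pattern in which $g$ is the sole $+$-voter. My plan here is to invoke Clique a second time on the auxiliary subset $S'' = (S - \{g\}) \cup \{t\}$: with an appropriate choice of $\pi^*_t$, every member of $S''$ ranks $S''$ at the top in $\Pi^*$, so Clique forces $S'' \in \calC_F(V, \Pi^*)$. By transporting this community status back to the $S$-electorate via IIA on carefully chosen auxiliary profiles and another application of the Pareto lemma, I expect either to force $g \succ_{F(\Pi^*_S)} t$ directly, closing the argument, or to exhibit two profiles that share an identical pattern on some pair while yielding incompatible aggregate values, contradicting IIA outright.
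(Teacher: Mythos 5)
Your opening moves are sound and match the paper's: the Pareto-type lemma for pairs $(u,v)\in S\times(V-S)$ via the {\sf Clique} Property plus {\sf IIA} is exactly the paper's ``partial unanimity'' step, and the observation that {\sf IIA} transfers all non-$(g,t)$ comparisons from the clique profile to $\Pi^*$ is correct. The problem is the step you yourself flag as the main obstacle: forcing $g \succ_{F(\Pi^*_S)} t$. This cannot be done by any argument confined to the profile $\Pi^*$, because nothing in the hypotheses pins down the aggregate on a pattern where one voter prefers $g$ and all others prefer $t$; indeed pairwise majority (and Borda) would rank $t$ above $g$ here, in which case $S$ simply fails to be a community under $\Pi^*$ and no {\sf GS} violation materializes. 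Your proposed rescue via the auxiliary set $S''=(S-\{g\})\cup\{t\}$ does not close this gap: the community status of $S''$ is governed by the election $(V,F,S'')$, i.e.\ by $F(\Pi^*_{S''})$ with \emph{voter set} $S''$, whereas {\sf IIA} is assumed only for the election $(V,F,S)$ and relates different profiles of the \emph{same} electorate $S$. Since $F$ may behave arbitrarily differently on different voter sets (the paper's framework explicitly allows this), there is no axiom available to ``transport'' information from $F(\cdot_{S''})$ back to $F(\cdot_S)$, so neither of your two hoped-for outcomes can be extracted.

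The structural lesson is that {\sf GS} has to be used in the opposite direction from the one you chose: rather than trying to prove that some profile \emph{is} a community and then exhibiting a {\sf GS} violation, the paper constructs profiles in which $S$ demonstrably violates {\sf GS} (e.g.\ every member of $S$, including the swapped member $s'$ itself, prefers the outsider $v'$), concludes $S\notin\calC_F$, and then uses {\sf IIA} together with the already-established partial unanimity to isolate which single pairwise comparison must have flipped. Iterating this over three profiles yields full {\sf Unanimity} for the election $(V,F,S)$, whereupon Arrow's Impossibility Theorem forces a dictatorship, contradicting Proposition~\ref{prop:ND} (which shows {\sf GS} rules out dictators). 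Some such global, multi-profile argument is unavoidable; a single-profile construction of the kind you propose cannot succeed.
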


\begin{proof}
Let $S\subseteq V$ such
that $1 < |S| < |V|$.
Assume that the election $(V,F,S)$
satisfies {\sf IIA}, and the resulting fixed point
rule $\calC_F$ satisfies {\sf Cq} and {\sf GS}.
We will first show that the election $(V,F,S)$ must satisfy
{\sf Unanimity}.

 In the
following preference profiles, $\Pi$, $\Pi'$, $\Pi''\in \LV^S$, we
assume that every member of $S$ has the same preference,
$\pi$, $\pi'$, and $\pi''$ respectively.  First, let $\pi$
rank all members of $S$ above non-members.  By the
{\sf Clique} Property, $S \in \calC_F(A)$ and thus
\begin{equation}
\label{partial-unam1}
\forall s \in S, v \notin S, s \succ_{F(\Pi)} v.
\end{equation}
Thus, by {\sf IIA}, if  $s \in S$ is
unanimously preferred to $v \notin S$,  $s$ must be
strictly preferred to $v$ in the aggregate preference.
	
	Now let $\pi'$ be the same as $\pi$ only with the least preferred member of $S$, $s'$, and the most preferred non-member, $v'$, switched in rank. By the partial {\sf Unanimity} property \eqref{partial-unam1}, in the aggregate $F(\Pi')$, all members of $S-\{s'\}$ are preferred to all $v\notin S$, and all members of $S$ are preferred to all $v\in V-S-\{s'\}$.  On the other hand, by
{\sf GS}, $S\notin \calC_F(\Pi')$, which is only possible is if $v' \succeq_{F(\Pi')} s'$.  Applying the partial {\sf Unanimity} property once more yields the following two statements:
$$
\forall s \in S-\{s'\}, s \succ_{F(\Pi')} s' \quad \text{ and }\quad
	\forall v \notin S \cup \{v'\}, v' \succ_{F({\Pi'})} v,
$$
{ and by {\sf IIA}, this in turn implies}
\begin{equation}
{\label{partial-unam2}
\forall s \in S-\{s'\}, s \succ_{F(\Pi)} s' \quad \text{ and }\quad
	\forall v \notin S \cup \{v'\}, v' \succ_{F(\Pi)} v.}
\end{equation}
By  {\sf IIA}, this means that for any two members or two non-members if one
is unanimously preferred to the other, then it must be strictly preferred in aggregate preference.
Indeed, consider, e.g., $s,s'\in S$ and a profile $\tilde\Pi_S$ such that $s\succ_{\tilde\pi_i} s'$
for all $i\in S$.  Choose $\Pi$ in such a way that every member has the same profile, $s'$ has rank $|S|$ and $s\succ_{\pi_i} s'$ for all $i\in S$.  By  {\sf IIA}, $s\succ_{F(\tilde \Pi)} s'\Longleftrightarrow s\succ_{F( \Pi)} s'$, so by \eqref{partial-unam2}, $s$ is preferred to $s'$ in aggregate.

	Finally, consider $\pi''$ where $v'$ is switched with the
        second lowest ranked member, $s''$.  By the above additional
        partial {\sf Unanimity} property, ${s'}$ must be strictly
        preferred to ${s''}$ in the aggregate preference {$F(\Pi'')$},
        and therefore $v'$ must be strictly rather than weakly
        preferred to $s''$ in the aggregate preference.  Thus, again
        by {\sf IIA}, if a non-member, $v \notin S$, is unanimously
        preferred to a member $s \in S$, $v$ must be strictly
        preferred to $s$ in the aggregate preference.  Taken together,
        these three partial {\sf Unanimity} properties, constitute
        {\sf Unanimity}.
	
	Since  the election $(V,F,S)$ satisfies both
{\sf IIA} and {\sf Unanimity}, by Arrow's Impossibility Theorem \cite{ArrowBook} it must be a dictatorship, contradicting  Proposition~\ref{prop:ND}.
\end{proof}

\subsection{Additional Properties of Axioms}

Here we state some additional properties of interest that community
rules (not necessarily fixed point rules) have when they satisfy one
or more of our main axioms.

\begin{proposition}
\label{prop:Clique-Null}
	Let $\calC$ be a community rule that satisfies the {\sf World Community} and {\sf Embedding} Axioms. Then $\calC$ must also satisfy the {\sf Cliques} 
Property.
\end{proposition}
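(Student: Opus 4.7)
The plan is to reduce the clique case to the world-community case via the Embedding axiom. Suppose $S \subseteq V$ satisfies the clique hypothesis in $A=(V,\Pi)$, i.e., $u \succ_{\pi_s} v$ for all $u,s\in S$ and all $v\in V-S$. The case $S=V$ is immediate from {\sf WC}, so I focus on $S \subsetneq V$.

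I consider the projection $A|_S = (S,\Pi|_S)$, where $\Pi|_S = \{\pi'_s\}_{s\in S}$ is the linear order on $S$ preserving the relative order induced by $\pi_s$. By {\sf WC} applied to $A|_S$, I get $S \in \calC(A|_S)$. To transfer this back to $A$, I need to check that the hypothesis of the {\sf Embedding} axiom, namely $\pi_s(j) = \pi'_s(j)$ for all $s,j \in S$, is actually forced by the clique condition. This is the only substantive step. The clique condition says that for every $s\in S$, all members of $S$ are ranked above all members of $V-S$ in $\pi_s$; hence the members of $S$ occupy exactly the top $|S|$ positions of $\pi_s$, so $\pi_s(j)\in[1:|S|]$ for $j\in S$, and this rank agrees with the rank of $j$ among members of $S$, which is by definition $\pi'_s(j)$.

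With this verified, {\sf Embedding} gives $\calC(A|_S) = \calC(A)\cap 2^S$, and combining with $S\in \calC(A|_S)$ yields $S\in\calC(A)$, which is what the {\sf Clique} Property requires. There is no real obstacle beyond recognizing that the clique condition is precisely what makes the embedding hypothesis hold, so the proof is essentially a two-line reduction: {\sf WC} on the projected network supplies the community, and {\sf Embedding} lifts it back to the original network.
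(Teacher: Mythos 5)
Your proof is correct and follows exactly the paper's route: apply {\sf World Community} to the projected network $(S,\Pi|_S)$ and then use {\sf Embedding} to lift the community back to $A$. The only difference is that you spell out the verification that the clique condition forces the embedding hypothesis $\pi_s(j)=\pi'_s(j)$ for $s,j\in S$, which the paper leaves implicit.
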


\begin{proof}
	Let $A = (V,\Pi)$ be a preference network and $S$ be a clique (every member of $S$ prefers $S$ to $V-S$).  By {\sf World Community}, we have that $S \in \calC((S, \Pi|_S))$ and by {\sf Embedding} we have $\calC(A) \cap 2^S = \calC((S, \Pi|_S))$.  Therefore $S$ is a community.
\end{proof}

\begin{proposition}
Any community rule $\calC$ that satisfies  {\sf Monotonicity}
must satisfy  {\sf Independence of Outside Opinions}.
\end{proposition}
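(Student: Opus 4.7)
The plan is to observe that {\sf Monotonicity} is actually a two-sided condition whenever the preferences of members coincide, so {\sf IOO} follows by applying {\sf Mon} in both directions.

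More concretely, suppose $\Pi$ and $\Pi'$ are preference profiles on $V$ with $\pi_s = \pi_s'$ for every $s \in S$. First I would note that since the two profiles literally agree on the restrictions to $S$, the implication $u \succ_{\pi'_s} v \Longrightarrow u \succ_{\pi_s} v$ holds for all $s \in S$, $u \in S$, $v \in V$, simply because it is an equivalence. Applying {\sf Monotonicity} to the pair $(\Pi, \Pi')$ then yields $S \in \calC(V,\Pi') \Longrightarrow S \in \calC(V,\Pi)$.

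For the converse direction I would swap the roles of $\Pi$ and $\Pi'$: again $\pi'_s = \pi_s$ for all $s \in S$, so the same implication holds with $\pi$ and $\pi'$ interchanged, and a second application of {\sf Monotonicity} gives $S \in \calC(V,\Pi) \Longrightarrow S \in \calC(V,\Pi')$. Combining the two implications produces the biconditional required by Property {\sf IOO}.

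There is no genuine obstacle here: the only subtlety worth flagging is that {\sf Monotonicity} quantifies the preservation of preferences only over pairs $(u,v)$ with $u \in S$, so one must check that this restricted quantification is compatible with the hypothesis of {\sf IOO}, which restricts the preferences of outsiders but allows arbitrary changes for pairs $v,w \notin S$ — and indeed, because $\pi_s = \pi'_s$ for members, the relevant implications trivially hold for every $u \in S$ and every $v \in V$, including $v \notin S$.
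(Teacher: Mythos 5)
Your proof is correct and follows essentially the same route as the paper's: since the hypothesis of {\sf Mon} places no constraint on the preferences of voters outside $S$ and is trivially satisfied when $\pi_s=\pi'_s$ for all $s\in S$, two applications of {\sf Mon} (with the roles of $\Pi$ and $\Pi'$ swapped) yield the biconditional of {\sf IOO}. Your write-up is in fact slightly more explicit than the paper's one-line argument, but the underlying idea is identical.
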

\begin{proof}
	Let $A=(V,\Pi)$ be a preference network.  Axiom {\sf Mon} features an alternative preference profile $\Pi'$ stating that if $\Pi'$ satisfies certain properties and $S$ is a community for $(V,\Pi')$, then $S$ must be a community $(V,\Pi)$.  Because the axiom places no restrictions on the preferences of voters from $V-S$, the rule $\calC$ must satisfy {\sf IOO}.
\end{proof}

\begin{property}[{\sf Outsider Departure (OD)}]
	A community rule $\calC$ satisfies the {\sf Outsider Departure} Property if for a given preference network $A=(V,\Pi)$, community $S \in \calC(A)$, and outsider $v \notin S$, we have that $S \in \calC(V-\{v\}, \Pi|_{V-\{s\}})$.
\end{property}

\begin{proposition}
\label{prop:MonEb->OD}
	A community rule, $\calC$, that satisfies the {\sf Monotonicity} and {\sf Embedding} Axioms must also satisfy the {\sf Outsider Departure} Property.
\end{proposition}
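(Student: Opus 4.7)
My plan is to combine the two axioms in sequence: use {\sf Monotonicity} to push $v$ down to the last slot of every preference without losing the community status of $S$, and then invoke {\sf Embedding} to delete $v$ from the ground set. Fix $A = (V, \Pi)$ with $S \in \calC(A)$ and $v \notin S$, and let $\Pi''$ denote the profile obtained from $\Pi$ by moving $v$ to the bottom of $\pi_s$ for every $s \in V$, while leaving the relative order of all remaining elements unchanged.

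First I would check that {\sf Monotonicity} yields $S \in \calC(V, \Pi'')$. Read the axiom as transferring community membership from a profile in which $S$'s members sit lower (from the viewpoint of $s \in S$) to one in which they sit higher. Here the original $\Pi$ plays the role of the starting profile and $\Pi''$ is the improved one: for any $s \in S$, $u \in S$ and $w \in V$, whenever $u \succ_{\pi_s} w$ holds it also holds in $\pi''_s$ — trivially if $w = v$ since $v$ is last, and by the preservation of relative order if $w \neq v$. Since {\sf Monotonicity} places no constraint whatsoever on preferences of non-members of $S$, I am free to modify $\pi_s$ for $s \notin S$ in exactly the same way.

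Second, I would apply {\sf Embedding} with $V' = V - \{v\}$ to the preference network $(V, \Pi'')$. Because $v$ is at the bottom of $\pi''_i$ for every $i \in V'$, the rank $\pi''_i(j)$ of any $j \in V'$ in the full ground set agrees with its rank in the restricted profile $\pi''_i|_{V'}$, which is precisely the hypothesis the axiom requires. It then gives $\calC(V', \Pi''|_{V'}) = \calC(V, \Pi'') \cap 2^{V'}$, so $S \in \calC(V', \Pi''|_{V'})$. Finally, $\Pi''|_{V'}$ coincides with $\Pi|_{V'}$, since restricting to $V - \{v\}$ erases the only modification we made, completing the argument.

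I do not anticipate any serious obstacle. The only delicate point is orienting {\sf Monotonicity} correctly: the axiom goes from worse-for-$S$ to better-for-$S$, so one must start from $\Pi$ and end at $\Pi''$, not the reverse. A secondary sanity check is to confirm that the strict equality $\pi_i(j) = \pi'_i(j)$ in the hypothesis of {\sf Embedding} is indeed equivalent to ``$v$ is at the bottom of $\pi_i$ for every $i \in V'$,'' which is a short counting argument on ranks.
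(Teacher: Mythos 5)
Your proposal is correct and follows essentially the same route as the paper: push $v$ to the bottom of every preference via {\sf Monotonicity} (which permits arbitrary changes to non-members' lists), then apply {\sf Embedding} to $V' = V - \{v\}$ and observe that the restricted profile equals $\Pi|_{V-\{v\}}$. Your extra care in orienting the {\sf Monotonicity} implication and in verifying the rank-equality hypothesis of {\sf Embedding} is accurate and only makes the paper's terse argument more explicit.
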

\begin{proof}
	Let $A =(V,\Pi)$ be a preference network, $S \in \calC(A)$ a community, and $v \notin S$ an outsider.  Consider the preference profile $\Pi'$ that ranks $v$ at the end of everyones preference.  By {\sf Mon}, $S \in \calC(V, \Pi')$.  Furthermore, since $\Pi'$ satisfies the setup for {\sf Embedding}, we also have $S \in \calC(V-\{v\}, \Pi'|_{V-\{v\}})$.  However, $\Pi'|_{V-\{v\}} = \Pi|_{V-\{v\}}$ since $\Pi$ and $\Pi'$ only differ in the placement of $v$.  Therefore we have $S \in \calC(V-\{v\}, \Pi|_{V-\{s\}})$.
\end{proof}

\begin{proposition}
	If a community rule satisfies the {\sf Group Stability} and {\sf Self-Approval} Axioms it must satisfy the {\sf Pareto Efficiency} Property.
\end{proposition}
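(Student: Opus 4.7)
The plan is to prove the contrapositive by a direct application of \textsf{GS} when the community is large and \textsf{SA} when it is a singleton. Suppose $\calC$ satisfies \textsf{GS} and \textsf{SA}, let $A=(V,\Pi)$ be any preference network, let $S\in \calC(A)$, and assume for contradiction that \textsf{PE} fails at $S$: there exist $u^*\in S$ and $v^*\in V-S$ such that for every $s\in S$, $v^*\succ_{\pi_s} u^*$.

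First I would handle the case $|S|\ge 2$ using \textsf{GS}. Take the subgroup $G=\{u^*\}\subsetneq S$, the outside set $G'=\{v^*\}\subset V-S$, and for each $s\in S-G$ let $f_s:G\to G'$ be the (unique) bijection sending $u^*$ to $v^*$. Since $S$ is group stable, there must exist some $s\in S-G$ and some $u\in G$ with $u\succ_{\pi_s} f_s(u)$. But $G=\{u^*\}$, so this forces $u^*\succ_{\pi_s} v^*$ for some $s\in S$, contradicting the assumption that every member of $S$ ranks $v^*$ above $u^*$.

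Next I would handle the residual case $|S|=1$, say $S=\{u^*\}$, where \textsf{GS} is vacuous because no proper non-empty subgroup exists. Here I invoke \textsf{SA}: pick $G'=\{v^*\}\subseteq V-S$ of the same size as $S$ and the unique bijection $f_{u^*}:S\to G'$ sending $u^*\mapsto v^*$. Self-approval then yields $s,u\in S$ with $u\succ_{\pi_s} f_s(u)$, i.e., $u^*\succ_{\pi_{u^*}} v^*$, again contradicting the hypothesis. In both cases we reach a contradiction, so $\calC$ satisfies \textsf{PE}.

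No step looks delicate; the only thing to be careful about is the boundary case $|S|=1$, which is exactly why \textsf{SA} (rather than \textsf{GS}, which requires $G\subsetneq S$ to be non-empty) is needed alongside \textsf{GS}. The argument essentially shows that \textsf{GS} and \textsf{SA} together cover precisely the situations in which \textsf{PE} could be violated, via the trivial size-one ``replacement'' $u^*\mapsto v^*$.
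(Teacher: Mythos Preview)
Your proof is correct and follows essentially the same approach as the paper: split into the cases $|S|\ge 2$ (handled by {\sf GS} with the singleton subgroup $G=\{u^*\}$) and $|S|=1$ (handled by {\sf SA}), using the unique bijection $u^*\mapsto v^*$ in each. The only cosmetic difference is that you frame it as a proof by contradiction while the paper argues directly.
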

\begin{proof}
	Let $S$ be a community.

\vspace{0.03in}
 Case 1: $|S| = 1$.
			By {\sf Self-Approval}, the one member $s$ must rank herself above all outsiders and therefore satisfies {\sf PE}.

\vspace{0.03in}
 Case 2: $|S| > 1$.		
			Choose $G \subset S$ such that $G$ is a singleton $\{s'\}$.  By {\sf Group Stability}, for all outsider singletons $\{g'\} \subseteq V-S$ and bijections $(f_i:\{s'\} \rightarrow \{g'\}, i \in S-G)$ there exists an $s \in S-G$ such that $s' \succ_{\pi_s} f_s(s')$.  Since it is clear that $f_s(s') = g'$, $s$ provides the necessary witness for $s'$ and $S$ satisfies {\sf PE}.
\end{proof}

\section{Aggregation Based Communities Rules}\label{Sec:Rules}
We now examine several examples of aggregation based community rules
 through the lens of our axiomatic framework.
In Section \ref{sec:SCRules},
we focus on a what we call
weighted fixed-point rules,
 starting with the \B3CT community function from \cite{B3CT}.
We
show that it violates both Axioms
  {\sf Monotonicity} and {\sf Group Stability}.
The violation of the monotonicity axiom was initially
  somewhat of a surprise and rather counterintuitive to us.
This violation is illustrative of the
  subtlety of community rules;  indeed, it helped us to
  identify a weaker monotonicity property that
  the \B3CT function satisfies.
We then show that the fixed-point community rule based on
  any Borda-count-like voting function
  is inconsistent with either the {\sf Group Stability} axiom
  or the {\sf Clique} property.
This impossibility result and Theorem \ref{thm:impossibility}
   illustrate some basic limitations of  fixed-point community rules.
Next, we study the properties of the harmonious community function
in Section~\ref{sec:Harmonious}.  We will show that it can be
obtained by preference aggregation, and that it obeys all of our axioms
except for  Axiom {\sf GS}.  It does, however, satisfy a weaker version
of this axiom, see Theorem~\ref{theo:H}.  In our final subsection,
Section~\ref{sec:comp-of-Borda-etc},
we compare
the three rules
  Borda voting, \B3CT voting, and the harmonious rule.

\subsection{Weighted Fixed Point Rules} \label{sec:SCRules}

This section focuses on a class of community rules that lie in between
general fixed point rules and the \B3CT community rule, which we call
{\em weighted fixed point rules}.  First, we will look at some of the
properties of the \B3CT rule as a particular case of a weighted fixed
point rule.

\begin{theorem}\label{th:B3CT}
The \B3CT community rule, $\calC_{B^3CT}$, does
  not satisfy {\sf Monotonicity} or
{\sf Group Stability}.
It satisfies all other axioms, as well as  Properties {\sf Pareto Efficiency} and {\sf Clique}.
\end{theorem}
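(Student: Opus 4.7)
My plan is to verify the seven positive claims axiom by axiom and then exhibit small counterexamples for the two failures. Four of the positive cases are essentially immediate. {\sf World Community} follows from Proposition~\ref{prop:fix-point-props}; {\sf Anonymity} holds because $\phi_S^\Pi(i)$ depends only on the structure of the preferences, not on labels; the {\sf Clique} property holds because in a clique every $s\in S$ casts all $|S|$ votes inside $S$, giving $\phi_S^\Pi(u)=|S|$ for $u\in S$ and $0$ outside; and {\sf Pareto Efficiency} holds because if every $s\in S$ ranks some non-member $v$ above a member $u$, then $v$ inherits every voting slot of $u$, so $\phi_S^\Pi(v)\ge\phi_S^\Pi(u)$. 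For {\sf Embedding}, the hypothesis that members of $V'$ occupy the top $|V'|$ slots of each $\pi_s$ with $s\in V'$ makes the top $|S|$ of $\pi_s$ coincide with the top $|S|$ of $\pi'_s$ for every $S\subseteq V'$, so $\phi_S^\Pi$ and $\phi_S^{\Pi'}$ agree on $V'$ while $\phi_S^\Pi$ vanishes on $V\setminus V'$; the equality $\calC(A')=\calC(A)\cap 2^{V'}$ is then immediate.

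For the two coherence-robustness axioms I would use a majorization argument that singles out the ``agreeing'' profile $\Pi'$ as the worst case. In {\sf CRNM}, order the non-members by the common preference as $v_1\succ v_2\succ\cdots$ and let $k_s$ count the members of $S$ in $s$'s top $|S|$; then the top non-member $v_1$ lies in $s$'s top $|S|$ whenever any non-member does, so $\phi_S^{\Pi'}(v_1)$ equals the number of $s\in S$ with $k_s<|S|$ and absorbs every possible non-member vote. Since passing from $\Pi'$ to $\Pi$ leaves member positions -- and hence the $k_s$ -- unchanged, a direct count yields $\phi_S^\Pi(v)\le\phi_S^{\Pi'}(v_1)$ for every $v\notin S$ and $\phi_S^\Pi(u)=\phi_S^{\Pi'}(u)$ for every $u\in S$; the community inequality therefore propagates from $\Pi'$ to $\Pi$. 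A symmetric argument applied to the member ranked lowest in the shared order handles {\sf CRM}. For {\sf Self-Approval} I would argue by contradiction: if $S\in\calC_{B^3CT}(A)$ fails {\sf SA}, some $G'\subseteq V\setminus S$ with $|G'|=|S|$ is lexicographically preferred to $S$ by every $s\in S$, and Proposition~\ref{prop:lex} gives $G'[i]\succ_{\pi_s} S[i]$ for every $i$ and every $s$. Writing $a_s,b_s$ for the number of elements of $S$ and of $G'$ in $s$'s top $|S|$, the key observation is that $G'[a_s]\succ_{\pi_s}S[a_s]$ pulls $G'[1],\dots,G'[a_s]$ into $s$'s top $|S|$, forcing $b_s\ge a_s$ and hence $a_s\le|S|/2$. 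Summing over $s$ and averaging then gives $\max_{v\in G'}\phi_S^\Pi(v)\ge\min_{u\in S}\phi_S^\Pi(u)$, contradicting the strict community inequality.

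It remains to exhibit the failures of {\sf Monotonicity} and {\sf Group Stability} by explicit small counterexamples. The main obstacle is {\sf Mon}, because the naive ``promoting members only helps'' intuition is wrong: the axiom leaves the relative order of non-members unconstrained, so one is free to swap two non-members whenever no member of $S$ sits strictly between them, and such a swap can push a non-member into some voter's top $|S|$. Concretely I take $V=\{a,b,c,d,e,f\}$, $S=\{a,b,c\}$, $\pi'_1=[a,d,e,f,b,c]$, $\pi'_2=[b,c,f,a,d,e]$, $\pi'_3=[c,a,b,d,e,f]$; direct vote counting gives the values $\phi_S^{\Pi'}=(2,2,2,1,1,1)$ for $(a,b,c,d,e,f)$, so $S\in\calC_{B^3CT}$. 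Swapping $e$ and $f$ in $\pi'_1$ produces a profile $\Pi$ satisfying the {\sf Mon} hypothesis because no member separates $e$ and $f$, yet now $f$ sits in the top $3$ of both $\pi_1$ and $\pi'_2$, so $\phi_S^\Pi(f)=2$ and $S\notin\calC_{B^3CT}$. For {\sf GS}, the preference network $V=\{1,\dots,5\}$, $S=\{1,2,3\}$ with $\pi_1=[1,2,3,4,5]$, $\pi_2=[2,4,1,3,5]$, $\pi_3=[3,2,5,4,1]$ gives vote counts $(2,3,2,1,1)$ for $(1,2,3,4,5)$, so $S\in\calC_{B^3CT}$; taking $G=\{1\}$ and $G'=\{4\}$ witnesses group instability because both voter $2$ and voter $3$ rank $4$ above $1$.
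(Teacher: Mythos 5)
Your proposal is correct and follows essentially the same route as the paper: the same averaging argument (via Proposition~\ref{prop:lex}) for {\sf SA}, the same ``extremal element of the common order'' slot-counting argument for {\sf CRM}/{\sf CRNM} (you just do {\sf CRNM} explicitly where the paper does {\sf CRM}), and explicit small counterexamples for {\sf Mon} and {\sf GS}. Your counterexamples differ from the paper's concrete instances but I checked the vote counts and both are valid -- your {\sf GS} example with a singleton $G$ is in fact slightly leaner than the paper's.
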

\begin{proof}
Directly from the definition of the \B3CT voting function $\phi_S^\Pi$,
  $\calC_{B^3CT}$ satisfies Axioms {\sf A}, {\sf WC},  {\sf Emb},
and Properties  {\sf PE} and {\sf Cq}.
Suppose $\calC_{B^3CT}$ does not  satisfy
  {\sf SA}.
Then, there exists a preference network $A = (V,\Pi)$,
  $S \in \calC_{B^3CT}(A)$,  $T \subseteq V-S$,
   and a tuple of bijections  $(f_s: S\rightarrow T)$
  such that for all $s,u\in S$, $u\prec_{\pi_s}
  f_s(u)$.
It follows that $\forall s\in S$, the numbers of votes cast by
  $s$  for $S$ according to $\phi_S^{\Pi}$ is less than the
  numbers of votes that $s$ casts for $T$.
Summing up the votes from $S$,
  the average votes that members of $T$ receive is larger than the
  average votes that members of $S$ receive, contradicting
  the assumption that everyone in $S$ receives more votes than
  everyone in $T$.
Thus, $\calC_{B^3CT}$ satisfies {\sf SA}.

To show $\calC_{B^3CT}$ satisfies Axiom {\sf CRM},
consider $S$,  $\Pi$ and $\Pi'$ as in Axiom {\sf CRM}.
By the coherence assumption for members,
there exists
    $\sigma\in L(S)$
    such that for $s_1,s_2\in S$, for all $s\in S$,
    $s_1 \succ_{\pi'_s} s_2$ if  and only if $ s_1 \succ_{\sigma} s_2$.

Let $s^*$ denote the least preferred elements of $S$ according to  $\sigma$.
By the assumption that $\pi_s(v)=\pi'_s(v)$ for all
    $s\in S,v\in V-S$, we have that $\pi_s(V-S)=\pi'_S(V-S)$, and hence
    also that $\pi_s(S)=\pi'_s(S)$.  But this implies that for all
    $u\in S$

\[
    \phi_S^\Pi(u)=\sum_{s\in S} 1_{\pi_s(u)\leq |S|}
    \geq \sum_{s\in S} 1_{\pi_s(S)\subseteq [1:S]}
    = \sum_{s\in S} 1_{\pi'_s(S)\subseteq [1:S]}
    =\sum_{s\in S} 1_{\pi'_s(s^*)\leq |S|}
    =\phi_S^{\Pi'}(s^*).
\]
If $S\in \calC_{B^3CT}(V,\Pi'))$,
 then $s^*$ receives more votes from $\Pi'_S$
than every $v\in V-S$, and
the number of votes $v$ receives from $\Pi_S$ is the same
 as the number of votes it receives from $\Pi'_S$.
On the other hand, for all $u\in S$, the number of votes $u$ receives from
$\Pi_S$ is at least the number of votes $s^*$ receives from
$\Pi'_S$, implying that $S\in \calC_{B^3CT}(V,\Pi)$.
We can similarly show that $\calC_{B^3CT}$ satisfies Axiom {\sf CRNM}.

Let $V = [1:6]$, $S=[1:3]$, let $\Pi = (\pi_1,...,\pi_6)$ be
the preference profile
\begin{eqnarray*}
\pi_1 = [1 4  2 3 5 6],\quad \pi_2 = [{2 5 }3 4 1 6],\quad \pi_3 =
[6 3 1 4 2 5]\\
\pi_4 = [4 5 6 1 2 3], \quad \pi_5 = [1 5 6 4 2 3], \quad
\pi_6 = [1 6 5 4 2 3]
\end{eqnarray*}
and let $\Pi'$ be the preference profile
\begin{eqnarray*}
\pi'_1 = [1 {4 2} 3 5 6], \quad\pi'_2 = [2 3 4 5 1 6], \pi'_3 =
[3 1 4 6 2 5]\\
\pi'_4=\pi_4,\quad\pi'_5=\pi_5,\quad \pi'_6 = \pi_6.\qquad
\end{eqnarray*}
Then $S=[1:3]\in\calC_{B^3CT}(V,\Pi)$,
 as each members of $S$ receives two votes
  while everyone in $[4:6]$ receives only one vote.
However, in violation of Axiom {\sf Mon},
  $S $ is no longer a \B3CT community
{w.r.t $\Pi'$, since $4$ now receives three votes, one more than $1$, $2$ and $3$.}

Note also $T  = (1, 5, 6)\in \calC_{B^3CT}(V,\Pi)$.
Let $G = \{5,6\}\subset T$ and $G' = (2, 4)\subset V-T$.
As member $1$ prefers $2$ to $5$ and $4$ to $6$, $T$ does not satisfy
{\sf Group Stability}.
\end{proof}

Note that  the same analysis shows that $\calC_{B^3CT}$
  does not satisfy the {\sf Outsider Departure} Property.
In the example above, if member $5$ leaves the system, then member $4$
  will receive 2 votes from $S = \{1,2,3\}$, and hence $S$ is
   no longer a $\calC_{B^3CT}$-community.

Even though $\calC_{B^3CT}$ does not satisfy {\sf Mon},
  it does enjoy the following monotonicity property.
\begin{property}[{\sf Outsider Respecting Monotonicity}]
If $S$ is  a community of a preference network   $A=(V,\Pi)$,
 then   $S$ remains a community of $(V,\Pi')$ for
   any $\Pi'$ such that (1)
 $u \succ_{\pi_s} t$ $\Rightarrow$ $u \succ_{\pi'_s} t$, $\forall
 u,s\in S, t \in V$, and (2) $v \succ_{\pi_s} v' \Rightarrow
 v \succ_{\pi'_s} v'$,
 $\forall v,v'\in V-S, s\in S$.
\end{property}

We now analyze the fixed point rule defined by
  the family of aggregation functions, such as
  Borda count and \B3CT voting, that derive a cardinal social preference
  from ordinal individual preferences.

Let $W$ be a  sequence of weight vectors $w^i\in {\mathbb R}^{n}$, $W=(w^1,w^2,\dots)$, where $n$ is
the number of elements in $V$.
 For a non-empty finite $S\subset{\mathbb N}$ and  $\Pi_S\in \LV^S$
 define the aggregate preference $F_W(\Pi_S)$ on $V$ by
$$
i\succ_{F(\Pi_s)} j\qquad\Longleftrightarrow
\qquad
\sum_{s\in S} w^{{ |S|}}_{\pi_s(i)}
>
\sum_{s\in S} w^{{ |S|}}_{\pi_s(j)}.
$$	
In other words, $i\succ j$ in the aggregate iff the total weight of the votes $i$ receives from $S$ is larger than
the total weight of the votes $j$ receives from $S$, where a vote in position $p$ gets weight $w_p^{|S|}$.

In \B3CT, $w^k$ is the vector of $k$ ones followed
by $(n-k)$ zeros\footnote{The rule $\calC_{\text{\B3CT}}$ does not specify what the weight $w^k$ should be
for $k>n$ since preferences with more voters than alternatives do not occur
when determining communities -- so we are free to define it arbitrarily, say
$w^k_i=1$ for all $i$ if $k>n$.}, while
Borda count uses $w^k = (n,n-1,...,1)$ for all $k$.

\begin{definition}[Weighted Fixed Point Rule]\label{def:FPR}
For a sequence of vectors $W = (w^1,w^2,\dots)$ in $\mathbb{R}^n$,
 $\mathcal{C}_W$ is the fixed point rule with respect to $F_W$.
\end{definition}

\begin{proposition}
Weighted fixed-point rules satisfy Axiom {\sf Anonymity}.
They satisfy
{\sf Outsider Respecting Monotonicity} if $w_i^k\geq w_j^k$ for all $k\in [1:n-1]$
and $i\leq j$, and they satisfy
the {\sf Clique} Property if and only if
for all $k\in [1:n-1]$ the weight vector $w^k$ is such that $w^k_i > w^k_j$ for $i \le k$ and $j > k$.
\end{proposition}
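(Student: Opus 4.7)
The plan is to tackle the three claims in order, as they are essentially independent.

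For \textbf{Anonymity}, I would just chase the definitions.  Set $S'=\sigma(S)$, $\Pi'=\sigma(\Pi)$.  From $\pi'_{\sigma(s)}(\sigma(v))=\pi_s(v)$, the change of variable $s\mapsto \sigma(s)$ in the weighted sum gives
\[
\sum_{s'\in S'} w^{|S'|}_{\pi'_{s'}(\sigma(v))}
= \sum_{s\in S} w^{|S|}_{\pi_s(v)},
\]
so the aggregate preference $F_W(\Pi'_{S'})$ is $\sigma$ applied to $F_W(\Pi_S)$.  The fixed-point condition then transports verbatim, giving $S\in\calC_W(V,\Pi)\Longleftrightarrow \sigma(S)\in\calC_W(V,\sigma(\Pi))$.

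For \textbf{Outsider Respecting Monotonicity} with non-increasing weights, the crux is a rank analysis for a fixed $s\in S$.  Condition (1) applied symmetrically to two members $u,u'\in S$ forces the relative ordering of members in $\pi_s$ and $\pi'_s$ to coincide; condition (2) does the same for non-members; and condition (1) with $u\in S$, $t=v\in V-S$ says that if a non-member is preferred to a member in $\pi'_s$ then it was already preferred in $\pi_s$.  Combining these, the set of elements ranked above a member $u\in S$ under $\pi'_s$ is contained in the corresponding set under $\pi_s$, while the set of elements ranked above a non-member $v\in V-S$ under $\pi'_s$ contains the corresponding set under $\pi_s$.  Hence $\pi'_s(u)\le \pi_s(u)$ and $\pi'_s(v)\ge \pi_s(v)$.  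By monotonicity of the weights, the weighted score of each member weakly increases and that of each non-member weakly decreases, so the strict inequality $\sum_{s\in S} w^{|S|}_{\pi_s(u)} > \sum_{s\in S} w^{|S|}_{\pi_s(v)}$ witnessing $S\in\calC_W(V,\Pi)$ is preserved in $\Pi'$.

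For the \textbf{Clique} Property, I would argue both directions separately.  For ($\Leftarrow$), if $S$ is a clique of size $k$ with $1\le k\le n-1$, then every $s\in S$ has $\pi_s(S)=[1:k]$ and $\pi_s(V\!-\!S)=[k+1:n]$, so the stated weight condition gives $w^k_{\pi_s(u)} > w^k_{\pi_s(v)}$ termwise, and summing over $s$ yields the fixed-point inequality; the case $S=V$ is vacuous.  For ($\Rightarrow$), fix $k\in[1:n-1]$ and $i\in[1:k]$, $j\in[k+1:n]$.  Take $V=[1:n]$, $S=[1:k]$, pick specific $u\in S$ and $v\in V-S$, and let every $s\in S$ share a single preference $\pi$ that ranks $S$ in positions $[1:k]$ and $V-S$ in positions $[k+1:n]$ with $\pi(u)=i$ and $\pi(v)=j$.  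Then $S$ is a clique, so by the Clique Property it is a fixed-point community, which reads $k\,w^k_i > k\,w^k_j$, i.e.\ $w^k_i>w^k_j$.

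I don't expect any real obstacle.  The only subtle point is the rank bookkeeping in the ORM argument (making sure one extracts the correct monotonicity of ranks from each of the two defining clauses), and this is just a careful contrapositive.
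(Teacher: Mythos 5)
Your proposal is correct and follows essentially the same route as the paper: the paper dismisses Anonymity, the ORM claim, and the "if" direction of the Clique claim as immediate from the definitions (which your rank-bookkeeping for ORM fills in correctly), and its "only if" argument uses exactly your construction — a clique whose members all share one preference placing a member at position $i\le k$ and a non-member at position $j>k$, so that $w^k_i\le w^k_j$ kills the strict aggregate inequality.
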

\begin{proof} {The proof of the first two statements and the ``if'' part of the third follow directly from the definitions.
To see the ``only if'' part of the third statement, consider $k,i,j$ such that $w^k_i \leq w^k_j$, and let $S,\Pi\in L(V)$ be such that $|S|=k$,
  $\pi_s\leq k$ for all $s,u\in S$,
and $\pi_s(v)=\pi_t(v)$ for all $s,t\in S, v\in V$.
Then $S$ satisfies the condition of the Property {\sf Cq}, but it is not a community.  To see this, choose $v\in S$ and $v'\notin S$
such that $\pi_s(v)=i$ and $\pi_s(v')=j$.  Then $\sum_{s\in S}w^k_{\pi_s(v)}\leq \sum_{s\in S}w^k_{\pi_s(v')}$, showing
that $S$ is not a community.}
\end{proof}

{Together with Proposition~\ref{prop:Clique-Null}, the next theorem implies that there is no weighted fixed
point rule that satisfies the {\sf Group Stability}, {\sf World Community} and {\sf Embedding} Axioms.}

\begin{theorem}{\sc (Impossibility of Weighted Aggregation Schema)}
Weighted Fixed Point Rules are inconsistent with either the {\sf Group
Stability}  Axiom or the {\sf Clique} Property.
\end{theorem}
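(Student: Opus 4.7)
The plan is a proof by contradiction: I assume $\calC_W$ satisfies both {\sf Cq} and {\sf GS}, and construct a preference network witnessing a {\sf GS} failure. By the preceding proposition, {\sf Cq} forces $w^k_i > w^k_j$ whenever $i \le k < j$, so I have a strict top/bottom gap in every weight vector to work with. The remaining work will split into cases according to the internal structure of the top-$k$ weights.

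The first case I would handle is that some $w^k$ has a non-trivial asymmetry in its top positions, say $w^k_1 > w^k_2$ after relabelling. Here I would work with $k=2$, taking $V = \{s_1, s_2, v_1\}$ and preferences $\pi_{s_1} = [s_1, v_1, s_2]$, $\pi_{s_2} = [s_2, s_1, v_1]$. Computing aggregate weights shows that $w^2_1 > w^2_2$ (combined with $w^2_2 > w^2_3$ from {\sf Cq}) makes $S = \{s_1, s_2\}$ a fixed point of $F_W$, while $v_1 \succ_{\pi_{s_1}} s_2$ immediately breaks {\sf GS} with $G = \{s_2\}$ and $G' = \{v_1\}$.

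The second case is that for every $k$ the top-$k$ block is constant: $w^k_1 = \dots = w^k_k =: c_k$, placing us in the \B3CT-like regime. Here I would take $k=3$ and $|V|=6$, using either the counter-example from Theorem~\ref{th:B3CT} (the profile with $T = \{1,5,6\}$ in which voter $1$ promotes outsiders $\{2,4\}$ above members $\{5,6\}$) or a single-swap variant in which two voters of $S$ rank one outsider just above a weak member $s^*$. The first template works when $w^3_4$ is small relative to $c_3$; the second works when $w^3_4$ is close to $c_3$. By checking the two boundary inequalities, one finds that their admissible ranges of $w^3_4$ together cover the whole interval $(\max(w^3_5, w^3_6),\, c_3)$ permitted by {\sf Cq}.

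The main obstacle is that {\sf Cq} gives only a non-quantitative strict inequality, so when the ``middle'' weight $w^3_4$ sits very close to $c_3$ the Theorem~\ref{th:B3CT} profile loses its community property and a genuinely different construction is needed. Designing this second template and verifying that the two templates jointly exhaust every $W$ consistent with {\sf Cq}---so that in every such regime at least one of them produces a community with a {\sf GS} violation---is the heart of the proof.
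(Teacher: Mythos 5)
Your overall strategy is the paper's: exhibit preference profiles in which $S$ violates {\sf GS}, and show that the weight inequalities forced by {\sf Cq} make $S$ a community in at least one of them. Your Case 1 observation is a nice addition not in the paper --- when $w^2_1\neq w^2_2$ a three-element example already gives the contradiction --- and it checks out (with the caveat that for $|V|=n>3$ you must also argue that the remaining outsiders, who sit at positions $\geq 4$, lose to $s_2$; {\sf Cq} gives no ordering among positions $3,\dots,n$, so this already needs the position-permutation trick the paper uses at the end of its proof). However, the case split is not exhaustive: Case 1 as written only exploits asymmetry in the top block of $w^2$, while Case 2 assumes \emph{every} $w^k$ has a constant top block. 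A sequence with $w^2_1=w^2_2$ but $w^3_1,w^3_2,w^3_3$ not all equal falls into neither case, and your $k=2$ construction provably yields a tie there (so $S$ is not a community and no contradiction results). The paper avoids this by attacking $k=3$ with an \emph{arbitrary} $w^3$ from the start.

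The deeper problem is the covering claim in Case 2. First, {\sf Cq} does not give $w^3_4>\max(w^3_5,w^3_6)$, so ``the interval $(\max(w^3_5,w^3_6),c_3)$'' is not the full range you must cover; handling unordered bottom blocks requires the paper's permutation-of-positions argument, which you do not invoke. Second, even after sorting the bottom block, two templates do not suffice. Writing $w^3=(c,c,c,c-a,c-b,c-e)$ with $0<a\leq b\leq e$, your Template 1 makes $T=\{1,5,6\}$ a strict community exactly when $e<2a$ (the binding comparison is member $6$, with score $2c+w_6$, against outsider $4$, with score $c+2w_4$), while the burial variant of Template 2 requires $e>2a$; at $e=2a$ --- e.g.\ $w^3=(c,c,c,c-1,c-1,c-2)$ --- both constructions produce ties and neither certifies a community, yet the paper's first profile $\pi_a=[adebc]$, $\pi_b=\pi_c=[abcde]$ still works for these weights. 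This is why the paper needs \emph{three} profiles and the chaining of three inequalities in the differences $d_i$: the constraint region carved out by any two of its profiles is nonempty. Your Template 2 is also underspecified --- if the third voter ranks the promoted outsider at position $4$, that outsider scores $2c+w_4$ against the weak member's $2w_4+c$ and always wins, so the third voter must bury the outsider at a position whose weight is far below $w_4$, which is an additional hypothesis on the weights rather than a free choice. So the ``heart of the proof'' you defer is genuinely missing, and as sketched the two-template dichotomy is false.
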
	
\begin{proof}
Let $A=(V,\Pi)$ be a preference network, $S \subset V$,
and $\calC_W$ a weighted fixed point rule satisfying the
the {\sf Clique} Property.
Throughout the the proof, we will take
$$
V = \{a,b,c,d,e\}\quad\text{and}\quad S = \{a,b,c\},
$$
{ and consider preference
profiles such that $S$ violates {\sf Group Stability}.
In order for $\calC_W$ to obey the Axiom {\sf GS}, we would
need  the weight vector $w^3\in\mathbb R^5$ to be such
that  $S\notin\calC(V,\Pi)$ for all $\Pi$ considered in this proof.
Our goal is to show that this will lead to a contradiction.
We start under the assumption that the weights are decreasing, i.e., in addition
to the already established fact that $w_i^3>w_j^3$ when $i=1,2,3$ and $j=4,5$ (since $\calC_W$  satisfies the
the {\sf Clique} Property)},
we will first assume that $w_1^3\geq w_2^3\geq w_3^3$ and $w_4^3\geq w_5^3$.

Consider the following scenario:
\begin{eqnarray*}
\pi_a = [adebc] ,\ \pi_b = \pi_c =[abcde].
\end{eqnarray*}
{Since $a$ prefers $d$ and $e$ over
 $b$ and $c$, $S$ is not group stable} and hence cannot be a community.
 By our assumption that $w_1^3 \ge w_2^3 \ge w_2^3 > w_4^3 \ge w_4^3$, we have that $a \succ_{F_W(\Pi_S)} b \succeq_{F_W(\Pi_s)} c \succ_{F_W(\Pi_S)} e$ and  $b \succ_{F_W(\Pi_S)} d$.  Therefore the only way $S$ cannot be a community is that
  $d \succeq_{F_W(\Pi_S)} c$, i.e.,
$$
w_2^3+2w_4^3\geq 2w_3^3+w_5.
$$
Notice that this implies that we cannot have both $w_2^3 = w_3^3$ and $w_4^3 = w_5^3$.
	
	Now consider a modified preference profile:
$$\pi'_a =\pi'_b = [abdce],\ \pi'_c ={[caebd]}.$$

In this profile $a$ and $b$ prefer $d$ over $c$, so again $S$ violates {\sf GS} and hence cannot be a community.
On the other hand, we now have $a \succ_{F_W(\Pi'_S)} b$, $b \succeq_{F_W(\Pi'_S)} d \succ_{F_W(\Pi'_S)} e$.
 Thus we must have either $b \sim_{F_W(\Pi')} d$ or
 $d \succeq_{F_W(\Pi'_S)} c$.
The former, however, implies $w_2^3 = w_3^3$ and $w_4^3 = w_5^3$ and is hence a contradiction.  Therefore the latter must be true
which implies
$$
 2w_3^3 + w_5^3\geq w_1^3 + 2w_4^3 .
$$

	This brings us to the final preference profile:
$$\pi_a'' = [abdce],\ \pi_b'' = [dcabe],\ \pi_c'' = [cbaed].$$
{Again $a$ and $b$ prefer $d$ to $c$, so the profile violates {\sf GS}, and hence again can't be a community.
Now $a \succ_{F_W(\Pi'')} c\succeq_{F_W(\Pi'')}b $ and $d\succ_{F_W(\Pi'')} e$, showing that
for $S$ not to be a community, we must have $d\succeq_{F_W(\Pi'')} b$, which gives
$$
w_1^3+w_3^3+w_5^3\geq 2 w_2^3+w^3_4.
$$

Defining $d_i=w_i^3-w_{i-1}^3$, we can write the bounds obtained so far as
$$
d_4\leq d_3+d_5
$$
$$
d_2+d_3+d_5\leq  d_4
$$
$$
d_3+d_4+d_5\leq  d_2.
$$
Chaining up these three bounds, we get
$$
d_3+d_5\geq d_4\geq d_2+d_3+d_5\geq d_3+d_4+d_5+d_3+d_5=2(d_3+d_5)+d_4,
$$
contradicting our assumption $d_i\geq 0$ and the fact that {\sf Cq}  implies $d_4>0$.

To relax the constraint that the weights are ordered, we observe that all three profiles considered in the proof
are such that, under arbitrary permutations of the first three and the last two positions,  $S$ still violates {\sf GS}.  In other words, for any permutation $\sigma$ of $[1:5]$ that leaves $[1:3]$ and $[4:5]$ invariant, $S$ violates
{\sf GS} under the profiles $\{\sigma \circ\pi_s\}_{s\in S}$, $\{\sigma' \circ\pi_s\}_{s\in S}$, and
$\{\sigma'' \circ\pi_s\}_{s\in S}$.
Choosing the permutation in such a way that the weights $\tilde w^3_i=w^3_{\sigma(i)}$ are ordered, we obtain
the above three inequalities for the weights $\tilde w_i^3$, leading again to a contradiction.
}
\end{proof}

\subsection{Properties of Harmonious Communities}
\label{sec:Harmonious}

In this subsection, we analyze the harmonious community function given by
  Definition \ref{def:harmonious}.  We first prove that it can be expressed
  in terms of a suitable preference aggregation function.

 \begin{proposition}\label{prop:H}
 \label{prop:harmonious-aggregation}
 There exists a preference aggregation function
   $F_{\calH}:\LV^* \rightarrow \bLV$ such that the harmonious community function $\calH$
 is defined by a  $F_{\calH}$.
 \end{proposition}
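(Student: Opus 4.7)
The plan is to prove the proposition by explicitly exhibiting an aggregation function $F_\calH$ whose fixed-point rule $\calC_{F_\calH}$ coincides with $\calH$, and then verifying the equality by a direct unpacking of definitions. The key observation is twofold: the voter index set $S$ of any profile $\Pi_S\in \LV^S$ is readable by $F_\calH$, and the harmony of $S$ depends only on the pairwise majority relation between members of $S$ and non-members under $\Pi_S$. Since the framework permits $F_\calH$ to be an arbitrary function from $\LV^*$ into $\bLV$, this alone is enough to carry the argument.

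Concretely, I would define $F_\calH(\Pi_S)$ as follows. If $S\subseteq V$ and for every $u\in S$ and every $v\in V-S$ a strict majority of $\{\pi_s:s\in S\}$ prefers $u$ to $v$---that is, $S$ is harmonious with respect to $\Pi_S$---then output the ordered partition with first block $S$ and second block $V-S$. In every other case (including profiles with $S\not\subseteq V$ and profiles where harmony fails), output the single-block ordered partition $(V)$. Both outputs are legitimate elements of $\bLV$, so $F_\calH$ is a well-defined preference aggregation function on $\LV^*$.

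To finish, I would verify $\calC_{F_\calH}=\calH$ directly from the fixed-point definition. Fix $A=(V,\Pi)$ and $\emptyset\neq S\subseteq V$. If $S$ is harmonious, then $F_\calH(\Pi_S)=(S,V-S)$, so every $u\in S$ is strictly preferred to every $v\in V-S$ in $F_\calH(\Pi_S)$, which gives $S\in\calC_{F_\calH}(A)$. Conversely, if $S$ is not harmonious, then $F_\calH(\Pi_S)=(V)$, whose strict-preference relation is empty, so no pair with $u\in S,\ v\in V-S$ satisfies $u\succ_{F_\calH(\Pi_S)} v$, forcing $S\notin\calC_{F_\calH}(A)$. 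The boundary case $S=V$ is both vacuously harmonious and vacuously a fixed point, so the equivalence holds there as well. Hence $\calC_{F_\calH}=\calH$.

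The main ``obstacle'' is conceptual rather than technical: one might hope for a more canonical voting scheme---for instance, ordering candidates by in-degree in the pairwise majority tournament---to realize $\calH$ as a fixed-point rule. Small examples rule this out: with $V=\{a,b,c,d\}$, $S=\{a,b\}$, $\pi_a=[abcd]$ and $\pi_b=[bcad]$, the only pairs decided by strict majority are $a\succ d$, $b\succ c$, $b\succ d$, and $c\succ d$, giving in-degrees $0,0,1,3$ for $a,b,c,d$ respectively; so $S$ is a fixed point of this ordering even though $S$ is not harmonious (the pair $a$ vs.\ $c$ is tied, not a majority win for $a$). The construction above sidesteps such pitfalls by directly detecting failure of harmony and collapsing the output to a single block, which is possible precisely because $\bLV$ consists of ordered \emph{partitions} rather than linear orders.
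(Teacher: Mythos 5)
Your proof is correct, but it takes a genuinely different route from the paper's. You exploit the fact that a preference aggregation function is allowed to be an \emph{arbitrary} map $\LV^*\to\bLV$ that can read off the voter set $S$, and you simply hard-code the answer: output $(S,\,V-S)$ when $S$ is harmonious and the trivial one-block partition otherwise. This is legitimate under the paper's definitions (the footnote to the definition of aggregation functions explicitly allows label-dependence), and your verification of $\calC_{F_\calH}=\calH$ is airtight; the only cosmetic point is that when $S=V$ you should suppress the empty second block so that the output is a genuine ordered partition. The paper instead builds a \emph{canonical} aggregation: it forms the pairwise majority digraph $G_{\Pi_S}$ (edge $(i,j)$ when at least half of $S$ prefers $i$ to $j$), contracts strongly connected components to obtain an acyclic tournament with a unique Hamiltonian path, and takes the resulting ordered sequence of components as $F_\calH(\Pi_S)$; harmonious sets are then exactly the downward-closed unions of components. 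What the paper's construction buys is a meaningful aggregate ranking for \emph{every} profile, which is reused later (e.g., in bounding the number of $\delta$-stable harmonious communities, where $S$ is recovered from $F(\Pi_T)$ for a small sample $T$); what your construction buys is brevity, plus the observation---made essentially explicit by your argument---that any community function satisfying {\sf IOO} and {\sf WC} is a fixed-point rule, which shows the proposition is really a statement about the weakness of the ``arbitrary aggregation function'' framework. Your in-degree counterexample is a nice, correct aside explaining why a Copeland-style ordering would not work.
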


 \begin{proof}
 Given $V$, a finite set $S$, and a preference profile $\Pi_S\in L(V)^S$, we
consider the following  directed graph $G_{\Pi_S} = (V,E_{\Pi_S})$
   where $(i,j)\in E_{\Pi_S}$ if at
   least half of $S$ prefers $i$ to $j$.
Note that if $|S|$ is an odd number, then
 $G_{\Pi_S}$ is a {\em tournament graph.}
If $|S|$ is an even number, then $E_{\Pi_S}$  contains
  both $(i,j)$ and $(j,i)$ if exactly half of $\Pi_S$ prefer $i$ to $j$.
$G_{\Pi_S}$ is {\em total} since for all $i,j\in V$, either
$(i,j)\in E_{\Pi_S}$ or $(j,i)\in E_{\Pi_S}$.
Because $G_{\Pi_S}$  is total, the graph $\hat G_{\Pi_s}$
{ obtained from $G_{\Pi_S}$
by contracting each
strongly connected component into a single vertex} is an {\em acyclic}, tournament graph.
As a consequence, the graph $\hat G_{\Pi_s}$  has exactly one Hamiltonian path that
  totally orders its vertices.
Let  $(V_1,...,V_t)$ be the strongly connected components of $G_{\Pi_S}$,  sorted by
{ the order determined by the Hamiltonian path.  The partition
 $(V_1,...,V_t)$ of $V$ then defines an ordered partition $F_{\calH}(\Pi_S)$, with $V_i\succ_{F_{\calH}(\Pi_S)} V_j$
iff $i\leq j$.

Next, we consider a subset $T\subset V$.  It is then easy to check that
if $T$ is of the form $T = \cup_{j\leq i} V_j$ for some $i\in [1:t]$, then
for all $u\in T, v\in V-T$, a majority of $S$ prefers $u$ to $v$, and vice versa.
Specializing to $S=T$, we see that}
$\calH$ is defined by the preference aggregation function
$F_{\calH}$.
\end{proof}

Next we  show that $\calH$
satisfies all  axioms except for {\sf Group Stability}.

\begin{theorem}\label{theo:H}
The harmonious community function
  satisfies Axioms
{\sf A}, {\sf SA}, {\sf Mon}, {\sf Emb}, {\sf WC}, {\sf CRM}, and {\sf CRNM}, but it
does not
satisfy {\sf GS}.
\end{theorem}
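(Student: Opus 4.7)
My plan is to verify the seven satisfied axioms individually and then exhibit a three-member counter-example for \textsf{GS}. The four axioms \textsf{A}, \textsf{WC}, \textsf{Emb}, and \textsf{Mon} admit essentially direct verifications. Anonymity is immediate because $\calH(V,\Pi)$ depends only on the majority preference relation on pairs in $S\times(V-S)$, which is permutation-equivariant. \textsf{WC} holds vacuously since $V-V=\emptyset$. For \textsf{Emb}, the hypothesis $\pi_i(j)=\pi'_i(j)$ for all $i,j\in V'$ forces $V'$ to occupy the top $|V'|$ positions of every $\pi_i$ with $i\in V'$, so elements outside $V'$ sit below every $V'$-element in every relevant voter's ranking and are irrelevant to the majority comparisons defining membership of $2^{V'}$-subsets in $\calH$. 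For \textsf{Mon}, the hypothesis yields that any pair $(u,v)\in S\times(V-S)$ with $u\succ_{\pi'_s}v$ also satisfies $u\succ_{\pi_s}v$, so every majority supporting the harmonious condition persists.

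The two coherence-robustness axioms \textsf{CRM} and \textsf{CRNM} share a common mechanism. In both, the respective equality of ranks for non-members (resp.\ members) forces the multiset $\pi_s(S)=\pi'_s(S)$ (resp.\ $\pi_s(V-S)=\pi'_s(V-S)$) for each $s\in S$. For \textsf{CRM}, let $u^*$ denote the worst member in the agreed ordering of $S$ under $\Pi'$; then ``$u^*\succ_{\pi'_s}v$'' is equivalent to ``every member of $S$ sits above $v$ in $\pi_s$'', a statement involving only $\pi_s(v)$ and $\max\pi_s(S)$, which is preserved between $\Pi$ and $\Pi'$. Applying harmoniousness of $\Pi'$ to the pair $(u^*,v)$ for each $v\in V-S$ yields that a majority of $s\in S$ rank all of $S$ above $v$ in $\pi_s$; this immediately implies $S\in\calH(V,\Pi)$. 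The \textsf{CRNM} argument is dual: letting $v^*$ be the agreed-best non-member, harmoniousness of $\Pi'$ applied to $(u,v^*)$ forces, for each $u\in S$, a majority of $s\in S$ to rank $u$ above all of $V-S$ in $\pi_s$, which again gives harmoniousness of $\Pi$.

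The subtlest positive axiom is \textsf{SA}, which I will prove by contradiction via Proposition~\ref{prop:lex} and a double-count. Assume $S\in\calH(V,\Pi)$ and that there exist $G'\subseteq V-S$ with $|G'|=|S|$ and bijections $\{f_s:S\to G'\}_{s\in S}$ with $f_s(u)\succ_{\pi_s}u$ for all $s,u\in S$. By Proposition~\ref{prop:lex}, for each $s\in S$ the $k$-th best element of $G'$ under $\pi_s$ beats the $k$-th best element of $S$, so at least $|S|-k+1$ members of $S$ sit below $G'[k]$ in $\pi_s$. Thus $|\{(u,v)\in S\times G'\colon v\succ_{\pi_s}u\}|\geq |S|(|S|+1)/2$ for each $s$, and summing over $s\in S$ gives a total pair-count of at least $|S|^2(|S|+1)/2$. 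On the other hand, harmoniousness forces $|\{s\in S\colon v\succ_{\pi_s}u\}|<|S|/2$ for every $(u,v)\in S\times G'\subseteq S\times(V-S)$, so the same total is strictly less than $|S|^3/2$. Since $|S|^2(|S|+1)/2>|S|^3/2$, we have a contradiction.

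Finally, to show $\calH$ fails \textsf{GS}, I will exhibit a small explicit witness: let $V=\{1,2,3,4,5\}$, $S=\{1,2,3\}$, $\pi_1=\pi_2=[1,2,3,4,5]$, and $\pi_3=[4,5,1,2,3]$. A direct check gives $S\in\calH(V,\Pi)$: for each $u\in S$ and $v\in\{4,5\}$, voters $1$ and $2$ prefer $u$ over $v$, giving a 2-out-of-3 majority. But with $G=\{1,2\}$, $G'=\{4,5\}$, and $f_3(1)=4$, $f_3(2)=5$ (the only relevant bijection for the singleton $S-G=\{3\}$), we have $4\succ_{\pi_3}1$ and $5\succ_{\pi_3}2$, so \textsf{GS} is violated. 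The main obstacle in the overall proof is the \textsf{SA} counting inequality, whose success depends critically on Proposition~\ref{prop:lex} together with an asymmetric double-count; the \textsf{CRM}/\textsf{CRNM} step also requires the slightly delicate insight that the weakest harmonious comparison under a coherent profile upgrades to the stronger ``entire $S$ above non-member'' (or dually ``member above entire $V-S$'') condition, which is the form that transfers cleanly to the general profile.
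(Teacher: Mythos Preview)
Your proof is correct and follows essentially the same approach as the paper: the same double-count via Proposition~\ref{prop:lex} for {\sf SA}, the same pivot-element mechanism (worst member $u^*$ for {\sf CRM}, best non-member $v^*$ for {\sf CRNM}) exploiting that $\pi_s(S)=\pi'_s(S)$ as position-sets, and an explicit counter-example for {\sf GS}. The only notable difference is that your five-element counter-example is simpler than the paper's, which recycles the six-element profile from the proof of Theorem~\ref{th:B3CT}.
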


\begin{proof}
Directly from the definitions,
one easily checks that $\calH$ satisfies Axioms {\sf A}, {\sf Mon}, {\sf Emb} and {\sf WC}.

By a similar argument to the proof of Theorem \ref{th:B3CT}, we can
  prove that $\calH$ satisfies {\sf SA}: if $S\in \calH(A)$ does not satisfy {\sf SA}, then
  {there exists a $T\subset V-S$ of the same size as $S$ such that  each $s\in S$
  lexicographically prefers $T$ over $S$.  With the help of Proposition~\ref{prop:lex},
  this implies that, for each $s\in S$, there are at least $(1+2+\dots+|S|)$ pairs $(u,v)\in S\times T$
  such that $s$ prefers $v$ over $u$. } Thus the number of triples
  $(s,u,v)$ such that $s\in S$ prefers $v\in T$ over $u\in S$ is at least
  $|S|^2(|S|+1)/2$.
  However, $S\in \calH(A)$ implies that this number has to be
 strictly smaller than $|S|^3/2$.

To see that $\calH$ is consistent with Axiom {\sf CRNM},
consider a preference profile $\Pi,\Pi'$ as specified in Axiom {\sf CRNM}.
By the coherence assumption on non-members, there exists a linear order $\sigma$ on $V-S$,
  such that $\forall i,j\in V- S$ and $\forall s\in S$,
  $i \succ_{\pi'_s} j \Leftrightarrow i \succ_{\sigma} j$.
Let $v^* $ be the most preferred element of $\sigma$.
{ By the assumption that
$\pi_s(u)=\pi'_s(u)$ for all $s,u\in S$, we have $\pi_s(S)=\pi'_s(S)$
and hence also $\pi_s(V-S)=\pi'_s(V-S)$.  But this implies
that for all $v\in V-S$,
\[
\pi_s(v)\geq \min\{i\in \pi_s(V-S)\}=\min\{i\in \pi'_s(V-S)\}=\pi'_s(v^*).
\]
We therefore have shown that for all
$s,u\in S$ such that
$u \succ_{\pi'_s} v^*$,  we have that
$ u \succ_{\pi_s} v$ for all $ v\in V-S$.
}
Assume now that $S\in \calH((V,\Pi'))$.  Then for all $u\in S$, the majority of
 $(\Pi',S)$ prefer $u$ to $v^*$, which, as we just have shown, implies
 that for all $v\in V-S$, the majority of $(\Pi,S)$  prefer $u$ to $v$,
 which in turn implies that $S\in \calH((V,\Pi))$.
We can similarly show that $\calH$ satisfies Axiom {\sf CRM}.

The set $T$ in the proof of Theorem \ref{th:B3CT} is also an example
  that $\calH$   violates Axiom {\sf GS}.
\end{proof}

While $\calH$ does not satisfy the {\sf GS} Axiom, it satisfies the following weaker property.

\begin{property}{\sc {\sf Weak Group Stability }}
{ For all preference profiles $\Pi$ on $V$ and all
$S\in\calC(V,\Pi)$, $S$ is {\sf  weakly group stable}.
Here a set $S\subset V$ is called  weakly group stable} if
for all  $ G \subset S$, $G' \subset V - S$ s.t.
$0  < |G| = |G'| \leq |S|/2$,
{ and all  bijections $(f: \
G \rightarrow G', i\in S-G)$} there exists $s \in S - G$, $u \in G$
such that $u \succ_{\pi_s}  f(u)$.
\end{property}

{Note that the property is weaker than the {\sf GS} Axiom in two ways: we restrict
ourselves to groups $G$ of size at most $|S|/2$, and we only allow for a global
bijection $f$, rather than individual bijections $f_s$.

\begin{proposition}
$\calH$  is  {\sf weakly group stable}, while the Borda count and the  \B3CT rule are not.
\end{proposition}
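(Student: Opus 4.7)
The plan is to split the argument into two parts: a short majority-counting proof that $\calH$ satisfies Weak Group Stability, followed by explicit preference profiles showing that Borda count and \B3CT do not.

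For $\calH$, I would fix $S\in\calH(V,\Pi)$ and any admissible triple $(G,G',f)$, i.e., $G\subset S$, $G'\subset V-S$ with $|G|=|G'|\leq |S|/2$ and a bijection $f:G\to G'$. Pick any $u\in G$ and set $A_u:=\{s\in S: u\succ_{\pi_s}f(u)\}$. By Definition~\ref{def:harmonious} applied to the pair $(u,f(u))\in S\times(V-S)$, a strict majority of $S$ prefers $u$ to $f(u)$, so $|A_u|>|S|/2\geq |G|$, whence $|A_u\setminus G|\geq |A_u|-|G|>0$. Any element of $A_u\setminus G$ is then the required witness $s\in S-G$ with $u\succ_{\pi_s}f(u)$.

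For the counterexamples, I expect the tiny instance $V=\{1,2,3\}$, $S=\{1,2\}$, $\pi_1=[1,2,3]$, $\pi_2=[2,3,1]$ to suffice for Borda: a quick tally with weight vector $(3,2,1)$ gives Borda scores $4,5,3$ for $1,2,3$, so $S$ occupies the top two slots of $F_W(\Pi_S)$ and is a fixed point, yet for $G=\{1\}$, $G'=\{3\}$, $f(1)=3$ (legal since $|G|=1=|S|/2$) the unique member of $S-G$ is $2$ and satisfies $3\succ_{\pi_2}1$, killing the only possible witness. For \B3CT the constraint $|G|\leq |S|/2$ precludes direct reuse of the group-stability counterexample in the proof of Theorem~\ref{th:B3CT}, and a short counting argument indicates that $|G|=1$ cannot work, so I would use $|S|=4$, $|G|=2$. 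A natural candidate is $V=\{1,\dots,6\}$, $S=\{1,2,3,4\}$, with $\pi_1=[1,2,3,4,5,6]$, $\pi_2=[2,1,3,4,5,6]$, $\pi_3=[3,5,6,1,2,4]$, $\pi_4=[4,5,6,2,1,3]$. Tallying $\phi_S^\Pi$ shows every member of $S$ earns three votes while $5$ and $6$ earn two each, so $S\in\calC_{B^3CT}(V,\Pi)$; and with $G=\{1,2\}$, $G'=\{5,6\}$, $f(1)=5$, $f(2)=6$, both voters $3,4\in S-G$ satisfy $5\succ 1$ and $6\succ 2$, so Weak Group Stability fails.

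The main obstacle will be engineering the \B3CT profile. Two members of $S$ must rank $\{5,6\}$ above $\{1,2\}$ while the remaining two contribute enough top-$|S|$ votes to keep every member of $S$ ahead of both $5$ and $6$ in $\phi_S^\Pi$; balancing these opposing requirements is exactly why the example forces $|S|=4$ rather than $|S|=2$ or $3$ and why the preferences of voters $3$ and $4$ must place $5,6$ immediately after themselves while voters $1,2$ must compensate by pushing $5,6$ to the bottom.
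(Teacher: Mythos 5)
Your proof is correct and follows essentially the same strategy as the paper: the majority-counting argument for $\calH$ (a strict majority of $S$ prefers $u$ to $f(u)$, and since $|G|\leq |S|/2$ this majority cannot be contained in $G$) is identical to the paper's, and your explicit profiles for Borda and \B3CT both check out --- the paper happens to use a single six-element profile with $S=\{1,2,3,4\}$, $G=\{3,4\}$, $G'=\{5,6\}$ that defeats both rules simultaneously, but your two separate examples are equally valid. One incidental remark: your claim that $|G|=1$ ``cannot work'' for \B3CT holds for $|S|=2$ but not in general (with $|S|=4$ and three outsiders one can arrange a \B3CT community violating weak group stability via a singleton $G$); this does not affect your argument, since your $|G|=2$ example is verified directly.
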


\begin{proof}
Consider a set $S\in \calH(V,\Pi)$, subsets
 $G\subset S$ and $G' \subset V - S$ such that $0 <
  |G| =|G'| \leq |S|/2$, and a bijection $f: G\rightarrow G'$.
For each $u\in G$ the  majority
of $S$ prefer $u$ to $f(u)$   (who is not a member of $S$), and since
$|G|\leq |S|/2$, this implies that there must be at least one $s\in S-G$
such that $s$ prefers $u$ to $f(u)$, as required.

To give a counterexample for both Borda counting and the \B3CT rule,
consider $V=[1:6]$, $G=[3:4]$ and $G'=[5:6]$, with preference profiles
\[\pi_1=[125463],\,\pi_2=[126354],\, \pi_3=[341256], \pi_4=[341256].
\]
Then $1$ and $2$ prefer $5$ over $4$, and $6$ over $3$, but
$S$ is a community both with respect to \B3CT (where $1$ and $2$ get four votes,
$3$ and $4$ get three votes, and $5$ and $6$ get only one vote),
and with respect to Borda count (with counts $20, 16, 18, 16,10,8$ for $1,\dots,6$,
respectively).
\end{proof}

\begin{proposition}
$\calH$ satisfies {\sf IOO} 
as
well as {\sf Cq} and the {\sf PE}, but $F_{\calH}$ does not satisfy {\sf U}.
\end{proposition}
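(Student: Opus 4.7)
The plan is to dispatch the three positive properties of $\calH$ by short direct arguments from Definition~\ref{def:harmonious} and facts already established in Theorem~\ref{theo:H}, and then to exhibit a single carefully designed preference profile showing that the aggregation function $F_{\calH}$ can fail Unanimity.

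For {\sf IOO}, I would argue that membership in $\calH(V,\Pi)$ is determined entirely by the restrictions $\{\pi_s\}_{s\in S}$ of the profile to $S$, so the preferences of members of $V-S$ cannot affect whether $S$ is harmonious; alternatively, this follows from Theorem~\ref{theo:H} (which gives {\sf Mon}) together with the earlier proposition that {\sf Mon} implies {\sf IOO}. For {\sf Cq}, since $\calH$ satisfies {\sf WC} and {\sf Emb} by Theorem~\ref{theo:H}, Proposition~\ref{prop:Clique-Null} immediately yields {\sf Cq}; directly, if every $s\in S$ ranks every $u\in S$ above every $v\in V-S$, then a fortiori a strict majority of $S$ does. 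For {\sf PE}, the defining condition of $\calH$ states that for every $u\in S$ and $v\in V-S$ strictly more than half of $S$ prefer $u$ over $v$, so in particular at least one such voter exists, which is exactly the conclusion of {\sf PE}.

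The substantive step is exhibiting a profile for which $F_{\calH}$ violates {\sf U}. Following the Condorcet-paradox pattern, I would take $V=\{a,b,c,d\}$, $S=\{1,2,3\}$, and
\[
\pi_1=[b,a,d,c],\quad \pi_2=[c,b,a,d],\quad \pi_3=[a,d,c,b].
\]
A routine tally shows that all three voters prefer $a$ to $d$, while in each of the pairs $(b,a),(c,b),(d,c)$ a strict majority of $S$ (two out of three) prefers the first alternative to the second. Hence the majority graph $G_{\Pi_S}$ from the proof of Proposition~\ref{prop:harmonious-aggregation} contains the edge $a\to d$ together with the directed path $d\to c\to b\to a$, so $\{a,b,c,d\}$ is a single strongly connected component. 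Consequently $F_{\calH}(\Pi_S)$ places $a$ and $d$ in the same cell of its ordered partition, so $a\not\succ_{F_{\calH}(\Pi_S)} d$ even though every voter in $S$ prefers $a$ to $d$, witnessing the failure of {\sf U}.

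The main obstacle is engineering this last example. The majority cycle has to pass through the unanimously preferred pair $(a,d)$, and the naive attempt to close it through a single intermediate $b$ --- i.e.\ to arrange a majority for $d\to b\to a$ --- is self-defeating, because the two voters supplying $d\succ b$ automatically also satisfy $a\succ d\succ b$, which forces a majority for $a\to b$ and hence precludes the edge $b\to a$. Using two intermediate candidates $b$ and $c$ supplies the extra degree of freedom needed to break this obstruction, and the profile above realises it; everything else in the verification is a bookkeeping exercise.
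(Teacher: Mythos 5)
Your proposal is correct. The three positive claims are handled essentially as in the paper: the paper gets {\sf IOO} from the fact that $\calH$ is a fixed-point rule (Proposition~\ref{prop:fix-point-props}), which is the same observation as your ``membership depends only on $\{\pi_s\}_{s\in S}$''; it gets {\sf Cq} from Proposition~\ref{prop:Clique-Null} exactly as you do; and its {\sf PE} argument is your one-liner that a strict majority of a non-empty $S$ is non-empty. Where you genuinely diverge is the counterexample to {\sf U}. The paper uses two voters and three candidates, $\pi_a=(acb)$, $\pi_b=(bac)$, and exploits the tie convention for even $|S|$: since ``at least half'' puts both $(a,b),(b,a)$ and $(b,c),(c,b)$ into $E_{\Pi_S}$, all of $a,b,c$ collapse into one strongly connected component even though $a$ is unanimously preferred to $c$. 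Your example instead builds a genuine Condorcet cycle $a\to d\to c\to b\to a$ with three voters and four candidates, with every edge supported by a strict $2$-to-$1$ majority; I checked the tallies and they are as you claim, so $a$ and $d$ land in the same cell of $F_{\calH}(\Pi_S)$ despite unanimous preference for $a$ over $d$. Your version is slightly larger but more robust: it does not depend on how one resolves exact ties for even-sized voter sets, and your observation that a three-candidate strict-majority cycle through a unanimous pair is impossible (forcing the fourth candidate) correctly explains why the paper's smaller example must lean on ties. Either route establishes the failure of {\sf U}.
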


\begin{proof}
By Proposition~\ref{prop:fix-point-props}, $\calH$ satisfies {\sf IOO}.
To see that it does not satisfy  {\sf U}, let $V=\{a,b,c\}$,
 let $S=\{a,b\}$
  and $\pi_a=(acb)$, $\pi_b=(bac)$.
  Then $a\succ_{\pi_s} c$ for all $s\in S$, and both
  $a\succ_{\pi_s} b$ and $b\succ_{\pi_s} c$ in  half of $S$.
Therefore $(ac), (cb),(bc),(ab),(ba)\in E_{\Pi_S}$.
Thus, $a,b,c$ belongs to the same connected component in
  $G_{\Pi_S}$, showing that $S$ is not a harmonious community.  To see
  that $\calH$ satisfies both {\sf Cq} and {\sf PE} in spite of the fact
  that it does not satisfy the
assumptions of Proposition~\ref{prop:Pareto},  we use
Proposition~\ref{prop:Clique-Null} to infer {\sf Cq}, and the observation that
 $S\in \calH(A)$ implies that for any a pair of elements $(u\in S,
  v\not\in S)$, the majority of $S$ prefer $u$ over $v$, proving {\sf PE}.
\end{proof}
}

\subsection{Comparison of Borda voting, \B3CT voting, and the harmonious rule}
\label{sec:comp-of-Borda-etc}
In this subsection, we compare
  the fixed-point community rules that we have discussed so far:
  Borda voting, \B3CT voting, and the harmonious rule.
While all three have their own appealing simplicity and intuition and
  all satisfy Axioms {\sf A}, {\sf SA}, {\sf Emb}, {\sf WC},
  {\sf CRM}, and {\sf CRNM},
  there are significant differences with respect
  to Axioms {\sf Mon} and {\sf GS}, and
  the {\sf Outsider Departure} property.

\vspace{0.05in}

\begin{itemize}
\item   {\sf Outsider Departure}:
A harmonious community $S$ remains a harmonious community when
  any outsider $v\not\in S$ leaves the system since
  the departure does not alter any pairwise preferences.
However, for a \B3CT community $S$, the departure of an outsider can increase the votes for
  other outsiders enough to destabilize the \B3CT community.
In a similar way, one can see that the Borda count rule is also unstable to
departure of an outsider.

\item {\sf Monotonicity}: The harmonious rule satisfies
  Axiom {\sf Mon}.
The other two only satisfy the weaker {\sf Outsider Respecting
  Monotonicity} property%
  \footnote{Again, we can use the profiles from the proof of Theorem~\ref{th:B3CT}
  to show that the Borda count rule does not satisfy {\sf Mon}.}.

\item {\sf Group Stability}: The subset $T$ in
  the proof of Theorem \ref{th:B3CT}
  is a community according to all these three community rules.
But $T$ violates {\sf GS} because 1 prefers
  outsiders over 5 and 6, even though 5 and 6 prefer 1
  over everyone else: Element 1 is an
  {\em ``arrogant'' member} of its community.
All aggregation functions
   satisfying {\sf Unanimity} seem to be prone to existence of
  ``arrogant''  members.
The harmonious
  rule satisfies the stability of majority subgroup under a global
  bijection $f$,
  although the stability of the minority subgroup
(or the majority subgroup with
  individual bijections $f_s$)
  may not be
  guaranteed.
The fixed-point rule of  Borda count  and \B3CT voting
  essentially have no guarantee of group stability.

\item {\sf Small World}:
In general, we say a community function $\calC$ satisfies
  the {\sf Small World} property if
\begin{eqnarray*}\label{eqn:smallworlds}
S\in \calC((V,\Pi)) \Leftrightarrow
 \forall U\subseteq V-S, |U|< |S|, S\in \calC(S\cup U, \Pi|_{S\cup U}).
\end{eqnarray*}
This Helly-type property \cite{Helly} localizes the identification of a
community.  Note that the {\sf Small World} property includes some form of
  {\sf Outsider Departure} together
  with the property that every community is ``locally'' verifiable.
One can easily show that the fixed-point rules of the Borda count
  or \B3CT voting do not have the {\sf Small World} property,
  while the harmonious rule  enjoys the following
  stronger variant of the small world property
\begin{eqnarray*}\label{eqn:smallworldsHarm}
S\in \calH((V,\Pi)) \Leftrightarrow
 \forall v \in V-S,  S\in \calH(S\cup \{v\}, \Pi|_{S\cup \{v\}}),
\end{eqnarray*}
and hence the property
  given in (\ref{eqn:smallworlds}).

\end{itemize}

\section{Taxonomy of Community Rules}\label{sec:taxonomy}

In this section,
we characterize the taxonomy
of the axiom-conforming community rules.

First, in Section~\ref{sec:tax-theorem}, we define two rules, the
Clique Rule and the Comprehensive Rule, which satisfy all axioms, and
which are most selective and most comprehensive, respectively, in the
sense that any rule which satisfies all axioms leads to a set of
communities which contains all communities defined by the Clique Rule
and is contained in the Comprehensive Rule (the statement that this is
the case, Theorem~\ref{th:taxonomy}, will be our main theorem in this
subsection).

In the next subsection, Section~\ref{sec:lattice}, we
then expand on this ``Taxonomy Theorem'', and show that under the
following natural intersection and union of community rules,
 the family of all community rules that
satisfies all eight axioms forms a bounded lattice.
We will use the following two set-theoretic operators
  of community functions to define these lattice structures.

\begin{definition}[Operations over Community Rules]
For two community functions $\calC_1$ and $\calC_2$, we define the intersection and union,
$\calC_1\cap \calC_2$ and $\calC_1\cup \calC_2$, as the community functions which, for all
preference networks $A$, respectively satisfy
$$
\begin{aligned}
(\calC_1\cap \calC_2)(A)& := \calC_1(A)\cap \calC_2(A)\\
(\calC_1\cup \calC_2)(A)&: = \calC_1(A)\cup \calC_2(A).
\end{aligned}$$
\end{definition}

\subsection{From the Most Selective to Most Comprehensive Rule}
\label{sec:tax-theorem}

We start with perhaps the simplest rule for communities that satisfies
  the {\sf Clique} Property.
\begin{Rule}[Clique Rule  ($\calC_{clique}$)]
A  non-empty subset
 $S\subseteq V$ is a community of  $A = (V,\Pi)$, if and only if
 $\forall u,s \in S$, $v \notin S$, $u \succ_{\pi_s} v$.
We use $\calC_{clique}$
  to denote the community function defined by this rule.
\end{Rule}

\begin{proposition}
	$\calC_{clique}$ satisfies all Axioms.
\end{proposition}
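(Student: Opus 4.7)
The plan is to verify each of the eight axioms in turn, exploiting a single convenient reformulation of the Clique Rule: $S\in\calC_{clique}(V,\Pi)$ if and only if for every $s\in S$ the top $|S|$ positions of $\pi_s$ are occupied exactly by the members of $S$, i.e.\ $\pi_s^{-1}([1:|S|])=S$. This positional characterization makes each axiom essentially a direct verification, so I do not anticipate any single step being genuinely difficult; the main care is just in tracking which pieces of each preference change under the various coherence and monotonicity hypotheses.

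I will dispose of the four ``shape-preserving'' axioms first. For {\sf WC}, when $S=V$ the quantifier over $v\in V-S$ is vacuous, so $V$ is always a clique. {\sf A} is immediate since the clique condition is stated purely in terms of relative rankings, which are preserved under any simultaneous relabeling of $V$ and $\Pi$. For {\sf Mon}, if $S$ is a clique under $\Pi'$ and the hypothesis of Axiom~{\sf Mon} holds, then for every $s\in S$, $u\in S$, $v\in V-S$ we have $u\succ_{\pi'_s} v$, whence $u\succ_{\pi_s} v$, so $S$ remains a clique. For {\sf Emb}, I will first note that the hypothesis $\pi_i(j)=\pi'_i(j)$ for all $i,j\in V'$ forces $V'$ itself to be a clique in $A$ (the ranks of $V'$-members exhaust $[1:|V'|]$ in each $\pi_i$ with $i\in V'$), so any outsider $v\in V-V'$ automatically satisfies $u\succ_{\pi_s} v$ for $u,s\in S\subseteq V'$. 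The remaining $v\in V'-S$ satisfy $u\succ_{\pi_s} v\Leftrightarrow u\succ_{\pi'_s} v$ by the embedding assumption, yielding $\calC_{clique}(A')=\calC_{clique}(A)\cap 2^{V'}$.

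For the two coherence axioms I will use the positional characterization. Under {\sf CRM}, the hypothesis gives $\pi'_s(v)=\pi_s(v)$ for $v\notin S$, so the set of positions occupied by non-members is identical in $\pi_s$ and $\pi'_s$; therefore if $\pi'_s$ places $S$ in the top $|S|$ slots, so does $\pi_s$. Under {\sf CRNM}, the hypothesis gives $\pi'_s(u)=\pi_s(u)$ for $u\in S$, and if $S$ is a clique in $\Pi'$ these ranks lie in $[1:|S|]$, hence in $\pi_s$ the top $|S|$ slots are again exactly $S$. So $\calC_{clique}$ respects both {\sf CRM} and {\sf CRNM}.

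Finally, the two game-theoretic axioms follow from the fact that being a clique is a very strong preference condition. For {\sf GS}, fix any nonempty $G\subsetneq S$, any equal-sized $G'\subset V-S$, and any tuple $(f_s:G\to G')_{s\in S-G}$. Pick any $s\in S-G$ (which is nonempty since $G\subsetneq S$) and any $u\in G$; since $S$ is a clique, $u\in S$ is preferred by $s$ to $f_s(u)\in V-S$, providing the required witness. For {\sf SA}, for any $G'\subseteq V-S$ with $|G'|=|S|$ and any $(f_s:S\to G')_{s\in S}$, pick any $s\in S$ and any $u\in S$; again $u\succ_{\pi_s} f_s(u)$ by the clique property. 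Combining all eight verifications yields the proposition.
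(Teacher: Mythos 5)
Your verification is correct in all eight cases, and the positional characterization $\pi_s^{-1}([1:|S|])=S$ does make each axiom a one-line check; in particular you rightly observe that the embedding hypothesis forces $V'$ to be a clique in the larger network, which is the only point requiring any care. The paper itself gives no argument here (the proof is ``left as an exercise for the reader''), so there is nothing to compare against; your write-up simply supplies the intended routine verification in full.
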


\begin{proof}
The (easy) proof is left as an exercise for the reader.
\end{proof}

However, the clique rule appears to be too restrictive, since it has the
following structural feature, which essentially rules out any non-trivial overlap of communities, while
 ``Real-world'' communities typically have non-trivial overlaps among themselves.

\begin{proposition}
For any preference network $A$, if $S_1, S_2\in \calC_{clique}(A)$, then either $S_1\cap S_2
= \emptyset$ or  $S_1 \subset S_2$, or $S_2 \subset S_1$.
\end{proposition}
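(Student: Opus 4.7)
The plan is to prove the contrapositive-style statement by contradiction: assume $S_1\cap S_2\neq\emptyset$ and that neither $S_1\subseteq S_2$ nor $S_2\subseteq S_1$, and derive a contradictory pair of strict preferences for a single voter.

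First I would pick three witnesses: under the negation of the conclusion, there exist $a\in S_1\setminus S_2$, $b\in S_2\setminus S_1$, and $c\in S_1\cap S_2$. The key is that $c$ sits in both cliques, so both clique conditions constrain $\pi_c$ simultaneously. Applying the definition of $\calC_{clique}$ to $S_1$ with $s=c$, $u=a\in S_1$, and $v=b\notin S_1$ yields $a\succ_{\pi_c} b$. Applying the same definition to $S_2$ with $s=c$, $u=b\in S_2$, and $v=a\notin S_2$ yields $b\succ_{\pi_c} a$. These two statements contradict the fact that $\pi_c\in L(V)$ is a total order.

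There is no real obstacle here; the only thing to be careful about is the degenerate cases. If $S_1\cap S_2=\emptyset$ or if one is contained in the other, the conclusion holds trivially, so the assumption that witnesses $a,b,c$ exist is exactly the negation of the conclusion. The argument uses nothing beyond the definition of $\calC_{clique}$ and does not require any of the community axioms themselves, which makes the proof a clean two-line observation once the three witnesses are named.
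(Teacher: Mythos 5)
Your proof is correct. Both you and the paper argue by contradiction from an element $c$ (the paper's $s$) in $S_1\cap S_2$, but the mechanisms differ. The paper takes only one witness $s'\in S_1\setminus S_2$, assumes WLOG $|S_1|\le |S_2|$, and runs a rank/counting argument: the clique condition for $S_1$ forces $\pi_s(s')\le |S_1|\le |S_2|$, so $s'$ occupies one of $s$'s top $|S_2|$ positions even though $s'\notin S_2$, contradicting the clique condition for $S_2$. You instead take witnesses on both sides of the symmetric difference, $a\in S_1\setminus S_2$ and $b\in S_2\setminus S_1$, and extract the directly contradictory pair $a\succ_{\pi_c}b$ and $b\succ_{\pi_c}a$ from the two clique conditions. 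Your version is symmetric, avoids the size comparison entirely, and works purely with the pairwise formulation of the Clique Rule rather than with positions in the ranking; the paper's version needs one fewer witness but pays for it with the WLOG and the pigeonhole step. One small point in your favor: you correctly negate the conclusion using $\subseteq$ (so that $S_1=S_2$ is covered), which is the reading the paper's own proof implicitly adopts despite the statement being written with $\subset$.
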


\begin{proof}
	Assume otherwise.  By assumption, we can choose an element $s \in S_1 \cap S_2$.  Without loss of generality assume $|S_1| \le |S_2|$.  Again by assumption, there exists an element $s' \in S_1$ and $s' \notin S_2$.  By the definition of the $\calC_{clique}$ $s$ must have $s'$ in its top $|S_1|$ choices.  However, this means that $s'$ is also in the top $|S_2|$ choices for $s$, which violates the fact that $S_2$ is in $\calC_{clique}(A)$.
\end{proof}

Next we address the  question of whether there are
rules consistent with all axioms that
admit overlapping communities.
To address this question,
we consider rules defined by community axioms.

\begin{Rule} [Axiom Based Community Rules]
For
  $X\in \{ \mbox{\sf  GS, SA}\}$
let $\calC_{X}$ be the community rule defined by
$A = (V,\Pi)\mapsto \calC_X(A)$, where
$\calC_{X}(A)$ is the set of non-empty subsets $S\subset V$ such that
$S$ obeys axiom {\sf X}.
\end{Rule}

For example, $\calC_{GS}$ denote the  community rule that
  $S\in \calC_{GS}(A)$ if and only $S$ enjoys the {\sf Group
  Stability} Axiom.

The first part of our Taxonomy Theorem is a direct consequence
  of the following basic lemma.

\begin{lemma}{\sc (Intersection Lemma: {\sf GS} and {\sf  SA})}\label{lem:intersection}
For  $X\in \{ \mbox{\sf A, Mon, CRM, CRNM, WC, Emb}\}$,
  if  $\calC$ satisfies Axiom $X$,
  then $\widetilde{\calC} = \calC \cap \calC_{GS}\cap \calC_{SA}$ satisfies
Axioms $X$, {\sf GA} and {\sf SA}.
\end{lemma}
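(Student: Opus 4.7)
The plan is to split the assertion into two independent pieces: first, that $\widetilde{\calC}$ automatically inherits {\sf GS} and {\sf SA} from being included in $\calC_{GS}\cap\calC_{SA}$; second, that Axiom $X$ is preserved by the intersection operation. For the first piece, the inclusion $\widetilde{\calC}(A)\subseteq \calC_{GS}(A)\cap\calC_{SA}(A)$ holds by construction, and $\calC_{GS}(A)$ (resp.\ $\calC_{SA}(A)$) is by definition exactly the collection of group-stable (resp.\ self-approving) subsets, so any $S\in\widetilde{\calC}(A)$ is both group stable and self-approving with respect to~$\Pi$.

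For the second piece, my strategy is to verify that both $\calC_{GS}$ and $\calC_{SA}$ satisfy every axiom in $\{\mbox{\sf A, Mon, CRM, CRNM, WC, Emb}\}$. Once this is done, all three factors of the intersection $\widetilde{\calC}=\calC\cap\calC_{GS}\cap\calC_{SA}$ satisfy~$X$, and since each of these six axioms takes a form stable under intersection---a symmetry condition ({\sf A}), a universal inclusion ({\sf WC}), a conditional implication $S\in\calC(V,\Pi')\Rightarrow S\in\calC(V,\Pi)$ under a hypothesis on $(\Pi,\Pi')$ ({\sf Mon}, {\sf CRM}, {\sf CRNM}), or a compatibility identity with the embedding construction ({\sf Emb})---the intersection $\widetilde{\calC}$ inherits~$X$. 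The routine verifications proceed as follows: {\sf A} holds because the defining conditions of {\sf GS} and {\sf SA} are purely order-theoretic and invariant under simultaneous relabeling of $V$ and $\Pi$; {\sf WC} holds because taking $S=V$ makes the relevant quantifier (over $G'\subset V\setminus V$ of positive size) vacuous; {\sf Emb} holds because, under the embedding hypothesis that members of $V'$ prefer members of $V'$ over any element of $V\setminus V'$, any potential destabilizing set $G'$ of $A$ meeting $V\setminus V'$ is automatically refuted, reducing the {\sf GS} and {\sf SA} tests for $S\subseteq V'$ in $A$ to the corresponding tests in $A'$; and {\sf Mon} holds because any witness $u\in G,s\in S-G$ with $u\succ_{\pi'_s}f_s(u)$ satisfies $u\in S$ and $f_s(u)\in V$, so the {\sf Mon} hypothesis transports the same witness to~$\pi_s$.

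The main obstacle lies in the {\sf CRM} and {\sf CRNM} cases for $\calC_{GS}$ and $\calC_{SA}$. For {\sf CRM}, my plan is to invoke Proposition~\ref{prop:lex} to rewrite the group-stability condition in the equivalent sorted-sequence form, and then to exploit the {\sf CRM} hypothesis that the members of $S$ share a common order $\sigma$ on $S$ in $\Pi'$ while non-members occupy identical absolute ranks in $\Pi$ and $\Pi'$. The key observation is that in both $\pi_s$ and $\pi'_s$ members occupy the same set of positions $P(s)$, only rearranged; the common order $\sigma$ in $\Pi'$ then lets us realign the sorted lex comparisons to transport a witness in $\Pi'$ into one for~$\Pi$. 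The {\sf CRNM} case proceeds symmetrically by exchanging the roles of $S$ and $V\setminus S$, and the {\sf SA} analogues of both arguments follow by parallel bijection chasing on lex orders. These are the only steps requiring nontrivial manipulation of partial lex orders and therefore form the technical heart of the argument.
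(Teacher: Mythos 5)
Your overall decomposition --- first observing that $\widetilde{\calC}\subseteq\calC_{GS}\cap\calC_{SA}$ forces {\sf GS} and {\sf SA}, then checking that each of the six remaining axioms is preserved under the intersection by verifying it for $\calC_{GS}$ and $\calC_{SA}$ --- is the same route the paper takes, and your treatments of {\sf A}, {\sf WC}, {\sf Emb} and {\sf Mon} are fine. (Your {\sf Mon} argument, transporting the stability pair $(s,u)$ with $u\in S$ and $f_s(u)\in V$ directly through the monotonicity hypothesis, is a clean direct version of the paper's contrapositive.)

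The gap is in the {\sf CRM} case, which you correctly flag as the technical heart but leave as a sketch whose stated mechanism would fail. You propose to ``realign the sorted lex comparisons to transport a witness in $\Pi'$ into one for $\Pi$.'' First, the useful direction is the contrapositive: a destabilizing pair $(G,G')$ witnessing $S\notin\calC_{GS}(V,\Pi)$ must be converted into one witnessing $S\notin\calC_{GS}(V,\Pi')$. Second, and more importantly, for {\sf CRM} the \emph{same} pair $(G,G')$ does not transport in either direction: rearranging the members of $S$ into the common order $\sigma$ can promote a member of $G$ in everyone's ranking, so that $(G,G')$ acquires a stability witness in $\Pi'$ even though it had none in $\Pi$. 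The missing idea --- and the actual content of the paper's argument --- is to \emph{replace} $G$ by the set $T$ of the $|G|$ least $\sigma$-preferred members of $S$, and then combine $\pi_s(S)=\pi'_s(S)$ with Proposition~\ref{prop:lex} to obtain $\pi'_s(t_i)\geq\pi_s(g_i)>\pi_s(g'_i)=\pi'_s(g'_i)$. This replacement also breaks the ``symmetry'' you invoke for {\sf CRNM}: there one replaces $G'$ by the most $\tau$-preferred outsiders while the voter set $S-G$ stays fixed, whereas for {\sf CRM} replacing $G$ by $T$ changes the voter set from $S-G$ to $S-T$, and the lexicographic-preference hypothesis is only available for voters in $S-G$; the voters in $G\setminus T$ require a separate argument (a point the paper's own write-up also glosses over, and which is where the real difficulty of this case sits). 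As written, your plan contains no group-replacement step, so the {\sf CRM} (and the {\sf GS}-part of {\sf CRNM}) verification is not established.
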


\begin{proof}
$\calC_{GS}$ and $\calC_{SA}$
  are both consistent with  {\sf A}, {\sf WC}, and {\sf Emb}, thus
  if $\calC$ satisfies Axiom $X\in \{ \mbox{\sf A,  WC, Emb}\}$,
   then
  $\widetilde{\calC}$ remains consistent with Axiom X.

To see $\widetilde{\calC}$ satisfies Axiom {\sf Mon} if $\calC$ satisfies {\sf Mon},
  choose $\Pi,\Pi'$ such that, for all $u,s\in S$ and $v\in V$, $u\succ_{\pi'_s} v \Longrightarrow
  u\succ_{\pi_s} v$.
We need to show that if $S \in \widetilde{\calC}((V,\Pi'))$ then
  $S\in \widetilde{\calC}((V,\Pi))$.
Suppose this is not the case, then either
 (1)  $S\not\in \calC_{GS}((V,\Pi))$  or (2)
  $S\not\in \calC_{SA}((V,\Pi))$.
In Case (1), there exists $G \subset S$,  $G'\subset V-S$, $|G| =
  |G'|$, and
  bijections $(f_s: S\rightarrow G' | s\in S-G)$ such that $\forall s\in S-G, \forall
  u\in G$,  $u\prec_{\pi_s} f_s(u)$.
Then by the condition stated in {\sf Mon}, we have $u\prec_{\pi'_s} f_s(u)$,
  which shows $S\not\in \calC_{GS}(A')$.
In Case (2), there exists $G'\subset V-S$,
  bijections $(f_s: S\rightarrow G')$ such that $\forall s, u\in S$,
   $u\prec_{\pi_s} f_s(u)$.
Then by the condition stated in {\sf Mon}, we have  $u\prec_{\pi'_s} f_s(u)$,
  which implies that $S\not\in \calC_{GS}(A')$.

Suppose $\calC$ satisfies Axiom {\sf CRM}.
Consider  $\Pi,\Pi'$ as specified in Axiom {\sf CRM}.
{ Given $s\in S$,  the profiles $\pi_s$ and $\pi'_s$ are then assumed
to be identical on $V-S$, implying in particular that $\pi_s(V-S)=\pi'(V-S)$, and hence
 also that $\pi_s(S)=\pi'_s(S)$.  Furthermore, by the coherence assumption for members, there
exist $\sigma\in L(S)$}
such that $\forall u_1,u_2, s\in S$,
$u_1 \succ_{\pi'_s} u_2$ iff $ u_1 \succ_{\sigma} u_2$.
We need to show that if   $S\in \widetilde{\calC}((V,\Pi'))$  then
 $S \in \widetilde{\calC}(A)$.
Suppose this is not the case, then either
 (1)  $S\not\in \calC_{GS}(A)$  or (2)
  $S\not\in \calC_{SA}(A)$.

In Case (1),   there exists $G\subset S$, $G'\subset V-S$,
  $|G| = |G'|$, a set of
  bijections
  $(f_s: G\rightarrow G', s\in S-G)$, such that $\forall s\in S-G,
  u\in G$,
  $u\prec_{\pi_s} f_s(u)$.
Let $T \subset S$ be the set of $|G|$ least preferred
 elements  by $\sigma$.
We now show that there exists bijections
   $(f'_s: T\rightarrow G', s\in S)$ such that
 $\forall s\in S-T,   u\in T$,  $u\prec_{\pi'_s} f'_s(u)$,
   which would imply that $S\not\in \calC_{GS}((V,\Pi'))$.

Let us denote $T$ by $T=\{t_1,...,t_{|T|}\}$ such that
  $t_i \prec_{\sigma} t_{i+1}$.
Fix an $s\in S-T$, and let us denote
  $G$ by $G = \{g_1,...,g_{|T|}\}$ such that
  $g_i \prec_{\pi_s} g_{i+1}$, and
  denote $G'$ by $G' = \{g'_1,...,g'_{|T|}\}$ such that
  $g'_i \prec_{\pi_s} g'_{i+1}$.
By Proposition~\ref{prop:lex}, we then have that $g_i \prec_{\pi_s} g'_i$
 for all $i=1,\dots,|T|$.
In other words, $\pi_s(g_i) > \pi_s(g'_i)$.
We define $f_s'$ by mapping $t_i$ to $g'_i$.
Note that the positions of the preferences rankings
  of $S$ as a set are the same in $\pi'_s$ and $\pi_s$.
Because $T$ is the set of $|G|$ least preferred
 elements of $S$, we have  $\pi'_s(t_i) > \pi_s(g_i)$.
 Since $\pi'_s(g'_i) = \pi_s(g'_i)$
it then follows that
  $\pi'_s(t_i) > \pi_s(g_i) > \pi_s(g'_i) = \pi_s(g'_i)$.
Thus, $t_i \prec_{\pi'_s} g'_i$, and consequently,
  $S\not\in \calC_{GS}((V,\Pi'))$.
In Case (2),   there exists $G'\subset V-S$ and    a set of
  bijections
  $(f_s: S\rightarrow G', s\in S)$, such that $\forall s, u\in S$,
  $u\prec_{\pi_s} f_s(u)$.
By the similar argument as in Case (1) (by setting $T = S$), we can show that
  there exists bijections
   $(f'_s: S\rightarrow G', s\in S)$ such that
 $\forall s\in S,   u\in S$,  $u\prec_{\pi'_s} f'_s(u)$,
   which implies that $S\not\in \calC_{GS}((V,\Pi'))$.
Thus, $\widetilde{\calC}$ satisfies Axiom {\sf CRM}.

We can similarly prove that $\overline\calC$ satisfies
{\sf CRNM} if $\calC$ satisfies it.

Finally, by definition, $\calC\cap \calC_{GS}\cap\calC_{SA}$ satisfies
  {\sf GS} and {\sf SA}.
\end{proof}
	
\begin{Rule}{\sc (Comprehensive Community Rule)}
For a preference  network $A = (V,\Pi)$, a non-empty
 $S\subseteq V$ is a community according to $\calC_{comprehensive}$
   if and only if $S$
  satisfies both {\sf Group Stability} and {\sf Self-Approval} axioms.
In other words,
\[ \calC_{comprehensive} : = \calC_{GS}\cap\calC_{SA}.\]
\end{Rule}

We now prove that $\calC_{comprehensive}$ is indeed the most
  comprehesive community rule that satisfies all Axioms.

\begin{theorem}[Taxonomy: Lattice Top and Bottom]\label{th:taxonomy}
$\calC_{comprehensive}$ satisfies all Axioms.
Moreover, for any community function $\calC$ that satisfies all
Axioms, for every preference  network $A = (V,\Pi)$
\begin{eqnarray}\label{eqn:tax}
\calC_{clique}(A) \subseteq \calC(A) \subseteq \calC_{comprehensive}(A).
\end{eqnarray}
\end{theorem}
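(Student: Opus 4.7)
The plan is to split the theorem into two independent tasks: first, showing that $\calC_{comprehensive}=\calC_{GS}\cap\calC_{SA}$ satisfies all eight axioms, and second, establishing the two inclusions in \eqref{eqn:tax} for an arbitrary axiom-conforming $\calC$. Both parts are already set up by the results established immediately before the theorem, so the work amounts to assembling them in the right way.

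For the first task, axioms {\sf GS} and {\sf SA} are immediate from the definition of the intersection $\calC_{GS}\cap\calC_{SA}$. For the other six axioms {\sf A}, {\sf Mon}, {\sf CRM}, {\sf CRNM}, {\sf WC}, {\sf Emb}, I would invoke the Intersection Lemma (Lemma \ref{lem:intersection}) applied to the trivial rule $\calC_{all}$ which declares every non-empty subset of $V$ to be a community. A direct one-line check for each of the six axioms shows $\calC_{all}$ satisfies them --- each axiom has the conditional form ``if $S$ is a community in some profile or subnetwork then it is a community in another one'', which is trivially true when every non-empty set is a community, and for {\sf Emb} both sides of the equality reduce to $2^{V'}-\{\emptyset\}$. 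The Intersection Lemma then gives $\calC_{all}\cap\calC_{GS}\cap\calC_{SA}=\calC_{comprehensive}$ satisfies each of the six, completing Part~(1).

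For the upper inclusion $\calC(A)\subseteq\calC_{comprehensive}(A)$, I would simply observe that if $\calC$ satisfies {\sf GS} and {\sf SA}, then every $S\in\calC(A)$ is by definition group stable and self-approving, so $S\in\calC_{GS}(A)\cap\calC_{SA}(A)=\calC_{comprehensive}(A)$. For the lower inclusion $\calC_{clique}(A)\subseteq\calC(A)$, I would use Proposition \ref{prop:Clique-Null}: since $\calC$ satisfies {\sf WC} and {\sf Emb}, it satisfies the {\sf Cq} Property. A set $S\in\calC_{clique}(A)$ is by definition a clique (every $s\in S$ prefers every $u\in S$ to every $v\notin S$), so by {\sf Cq} we conclude $S\in\calC(A)$.

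I do not anticipate any real obstacle. The Intersection Lemma has done the genuine work for Part~(1), and Proposition \ref{prop:Clique-Null} does the genuine work for the clique inclusion in Part~(2); the only care needed is to verify that $\calC_{all}$ indeed satisfies the six ``background'' axioms so that the Intersection Lemma applies, and this is purely mechanical. The conceptual content of the theorem --- that {\sf GS} and {\sf SA} together capture precisely the ``upper envelope'' of axiom-conforming rules while the clique rule is the universal lower bound --- is exactly what these two earlier results were engineered to deliver.
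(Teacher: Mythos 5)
Your proposal is correct and follows essentially the same route as the paper's own proof: it establishes the first part by applying the Intersection Lemma to $\calC_{all}$ (after checking that $\calC_{all}$ satisfies the six background axioms), and obtains the two inclusions from Proposition~\ref{prop:Clique-Null} together with the definition of $\calC_{comprehensive}=\calC_{GS}\cap\calC_{SA}$. No gaps.
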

\begin{proof}	
$\calC_{all}(A) = 2^V{-\{\emptyset\}}$ satisfies Axioms {\sf A},
{\sf Mon}, {\sf CRM, CRNM}, {\sf WC} and {\sf Emb}.
Since
  $\calC_{comprehensive}= \calC_{all} \cap \calC_{GS}\cap \calC_{SA}$,
by the Intersection Lemma, 
  $\calC_{comprehensive}$ satisfies all Axioms.

{On the other hand, by Proposition~\ref{prop:Clique-Null}, any rule which satisfies {\sf WC} and {\sf Emb}, must satisfy the {\sf Cliques} Property, so} for any $\calC$ that satisfies all axioms,
  ${\calC_{clique}(A) \subseteq} \calC(A) \subseteq \calC_{GS}(A)\cap \calC_{SA}(A)$.
Thus
$\calC_{clique}(A) \subseteq \calC(A) \subseteq \calC_{comprehensive}(A).$
\end{proof}

Theorem \ref{th:taxonomy} shows that $\calC_{comprehensive}$ and
  $\calC_{clique}$ are the
  most inclusive and the most selective function,
  respectively, that satisfies all axioms.
While it is very easy to determine whether a subset in a preference
  network
  satisfies Property {\sf Clique},
  in  Section~\ref{Sec:complexity}
  we demonstrate that $\calC_{comprehensive}$ is highly
``non-constructive'' by showing that the decision problem for
  determining whether a subset in a preference network
  satisfies Axiom {\sf Self-Approval} or {\sf Group Stability}
  is {\sf coNP-complete}.

\subsection{The Lattice Structure of Community Rules}
\label{sec:lattice}

The Intersection Lemma provides us with
  a tool for exploring the taxonomy of community rules.
  In this subsection, we continue this exploration and make it more systematic using two lattice
  structures enjoyed by the community-rule taxonomy.

\begin{theorem}[Taxonomy: Lattice Structures of Community Rules]
\label{thm:lattice}
Let $\calbfC$ denote the family of all community rules that satisfies all eight axioms.
Let $\calbfC_B$ be a superset of $\calbfC$ that denotes the family of all
  community rules that
  satisfies Axioms {\sf A, Mon, CRM, CRNM, WC, Emb}.
\begin{enumerate}
\item  The algebraic structure $\calT = (\calbfC,\cup,\cap,\calC_{clique},\calC_{comprehensive})$ forms a bounded lattice with $\calC_{clique}$ as the lattice's bottom and $\calC_{comprehensive}$
  as the lattice's top.
\item  The algebraic structure
  $\calT_B = (\calbfC_B,\cup,\cap, \calC_{clique},\calC_{all})$ forms a bounded lattice
  with $\calC_{clique}$ as the lattice's bottom and $\calC_{all}$
  as the lattice's top.
\end{enumerate}
\end{theorem}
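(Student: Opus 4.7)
The plan is to reduce both parts to two checks: (i) that $\calbfC$ and $\calbfC_B$ are closed under the pointwise operations $\cap$ and $\cup$, and (ii) that the claimed bottom and top elements really lie in, and bound, the respective families. Once closure holds, all lattice identities (commutativity, associativity, idempotency, absorption) are inherited for free from the pointwise set-theoretic operations $\cap$ and $\cup$ on $2^{2^V-\{\emptyset\}}$, so no additional algebraic verification is needed.

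For the closure step I would simply run down the eight axioms and observe that each one is of a form that is preserved by both $\cap$ and $\cup$. Axioms {\sf A}, {\sf Mon}, {\sf CRM}, {\sf CRNM} all have the shape ``$S\in\calC(V,\Pi')\Rightarrow S\in\calC(V,\Pi)$'' under some hypothesis on $(\Pi,\Pi')$; this implication form is preserved by both intersection and union of the rules (if the hypothesis holds, the implication survives in each coordinate). Axiom {\sf WC} says $V\in\calC(V,\Pi)$, which is preserved by $\cap$ since $V$ lies in both, and trivially by $\cup$. Axiom {\sf Emb} asserts an equality $\calC(A')=\calC(A)\cap 2^{V'}$, which distributes over both $\cap$ and $\cup$ on the left-hand side. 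Finally, {\sf GS} and {\sf SA} are ``upper-bound'' restrictions saying $\calC(A)$ is contained in a certain fixed family of stable/self-approving subsets; these are preserved by $\cap$ (intersecting with anything keeps the containment) and by $\cup$ (union of two subfamilies of a given family remains in that family). This yields closure of $\calbfC_B$ under $\cap,\cup$, and adding the {\sf GS}, {\sf SA} checks gives closure of $\calbfC$.

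For the bounds, Theorem~\ref{th:taxonomy} already provides $\calC_{clique}(A)\subseteq\calC(A)\subseteq\calC_{comprehensive}(A)$ for every $\calC\in\calbfC$, giving the top and bottom of $\calT$ immediately. For $\calT_B$, the rule $\calC_{all}(A):=2^V-\{\emptyset\}$ vacuously satisfies all six axioms in the definition of $\calbfC_B$ (each axiom's conclusion is automatic when every subset is a community) and is pointwise maximal, hence is the top. For the bottom of $\calT_B$, any $\calC\in\calbfC_B$ satisfies {\sf WC} and {\sf Emb}, so by Proposition~\ref{prop:Clique-Null} it satisfies the {\sf Clique} Property, which is exactly the statement $\calC_{clique}(A)\subseteq\calC(A)$; and $\calC_{clique}\in\calbfC_B$ because it satisfies all eight axioms.

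No step is genuinely hard: the only place that requires a moment of thought is making sure every axiom really is preserved by both operations, which is why I would phrase the closure argument as a short axiom-by-axiom walk. The likely pitfall to watch out for is a temptation to argue closure of the {\sf GS}/{\sf SA} axioms ``at the level of $\calC$'' rather than noticing they are set-containment axioms on the image; recognizing this makes the union case (which might otherwise look suspicious) immediate.
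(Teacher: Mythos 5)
Your proposal is correct and follows essentially the same route as the paper: lattice identities come for free from the pointwise set operations, closure is checked axiom by axiom (implication-form axioms, the {\sf Emb} equality distributing over $\cap$ and $\cup$, and {\sf GS}/{\sf SA} as containment conditions on the image), and the bounds follow from Theorem~\ref{th:taxonomy} and Proposition~\ref{prop:Clique-Null}. If anything, your observation that {\sf GS} and {\sf SA} are ``upper-bound'' restrictions preserved trivially by both operations is a slightly cleaner way to phrase what the paper handles by saying ``argue as for {\sf Mon}.''
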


\begin{proof}
First, by definition, the two operations $\cap$ and $\cup$ over the community functions
  are both communitative and associative.
One can easily show that the two operations
  $\cap$ and $\cup$ satisfy the {\em absorption property}, that is,
   for any two $\calC_1,\calC_2\in \calbfC$
\begin{eqnarray*}
\calC_1 \cup (\calC_1\cap \calC_2) & = & \calC_1.\\
\calC_1 \cap (\calC_1\cup \calC_2) & = & \calC_1.
\end{eqnarray*}
For example, to see the first one, for any affinity network $A$, we have
\begin{eqnarray*}
(\calC_1 \cup (\calC_1\cap \calC_2))(A) = \calC_1(A) \cup (\calC_1\cap \calC_2)(A) =
 \calC_1(A) \cup (\calC_1(A)\cap \calC_2(A)) =  \calC_1(A).
\end{eqnarray*}

To complete the proof that $\calT$ and $\calT_B$  are lattices, we need to prove
  that  $\calT$ and $\calT_B$ are  closed under $\cap$ and $\cup$.
We organize the arguments as following:
\begin{itemize}
\item {\sf A, WC}: it is obvious that if $\calC_1$ and $\calC_2$ satisfies Axioms {\sf A} and ${\sf WC}$ then both $\calC_1\cup \calC_2$ and $\calC_1\cap \calC_2$ also satisfies Axioms {\sf A, WC}.
\item {\sf Mon, CRM, CRNM}:
Suppose $A= (V,\Pi)$, $A' = (V,\Pi')$, and $S\subset V$ are, respectively,
  two preference networks and a set considered in Axiom {\sf Mon}.
Then if $\calC_1$ and $\calC_2$ satisfy {\sf Mon}, we have
  $S\in \calC_i(A') \Rightarrow S\in C_i(A)$ for $i\in {1,2}$.
Thus, if $S\in \calC_1(A')\cap \calC_2(A')$ then $S\in \calC_1(A)\cap \calC_2(A)$, and
if $S\in \calC_1(A')\cup\calC_2(A')$ then $S\in \calC_1(A)\cup \calC_2(A)$.
Thus, both $\calC_1\cup \calC_2$ and $\calC_1\cap \calC_2$ also satisfy Axioms {\sf Mon}.
We can  argue analogously for Axioms {\sf CRM} and {\sf CRNM}.
\item {\sf Emb}: If both $\calC_1$ and $\calC_2$ satisfy {\sf Emb}, then
for any $A = (V,\Pi)$ and any ``embedded world'' $A' = ( V', \Pi')$
{such that $\Pi,\Pi'$ satisfy the assumption} of Axiom {\sf Emb}, we have
${\cal C}_i(A') = {\cal C}_i(A)\ \cap\ 2^{V'}$ for $i\in \{1,2\}$.
So
\begin{eqnarray*}
\calC_1(A')\cap \calC_2(A') &  = & \left(\calC_1(A) \cap\ 2^{V'}\right)\cap \left(\calC_2(A)\
\cap\ 2^{V'}\right)
= \left(\calC_1(A)\cap  \calC_2(A)\right) \cap\ 2^{V'} \\
\calC_1(A')\cup \calC_2(A') &  = & \left(\calC_1(A) \cap\ 2^{V'}\right)\cup \left(\calC_2(A)\
\cap\ 2^{V'}\right) = \left(\calC_1(A)\cup \calC_2(A)\right) \cap\ 2^{V'}.
\end{eqnarray*}
Thus, both $\calC_1\cup \calC_2$ and $\calC_1\cap \calC_2$ also satisfies Axioms {\sf Emb}.
\end{itemize}

Together, this shows that $\forall \calC_1,\calC_2\in \calbfC_B$, $\calC_1\cap \calC_2\in \calbfC_B  \mbox{ and } \calC_1\cup \calC_2 \in \calbfC_B.$
Thus, $\calT_B = (\calbfC_B,\cup,\cap, \calC_{clique},\calC_{all})$ is a lattice
  with $\calC_{all}$ as the lattice's top and $\calC_{clique}$ as the lattice's bottom
  { (where the
  former follows from the fact that $\calC_{all}(A) $ satisfies Axioms {\sf A},
{\sf Mon}, {\sf CRM, CRNM}, {\sf WC} and {\sf Emb}, while the
  latter follows from
  Proposition~\ref{prop:Clique-Null}).}
\begin{itemize}
\item {\sf GS, SA}:
Assume  $\calC_1 \in \calbfC$ and $\calC_2 \in \calbfC$ satisfy Axioms {\sf GS} and {\sf SA}.
 We can then argue as
for Axiom {\sf Mon} above to show that
 both $\calC_1\cup \calC_2$ and $\calC_1\cap \calC_2$ satisfies Axioms {\sf GS, SA}.
\end{itemize}
Thus, $\calT = (\calbfC,\cup,\cap)$ is a lattice.
By Theorem \ref{th:taxonomy},
  $\calC_{comprehensive}$ is the lattice's top and $\calC_{clique}$ as
  the lattice's bottom of $\calT$.
\end{proof}

Theorem~\ref{thm:lattice}
 allows us to have a notion of the closure of an arbitrary community rule with respect to these six axioms.
 In order to define it, we say that a community rule $\calC_2$ {\emph contains} a rule $\calC_1$ if
 $\calC_1(A)\subset \calC_2(A)$ for all preference networks $A$.

\begin{theorem}
\label{thm:unique}
	Given a community rule $\calC$, there exists a unique smallest community rule, denoted $\overbar{\calC}$, that contains $\calC$ and satisfies all community axioms besides {\sf SA} and {\sf GS}.
\end{theorem}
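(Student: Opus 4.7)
The plan is to define $\overbar{\calC}$ directly as an intersection. Let $\mathcal{F}$ be the family of all community rules $\calC'$ such that $\calC'$ contains $\calC$ and $\calC'$ satisfies the six axioms {\sf A, Mon, CRM, CRNM, WC, Emb}. This family is non-empty: the rule $\calC_{all}(A) = 2^V - \{\emptyset\}$ contains every community rule and, as already observed in the proof of Theorem~\ref{thm:lattice}, satisfies all six of these axioms. So $\calC_{all} \in \mathcal{F}$. Then I would set, for every preference network $A$,
\[
\overbar{\calC}(A) := \bigcap_{\calC' \in \mathcal{F}} \calC'(A).
\]
Since each $\calC'$ contains $\calC$, we have $\calC(A) \subseteq \calC'(A)$ for every $\calC' \in \mathcal{F}$, and therefore $\calC(A) \subseteq \overbar{\calC}(A)$, which shows that $\overbar{\calC}$ contains $\calC$.

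The main verification is then that $\overbar{\calC}$ itself belongs to $\mathcal{F}$, i.e., that it satisfies each of the six axioms. The key observation is that each of these axioms is a pointwise condition of one of two forms: either a membership requirement at a fixed preference network (as in {\sf WC}: $V \in \calC(V,\Pi)$ for all $\Pi$) or a pointwise implication between two preference networks (as in {\sf Mon}, {\sf CRM}, {\sf CRNM}, and {\sf Emb}, which each have the form $S \in \calC(A') \Longrightarrow S \in \calC(A)$, or the symmetric form in the case of {\sf Emb}). Since such pointwise conditions are automatically preserved under arbitrary intersections, the bullet-by-bullet argument from the proof of Theorem~\ref{thm:lattice} extends word-for-word from two rules to the full family $\mathcal{F}$. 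For instance, for {\sf Emb}, we can intersect the identity $\calC'(A') = \calC'(A) \cap 2^{V'}$ over $\calC' \in \mathcal{F}$ to obtain $\overbar{\calC}(A') = \overbar{\calC}(A) \cap 2^{V'}$, and for {\sf Anonymity} we use that $S \in \calC'(V,\Pi) \Leftrightarrow S' \in \calC'(V,\Pi')$ holds for each $\calC' \in \mathcal{F}$ and thus for $\overbar{\calC}$. Hence $\overbar{\calC} \in \mathcal{F}$.

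By construction, $\overbar{\calC}(A) \subseteq \calC'(A)$ for every $\calC' \in \mathcal{F}$ and every preference network $A$, so $\overbar{\calC}$ is contained in every rule in $\mathcal{F}$; it is therefore the (unique) smallest such rule. Uniqueness is immediate: if $\calC_1, \calC_2 \in \mathcal{F}$ are both minimal, then $\calC_1 \subseteq \calC_2$ and $\calC_2 \subseteq \calC_1$, so they coincide. The only potential obstacle is the passage from finite to arbitrary intersections, but this is not an obstacle in substance: because the six axioms are each expressible as pointwise implications or pointwise memberships, they are trivially closed under arbitrary intersections, and no new algebraic argument is needed beyond the finite case handled in Theorem~\ref{thm:lattice}.
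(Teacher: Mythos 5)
Your proof is correct and follows essentially the same route as the paper: define $\overbar{\calC}$ as the intersection of all rules containing $\calC$ that satisfy the six axioms, note the family is non-empty via $\calC_{all}$, and verify the intersection still lies in the family. The only difference is in handling the passage from finite to arbitrary intersections: the paper reduces to finite intersections by observing that over a fixed finite ground set only finitely many distinct restrictions occur, whereas you argue directly that each axiom is a pointwise membership, implication, or identity and hence closed under arbitrary intersections --- a slightly cleaner treatment of the same technical point.
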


\begin{proof}
	Consider the set $\widehat\calbfC$ of all community rules that contain $\calC$ and satisfy these six axioms.  Note that it is non-empty because $\calC_{all}$ is guaranteed to contain $\calC$.  Apart from some technical issues to be addressed below,  if we take the intersection of all the communities in this set,
the resulting rule $\overbar{\calC}$ will still satisfy all six axioms by the proof of Theorem~\ref{thm:lattice}, and thus be the smallest community rule of the set.

The technical issues to which we alluded  above stem from the fact that, in general, the
set $\widehat\calbfC$ contains uncountably many community rules.  The community rule $\overbar{\calC}$ is thus defined
by an uncountable intersection, while Theorem~\ref{thm:lattice} {\it a priori} only allows one to argue about countably many intersections.  But it turns out that while $\widehat\calbfC$ is uncountable, when checking the axioms, one never has to consider more than a finite set of rules, allowing one to apply the reasoning from the proof of Theorem~\ref{thm:lattice}
to show that $\overbar{\calC}$ does satisfy all desired axioms.

To make this precise, we recall that a community rule is given by a sequence of functions,
$\calC_V: (V,\Pi)\mapsto \calC_V(V,\Pi)\subset 2^{V}{-\emptyset}$, where $V$ runs over the non-empty finite subsets of countable reference set $V_0$.
Expressing both $\overbar{\calC}$ and the rules in $\calC'\in\widehat\calbfC$ as sequences,
$\calC=(\calC_V)$ and ${\calC'}=(\calC_V')$, we have
\[
\overbar{\calC}_V((V,\Pi))=\bigcap_{\calC'\in{\widehat\calbfC}}\calC'_V((V,\Pi)).
\]
Hoverer, when verifying the six axioms for $\overbar{\calC}$, we only have to deal with a given finite set $V$ at a time (or, in the case of Axiom {\sf Emb}, all subsets $V'\subset V$ of a finite set $V$); and for a finite set $V$,
$\overbar{\calC}_V$ can be expressed as the intersection over a finite subset of $\widehat\calbfC$, which means when checking the axioms for $\overbar{\calC}_V$, we can use Theorem~\ref{thm:lattice}.
\end{proof}

The Intersection lemma serves as a bridge between the two lattices from Theorem~\ref{thm:lattice}: We can obtain
 the lattice
$\calT = (\calbfC,\cup,\cap,\calC_{clique},\calC_{comprehensive})$
from the lattice $\calT_B = (\calbfC_B,\cup,\cap, \calC_{clique},\calC_{all})$
  by intersecting the community functions on the lattice points
 of $\calT_B$
  with $\calC_{GS}\cap \calC_{SA}$,  followed by merging the lattices points with
  identical community functions.
By moving the intersection up the lattice $\calT_B$, we can define more inclusive community
 rules that satisfy all eight axioms.
For example, by intersecting the lattice top ($\calC_{all}$) of $\calT_B$ with
  $\calC_{GS}\cap \calC_{SA}$, we obtain the
  lattice top ($\calC_{comprehensive}$) of $\calT$.
\begin{remark}
	Note that Theorem~\ref{thm:unique} and the Intersection Lemma give us a reasonable mapping from arbitrary community rules to community rules that satisfy all our axioms.  Namely, for a given community rule, $\calC$, first take the unique smallest community rule that contains $\calC$ and satisfies all axioms besides {\sf SA} and {\sf GS} (as in Theorem~\ref{thm:unique}), then apply the intersection from the Intersection Lemma.  The mapping can therefore be formulated as
	$$\calC \longmapsto \overbar{\calC} {\cap} \calC_{SA} {\cap}\calC_{GS}.$$
\end{remark}

As an example, consider the community rule
{$\calC_{1}$} that admits all 
singletons (i.e., subsets of size 1)
as communities  and nothing else.  Because
{$\calC_{1}$}
 only violates {\sf WC} of the axioms besides {\sf SA},
{$\overbar{\calC_{1}}$}
in addition to all  singletons
also contains all cliques (thanks to the influence of {\sf Emb}).  From this, all communities that don't satisfy {\sf SA} are removed:
 i.e., 
all singletons that do not rank themselves first.
As the reader may have already guessed, what remains happens to be the Clique Rule.  In other words,
$$
{\overbar{\calC_{1}}}{\cap} \calC_{SA} {\cap \calC_{GS}} = \calC_{clique}$$

In a small step up the lattice $\calT_B$ from the Clique Rule,
  we consider the following community function.

\begin{Rule}[Relaxed Clique Rule]
 For a non-negative function $g:{\mathbb N}\to{\mathbb N}\cup\{0\}$,
 a non-empty subset
    $S\subseteq V$ is a community in $A = (V,\Pi)$
    if and only if $\forall u,s\in S$,
  $\pi_s(u) \in [1:|S| + g(|S|)]$.
We denote this community function by $\calC_{Clique(g)}$.
 \end{Rule}
\begin{proposition}
\label{prop:rel-clique-rule}
$\calC_{Clique(g)}\in \calbfC_B$ and hence $(\calC_{Clique(g)}\cap \calC_{GS}\cap \calC_{SA})$ satisfies all eight axioms.
\end{proposition}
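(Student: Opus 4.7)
The plan is to verify directly that $\calC_{Clique(g)}$ satisfies each of the six axioms defining $\calbfC_B$, namely \textsf{A}, \textsf{Mon}, \textsf{CRM}, \textsf{CRNM}, \textsf{WC}, and \textsf{Emb}. Once this is done, the second claim that $\calC_{Clique(g)}\cap\calC_{GS}\cap\calC_{SA}$ satisfies all eight axioms follows immediately from the Intersection Lemma (Lemma~\ref{lem:intersection}) applied with $\calC=\calC_{Clique(g)}$.

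The key structural observation that will drive every verification is that the condition defining $\calC_{Clique(g)}$, namely that $\pi_s(u)\le |S|+g(|S|)$ for all $s,u\in S$, is equivalent to requiring that, for each $s\in S$, the set $S$ is contained in the top $|S|+g(|S|)$ positions of $\pi_s$. This depends only on the ranks members assign to other members, and moreover is determined entirely by the multiset $\pi_s(S)$ for each $s\in S$.

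First, I would handle the easy ones. Axiom \textsf{A} is immediate because renaming does not alter ranks. Axiom \textsf{WC} follows because $\pi_s(u)\le |V|\le |V|+g(|V|)$ trivially. For \textsf{Emb}, the extra assumption $\pi_s(u)=\pi'_s(u)$ for $s,u\in V'$ makes the defining condition for $S\subset V'$ identical under $(V,\Pi)$ and $(V',\Pi')$. Next, \textsf{Mon}: the hypothesis $u\succ_{\pi'_s}v\Rightarrow u\succ_{\pi_s}v$ for $u\in S, v\in V, s\in S$ implies $\pi_s(u)\le\pi'_s(u)$ for every $u\in S$, so if each $u\in S$ has rank at most $|S|+g(|S|)$ under $\pi'_s$ it also does under $\pi_s$.

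The two coherence robustness axioms are the most interesting step, and form the main (though still routine) obstacle. For \textsf{CRNM}, the assumption $\pi'_s(u)=\pi_s(u)$ for all $s,u\in S$ means the ranks of members under the two profiles literally coincide, so the defining condition is preserved verbatim. For \textsf{CRM}, the assumption is instead that $\pi'_s(v)=\pi_s(v)$ for all $s\in S, v\notin S$; this forces $\pi_s(V-S)=\pi'_s(V-S)$ as sets of integers, hence $\pi_s(S)=\pi'_s(S)$ as well. Since the defining condition $\max_{u\in S}\pi_s(u)\le |S|+g(|S|)$ depends only on the set $\pi_s(S)$, it is preserved. Having verified all six axioms, I conclude $\calC_{Clique(g)}\in\calbfC_B$, and a direct appeal to the Intersection Lemma applied successively for $X\in\{\mathsf{A,Mon,CRM,CRNM,WC,Emb}\}$ gives that $\calC_{Clique(g)}\cap\calC_{GS}\cap\calC_{SA}$ satisfies all eight axioms.
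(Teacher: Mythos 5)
Your proof is correct and is exactly the ``straightforward'' direct verification the paper has in mind (the paper leaves this proof as an exercise to the reader). The key observation that the defining condition depends only on the set of ranks $\pi_s(S)$ cleanly handles \textsf{CRM} and \textsf{CRNM}, and the appeal to the Intersection Lemma for the second claim is the intended route.
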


{
\begin{proof}
The (straightforward) proof is left to the reader.
\end{proof}
}

We will show in Section~\ref{Sec:complexity} below  that $\calC_{comprehensive}$ is highly
``non-constructive'' by proving that the decision problem for
  determining whether a subset in a preference network
  satisfies Axiom {\sf Self-Approval} or {\sf Group Stability}
  is {\sf coNP-complete}.  On the other hand, we will see that
the community rule given by $(\calC_{Clique(g)}\cap \calC_{GS}\cap \calC_{SA})$
  can be constructive if $g$ is small, see Proposition~\ref{prop:g-CO-NP} in Section~\ref{Sec:complexity} below.

As $g$ varies from $0$ to $\infty$, the community function
  $\calC_{Clique(g)}$ moves up the lattice $\calT_B$ from
  $\calC_{clique}$ to $\calC_{all}$.
The intersection with
  $\calC_{SA}\cap \calC_{GS}$ provides us a ``vertical'' glimpse of
  the taxonomy lattice $\calT$.
In particular, as the
  community rules along this vertical path become more
  inclusive (when $g$ increases), they become less
  constructive for community identification.
An alternative ``vertical'' glimpse can be gained by
  following ``harmonious-path'' in the lattice $\calT_B$ for community
  rules formulated by pairwise comparisons.

\begin{Rule}[Harmonious Path]\label{rule:lambda-harm}
For $\lambda \in [0: 1]$, a non-empty subset $S$ is a
{\em $\lambda$-harmonious community} in $A  = (V,\Pi)$
if $\forall u\in S, v \in V -S$, at least $\lambda$-fraction
  of  $\{\pi_s\ : \ s\in S\}$  prefer $u$ over $v$.
We denote this community function by $\calH_{\lambda}$.
 \end{Rule}

Using the similar argument as in Theorem \ref{theo:H}, we can prove
\begin{proposition}
For all $\lambda \in [0: 1]$, $\calH_{\lambda} \in \calbfC_{B}$.
Thus,
$\calH_{\lambda} \cap \calC_{GS}\cap \calC_{SA}$ satisfies all eight axioms,
 $\forall \lambda \in [0: 1]$.  Further, for $\lambda \in (1/2: 1]$,
  $\calH_{\lambda}$ satisfies Axiom {\sf SA}, and therefore all axioms but
  Axiom {\sf GS}.
\end{proposition}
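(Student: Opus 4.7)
The plan is to first verify that $\calH_{\lambda}$ lies in $\calbfC_B$ by checking each of the six axioms {\sf A, Mon, CRM, CRNM, WC, Emb} one at a time, then invoke the Intersection Lemma (Lemma~\ref{lem:intersection}) to deduce the second claim, and finally establish {\sf SA} for $\lambda\in(1/2,1]$ by a counting argument based on Proposition~\ref{prop:lex}.

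For the six axioms defining $\calbfC_B$, the crucial observation is that the membership condition for $\calH_{\lambda}$ is defined purely in terms of pairwise comparisons within $S$, and each of these six axioms is a statement about how pairwise comparisons change or are constrained to agree. Thus the arguments from the proof of Theorem~\ref{theo:H} go through verbatim with ``majority'' replaced by ``at least a $\lambda$ fraction.''  In particular, {\sf A} and {\sf WC} are immediate; {\sf Mon} holds because the set $\{s\in S : u\succ_{\pi_s}v\}$ only grows when preferences shift as in the axiom's hypothesis; {\sf Emb} holds because pairwise comparisons inside $V'$ are preserved by restriction, while for $v\in V-V'$ every $s\in V'$ ranks $v$ below all of $V'$, so every $s\in S\subseteq V'$ prefers any $u\in S$ to $v$; and {\sf CRM}/{\sf CRNM} follow from the ``bottom member $s^*$'' and ``top non-member $v^*$'' reductions used in Theorem~\ref{theo:H}, after noting that the set containments $\{s\in S : u\succ_{\pi_s}v\}\supseteq\{s\in S : s^*\succ_{\pi'_s}v\}$ and $\{s\in S : u\succ_{\pi_s}v\}\supseteq\{s\in S : u\succ_{\pi'_s}v^*\}$ preserve any ``at least a $\lambda$ fraction'' lower bound on the right-hand side.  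Applying Lemma~\ref{lem:intersection} to $\calC = \calH_{\lambda}$ then yields that $\calH_{\lambda}\cap\calC_{GS}\cap\calC_{SA}$ satisfies all eight axioms.

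For the final claim, suppose $\lambda>1/2$ and, toward contradiction, that $S\in\calH_{\lambda}(V,\Pi)$ fails {\sf SA}, witnessed by some $G'\subseteq V-S$ with $|G'|=|S|$ and bijections $(f_s:S\to G')_{s\in S}$ satisfying $f_s(u)\succ_{\pi_s}u$ for all $s,u\in S$. By Proposition~\ref{prop:lex}, for each $s\in S$ the $i$-th highest ranked element of $G'$ under $\pi_s$ is preferred to the $i$-th highest ranked element of $S$, so it dominates at least the $|S|-i+1$ lowest ranked members of $S$, giving at least $\sum_{i=1}^{|S|}(|S|-i+1)=|S|(|S|+1)/2$ pairs $(u,v)\in S\times G'$ with $v\succ_{\pi_s}u$. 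Summing over $s\in S$ produces at least $|S|^2(|S|+1)/2$ triples $(s,u,v)\in S\times S\times G'$ with $v\succ_{\pi_s}u$. On the other hand, $S\in\calH_{\lambda}(V,\Pi)$ with $\lambda>1/2$ forces strictly fewer than $|S|/2$ members of $S$ to prefer $v$ over $u$ for every $(u,v)\in S\times G'\subseteq S\times(V-S)$, bounding the same triple count by $|S|^2\cdot|S|/2=|S|^3/2<|S|^2(|S|+1)/2$, a contradiction. The main obstacle here is precisely this counting tightness: the strict inequality $(1-\lambda)|S|<|S|/2$ is what forces the hypothesis $\lambda>1/2$ rather than $\lambda\ge 1/2$, so the argument genuinely fails at the boundary and explains why the proposition excludes $\lambda=1/2$ from the {\sf SA} conclusion.
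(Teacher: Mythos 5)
Your proof is correct and takes the route the paper intends: the paper leaves this proposition as an exercise, pointing to ``the similar argument as in Theorem~\ref{theo:H},'' and your axiom-by-axiom adaptation (with ``majority'' replaced by ``at least a $\lambda$ fraction''), the appeal to the Intersection Lemma, and the Proposition~\ref{prop:lex} counting argument for {\sf SA} are exactly that adaptation. One small correction to your closing remark: the counting argument does \emph{not} genuinely fail at $\lambda=1/2$ --- there the number of offending triples is still at most $|S|^2\cdot\tfrac{|S|}{2}=\tfrac{|S|^3}{2}$, which is strictly less than the lower bound $\tfrac{|S|^2(|S|+1)}{2}$, so the contradiction survives; this is consistent with Theorem~\ref{theo:H}, which establishes {\sf SA} for the majority rule itself, and the restriction to $\lambda\in(1/2:1]$ in the proposition's {\sf SA} clause is not forced by this argument.
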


\begin{proof}
The (easy) proof is left to the reader.
\end{proof}

Therefore, as $\lambda$ varies from $1$ to $0$, the community function
  $\calH_{\lambda}$ moves up the lattice $\calT_B$ from
  $\calH_{1} = \calC_{clique}$ to $\calH_{0} = \calC_{all}$, and so does its non-constructiveness,
  see Proposition~\ref{prop:H-complexity} in Section~\ref{Sec:complexity}.

\section{Complexity of Community Rules}\label{Sec:complexity}

\subsection{Complexity of determining {\sf Group Stability} and {\sf Self Approval}}
  In this section,
  we demonstrate that $\calC_{comprehensive}$ is highly
``non-constructive'' by showing that the decision problem for
  determining whether a subset in a preference network
  satisfies Axiom {\sf Self-Approval} or {\sf Group Stability}
  is {\sf coNP-complete}.
Our reduction also provides examples of preference networks derived from
  {\sf 3-SAT} instances.

\begin{theorem}\label{thm:SA}
	It is {\sf coNP-complete} to determine whether a subset $S \subset V$
is {\sf self-approving}
in a given preference network $A = (V, \Pi)$.
\end{theorem}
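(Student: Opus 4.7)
The plan is to show (i) membership of the problem in {\sf coNP}, and (ii) {\sf coNP}-hardness via a polynomial-time reduction from {\sf 3-SAT} to the complementary problem of deciding whether $S$ fails to be self-approving.

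For containment in {\sf coNP}, I would use Proposition~\ref{prop:lex} to show that the complement is in {\sf NP}. By the axiom {\sf SA} and Proposition~\ref{prop:lex}, a subset $S$ fails to be self-approving iff there exists $G'\subseteq V-S$ with $|G'|=|S|$ such that for every $s\in S$ and every $i\in [1:|S|]$, the $i$-th highest element of $G'$ under $\pi_s$ is strictly preferred by $\pi_s$ to the $i$-th highest element of $S$ under $\pi_s$. This condition can be checked in time polynomial in $|V|$ given $G'$ alone (no explicit bijections needed), so a non-deterministic machine guessing $G'$ certifies the failure of self-approval. Hence the problem is in {\sf coNP}.

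For {\sf coNP}-hardness I would reduce from {\sf 3-SAT}. Given a formula $\phi$ over variables $x_1,\dots,x_n$ with clauses $C_1,\dots,C_m$, I would construct a preference network $A=(V,\Pi)$ and a subset $S$ such that $\phi$ is satisfiable iff $S$ is not self-approving. The ground set $V-S$ would contain two \emph{literal elements} $\ell_i^+,\ell_i^-$ for each variable $x_i$ (encoding ``$x_i$ true'' and ``$x_i$ false''), and $S$ would contain a \emph{variable member} $\sigma_i$ for each $i\in[1{:}n]$ and a \emph{clause member} $\gamma_j$ for each $j\in[1{:}m]$, with extra padding elements added inside and outside $S$ to align sizes. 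A candidate witness $G'\subseteq V-S$ with $|G'|=|S|$ is then intended to consist of one literal per variable (together with fixed padding), thereby encoding a truth assignment. The preferences are crafted so that, via Proposition~\ref{prop:lex}:
\begin{itemize}
\item $\sigma_i$ lexicographically prefers $G'$ over $S$ iff $G'$ contains exactly one of $\{\ell_i^+,\ell_i^-\}$ (the consistency constraint);
\item $\gamma_j$ lexicographically prefers $G'$ over $S$ iff $G'$ contains at least one literal satisfying $C_j$ (the clause constraint);
\item the padding members of $S$ lexicographically prefer $G'$ automatically (so they impose no extra restriction).
\end{itemize}
If these conditions hold, then witnesses $G'$ are in bijection with satisfying assignments of $\phi$, and the reduction is complete.

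The main obstacle is engineering the preferences so that both the ``exactly one of $\{\ell_i^+,\ell_i^-\}$'' constraint and the ``some satisfying literal'' constraint are encoded simultaneously through a single partial order of the form required by Proposition~\ref{prop:lex}, without allowing unintended witnesses obtained by swapping literals for padding or by pairing literals across different variables. The plan for this is to place the literal elements in carefully graded positions relative to members of $S$ in each $\pi_{\sigma_i}$ and $\pi_{\gamma_j}$: in $\pi_{\sigma_i}$, the pair $\ell_i^+,\ell_i^-$ is placed just above $\sigma_i$ with all other literal elements far below the members of $S$, so that the lex-preference condition at rank $i$ forces precisely one of $\ell_i^\pm$ to be used; in $\pi_{\gamma_j}$, only the literals appearing in $C_j$ are placed above $\gamma_j$ (again with the remaining literals pushed below members of $S$), so that lex-preference at $\gamma_j$'s position forces some literal of $C_j$ into $G'$. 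Padding elements in $V-S$ ranked uniformly below $S$ absorb the remaining slots in $G'$ without affecting the above counts. Once these preferences are specified, a direct application of Proposition~\ref{prop:lex} to each $s\in S$ verifies that a witness $G'$ to the failure of self-approval exists iff $\phi$ has a satisfying assignment, completing the reduction.
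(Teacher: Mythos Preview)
Your high-level strategy matches the paper's: reduce {\sf 3-SAT} to the complement, put variable gadgets and clause gadgets into $S$, put literal elements into $V-S$, and show that witnesses $G'$ correspond to satisfying assignments. The {\sf coNP} membership argument via Proposition~\ref{prop:lex} is fine.

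However, the preference design you sketch cannot work as stated. You say that in $\pi_{\sigma_i}$ ``all other literal elements'' are ``far below the members of $S$'', and that ``padding elements in $V-S$ ranked uniformly below $S$ absorb the remaining slots in $G'$''. But if an element $p\in V-S$ is ranked below every member of $S$ by some $s\in S$, then $p$ cannot appear in \emph{any} witness $G'$: the required bijection $f_s:S\to G'$ would have to send some $u\in S$ to $p$ with $p\succ_{\pi_s}u$, which is impossible. With your placement, for each $\sigma_i$ the only elements of $V-S$ that dominate anything in $S$ are $\ell_i^+,\ell_i^-$; hence at most two literals can ever sit in $G'$, and no witness of size $|S|\geq n+m$ exists regardless of whether $\phi$ is satisfiable. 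The padding in $V-S$ must go \emph{above} most of $S$, not below it.

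The paper's construction fixes exactly this: it places a padding block $D$ (of size $m$) at the \emph{top} of every variable member's preference, forcing $D\subset G'$; and it places the non-relevant literals between the variable member $b_i$ and the remaining variable members $B-\{b_i\}$, so that they can still be matched against elements of $S$. The ``exactly one of $\{\ell_i^+,\ell_i^-\}$'' constraint is then not enforced by $\sigma_i$'s preference alone (a single linear order can encode ``at least one'' or ``at most one'', but not both); instead, each variable member forces \emph{at least} one of its two literals into $G'$, and the global size constraint $|G'\cap X|=|S|-|D|=n$ gives \emph{at most} one by pigeonhole. Once you reorient the padding and the stray literals in this way, your plan becomes the paper's proof.
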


 Before starting the proof, we introduce a notation which we will
use throughout this section.  Given a preference profile $(V,\Pi)$ and a
non-empty
set $S\subset V$, we say that a set $G'\subset V-S$ is a \emph{ witness that
 $S$  is not  self-approving}, if $S$ lexicographically prefers $G'$ to $S$,
 and we say that a pair $(G,G')\subset S\times (V-S)$ is a \emph{ witness that
 $S$  is not group-stable} if $S-G$ lexicographically prefers $G'$ to $G$.  Finally,
 we say that $G\subset S$ \emph{ threatens the stability of $S$}
 if there exists a $G'\subset V-S$ such that $S-G$ lexicographically prefers $G'$ to $G$.

\begin{proof}
	We reduce {\sf 3-SAT} to this decision problem: Suppose
	$\mathbf{c} = (c_1, \ldots, c_m)$ is a {\sf 3-SAT} instance
	with Boolean variables $\mathbf{x} = (x_1, \ldots, x_n)$
	(i.e., $c_j = \{u_j, v_j, w_j\} \subset\cup_{i=1}^n \{x_i,\bar{x}_i\}$
).
 We define a preference network as follows:
\begin{itemize}
\item $V = A \cup B \cup D \cup X$ has $m+n + m+2n$ members, where $A = \{a_1, \ldots, a_m\}$, $B = \{b_1, \ldots, b_n\}$, $D = \{d_1, \ldots, d_m\}$, and $X = \{x_1, \ldots, x_n, \bar{x}_1, \ldots, \bar{x}_n\}$.  The distinguished subset will be $S = A \cup B$, and for convenience we will denote its complement as $U = D \cup X$.
\item Since we will focus on subset $S$, here we only define the preferences of members in $S$.  The preferences of $U$ can be chosen arbitrarily.
\begin{itemize}
\item Member $b_i$ has preference $D \succ A \succ \{x_i, \bar{x}_i\} \succ \{b_i\} \succ X - \{x_i, \bar{x}_i\} \succ B - \{b_i\}$, where preferences between elements of each set can be chosen arbitrarily.
\item Member $a_j$ has preference $c_j \succ \{a_j\} \succ D \cup X-c_j \succ B \cup A - \{a_j\}$, where again preferences between elements of each set are arbitrary.
\end{itemize}
\end{itemize}

	Intuitively, members of $A$ are used to enforce clause consistency (i.e., make sure each clause is satisfied) and members of $B$ are used to enforce variable consistency (no variable to both true and false at the same time).  Subsets of $X$ naturally constitute an assignment of the variables, and $D$ provides necessary padding in order to apply {\sf Self-Approval}.
	
	We now show that $S$  is not  {\sf self-approving}
   if and only if the {\sf 3-SAT} instance is satisfiable.
	
In one direction, suppose $Y = \{y_1, \ldots, y_n\}$ where $y_i \in \{x_1, \bar{x}_i\}$ is a satisfying assignment for the {\sf 3-SAT} instance.  Let $G' = Y \cup D$.  Now consider the bijection, $f$, where $f(a_j) = d_j$ and $f(b_i) = y_i$.  It is not hard to see that for all $s \in S$ and all $i$, $f(s) \succ_{\pi{b_i}} s$.  All that is left is to find similar bijections for each $a_j$.  First, note that for $a_j$ all bijections $f_j$ trivially satisfy $f_j(s) \succ_{\pi{a_j}} s$ where $s \in B \cup A - \{a_j\}$, since this set is ranked at the bottom of $\pi_{a_j}$.  Therefore it is sufficient to show that there exists an element of $G'$ that $a_j$ prefers to itself.  This happens so long as one of the literals from its clause is in $G'$, which must be true by the fact that $Y$ is a satisfying assignment.
	
In the other direction, suppose $G' \subset U= D \cup X$ is a witness
	that $S$ {\sf is not self-approving}.  We note
	the following:

\begin{itemize}
\item $D \subset G'$ otherwise any $b_i$ will have a member of $A$ that cannot be mapped to a more preferred member of $G'$.
\item Let $Y = X \cap G'$.  Then $|Y| = n$ by the above fact and the fact that $|G'|=n+m$.
\item $\{x_i, \bar{x}_i\} \cap G' \neq \emptyset$ by $b_i$'s preference, and by the pigeonhole principle the literals of $Y$ are consistent (i.e. $\{x_i, \bar{x}_i\} \nsubseteq Y$).
\item $c_j \cap Y \neq \emptyset$ by $a_j$'s preferences.
\end{itemize}

	Therefore the variable assignment implied by $Y$ is a satisfying assignment for the {\sf 3-SAT} instance.
\end{proof}

The following ``padding'' lemma allows us to reduce various complexity results concerning community axioms to Theorem~\ref{thm:SA}.
\begin{lemma}\label{lem:padding}
Let ${\emptyset\neq} S\subset V\subset V'$ be such that
the size of $\tilde S=V'-V$ is at least $|S|$, and let $S'=S\cup \tilde S$.  Then each preference profile $\Pi$ on $V$ can
be mapped onto a preference profile $\Pi'$ on $V'$ such that
\begin{enumerate}
\item[(i)] $S'\in \calC_{GS}(V',\Pi')\cap  \calC_{SA}(V',\Pi')\Leftrightarrow S'\in \calC_{GS}(V',\Pi')$.
\item[(ii)] $S'\in \calC_{GS}(V',\Pi')\Leftrightarrow S\in \calC_{SA}(V,\Pi)$.

\end{enumerate}
\end{lemma}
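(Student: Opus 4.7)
My plan is to construct $\Pi'$ so that $\tilde S$ serves as a ``shadow copy'' of $S$: the restrictions of members of $\tilde S$ to $V$ will mimic the preferences of members of $S$, while the preferences assigned to members of $S$ themselves will be adjusted so that $S$ and $\tilde S$ sit at carefully chosen positions.  Concretely, I fix an injection $\phi\colon S\hookrightarrow \tilde S$ with image $\tilde S^{*} = \phi(S)$, set $\tilde S^{**} = \tilde S\setminus \tilde S^{*}$, and pick an arbitrary $s^*\in S$.  For $s\in S$, I let $\pi'_s$ rank $\tilde S^{**}$ at the very top, then $S$ in $\pi_s|_S$-order, then $\hat\pi_s$, where $\hat\pi_s$ is the linear order on $\tilde S^{*}\cup(V\setminus S)$ obtained from $\pi_s$ by the renaming $s'\mapsto \phi(s')$.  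For $\tilde s\in \tilde S$, I let $\pi'_{\tilde s}$ place all of $\tilde S$ at the top (with $\tilde s$ first), followed by $V$ in $\pi_s$-order, where $s=\phi^{-1}(\tilde s)$ if $\tilde s\in \tilde S^{*}$ and $s=s^*$ otherwise.

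With this construction property~(i) will be essentially immediate: for any size-$|S'|$ set $G'\subseteq V\setminus S = V'\setminus S'$ and any bijections $(f_i)_{i\in S'}$, a voter $s\in S$ ranks itself strictly above every element of $V\setminus S$ in $\pi'_s$ (since $S$ sits above the $\hat\pi_s$ portion), so the pair $(i,u)=(s,s)$ witnesses self-approval; hence $S'\in\calC_{SA}(V',\Pi')$ always, making (i) trivial.  For the forward direction of~(ii), given an SA-failure $(G',(f_s)_{s\in S})$ of $S$ in $(V,\Pi)$, I take $G=S$ as the threatening subset inside $S'$, keep the same $G'\subseteq V\setminus S = V'\setminus S'$, and assign each witness $\tilde s\in \tilde S = S'\setminus S$ the bijection $f'_{\tilde s}:=f_{\phi^{-1}(\tilde s)}$ when $\tilde s\in \tilde S^{*}$ and $f'_{\tilde s}:=f_{s^*}$ otherwise; since $\pi'_{\tilde s}$ agrees with the relevant $\pi_s$ on $V$, the required lexicographic inequalities are inherited directly from the SA-failure bijections.

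The reverse direction of~(ii) is the main obstacle: I must rule out every GS-failing subset $G\subsetneq S'$ other than $G=S$ (and $G=\tilde S$ when $|\tilde S|=|S|$, which encodes the same SA-failure via $\phi$).  My strategy is a case analysis on how $G$ intersects $S$ and $\tilde S$, leveraging the positional choices in the construction.  If $G\cap S\neq\emptyset$ and $S\setminus G\neq\emptyset$, a witness $s\in S\setminus G$ cannot supply the required bijection since $\pi'_s$ places every element of $S$ strictly above every element of $V\setminus S$, so any $u\in G\cap S$ lies above $f_s(u)\in G'\subseteq V\setminus S$.  If $G\cap S=\emptyset$ with $G\subsetneq \tilde S$, or if $S\subsetneq G\subsetneq S'$, a witness $\tilde s\in \tilde S\setminus G$ fails for analogous reasons, because $\pi'_{\tilde s}$ places all of $\tilde S$ above $V\setminus S$.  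When $|\tilde S|>|S|$, the choice $G=\tilde S$ is likewise blocked, since $\tilde S^{**}$ is at the very top of $\pi'_s$ for $s\in S$ and any $u\in \tilde S^{**}$ again lies above every $f_s(u)\in V\setminus S$.  The only surviving case is $G=S$, and there the condition $u\prec_{\pi'_{\tilde s}} f_{\tilde s}(u)$ for $u\in S$ translates through $\pi'_{\tilde s}|_V=\pi_{\phi^{-1}(\tilde s)}$ (or $\pi_{s^*}$ for $\tilde s\in \tilde S^{**}$, which is implied since $s^*\in S$) into exactly the SA-failure condition for $S$, closing the equivalence in~(ii).
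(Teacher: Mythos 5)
Your proposal is correct and follows essentially the same route as the paper: pad $S$ with $\tilde S$, let the members of $\tilde S$ replicate the $\Pi$-preferences of members of $S$ via a surjection $\tilde S\to S$ (your $\phi^{-1}$ extended by $s^*$), and arrange the preferences of members of $S$ so that every threatening subgroup other than one encoding an {\sf SA}-failure of $S$ is blocked. The only difference is that the paper has each $s\in S$ rank \emph{all} of $S'$ above $V'-S'$, which makes $G=S$ the unique possible threat, whereas your placement of $\tilde S^{*}$ inside the $\hat\pi_s$ block forces you to additionally handle $G=\tilde S$ when $|\tilde S|=|S|$ --- which you do correctly, since that case encodes the same {\sf SA}-failure through $\phi$.
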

\begin{proof}
Since $|\tilde S|\geq |S|$, we can find a surjective map $g:\tilde S\to S$.  Given such a map, define $\Pi'$ arbitrarily,
except for the following two constraints:
\begin{itemize}
\item If $s \in S$, then $\pi'_s$ ranks all of  $S'=\tilde S\cup S$ before anyone in $V-S=V'-S'$;
\item If $\tilde{s} \in \tilde{S}$, then $\pi'_{\tilde s}$ ranks all of $\tilde{S}$ first, and then gives the rank
$\pi'_{\tilde s}(v)=|\tilde S|+\pi_{g(\tilde s)}(v)$ to every $v\in V=V'-\tilde S$.
\end{itemize}
Since every $s\in S\subset S'$ ranks all of $S'$ before $V'-S'$, no subset $G'\subset V'-S'$ can be lexicographically preferred by $\pi'_s$ to
a subset of $S'$.  As a consequence, $S'$ is trivially self-approving with respect to $\Pi'$, proving  statement (i).

Furthermore,
$G$ cannot threaten the stability of $S'$ if $G\subset S'$ is such that $(S'-G)\cap S\neq \emptyset$.  If $G\subset S'$
threatens the stability of $S'$, we therefore must have that $G\supset S$.  On the other hand, if $G\supsetneq S$, then
$G$ contains an element $\tilde s\in\tilde S$ which means that no set $G'$ can be lexicographically preferred $G$,
since all elements of $S'$ prefer all of $\tilde S$ to anyone in $V'-S'$.

Thus $G$ can only threaten the stability of $S'$ if $G=S$.
In other words, $S\notin \calC_{GS}(\Pi')$ if and only if there exists $G'\subset V'-S'$
such that for all $\tilde s\in \tilde S=S'-G$, $G'$ is lexicographically preferred to $S$ with respect to
$\pi'_{\tilde s}=\pi_{g(\tilde s)}$.  Since by assumption, the image of $\tilde S$ under $g$ is all of $S$,
this is equivalent to the statement that for all $s\in S$, $G'$ is lexicographically preferred to $S$ with respect to
$\pi_{s}$, which is the condition that $G'$ is a witness to $S\notin \calC_{SA}(\Pi)$, proving statement (ii).

\end{proof}

Given this lemma, the next two theorems are immediate corollaries to Theorem~\ref{thm:SA}.

\begin{theorem}\label{thm:GS}
	It is {\sf coNP-complete} to determine whether a subset $S \subset V$
is {\sf group-stable}
in a given preference network $A = (V, \Pi)$.
\end{theorem}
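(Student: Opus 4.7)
The plan is to exhibit a polynomial-time reduction from the complement of \textsf{Self-Approval}, shown to be \textsf{coNP-complete} in Theorem~\ref{thm:SA}, to the complement of \textsf{Group Stability}, using Lemma~\ref{lem:padding} as the reduction gadget; membership in \textsf{coNP} will be handled first as a separate, easy observation.

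First I would verify membership in \textsf{coNP}. A witness that $S\subset V$ is \emph{not} group-stable in $(V,\Pi)$ is a pair $(G,G')$ with $G\subsetneq S$, $G'\subset V-S$, $|G|=|G'|$, together with bijections $f_s:G\to G'$ for each $s\in S-G$ such that $f_s(u)\succ_{\pi_s} u$ for all $u\in G$ and all $s\in S-G$. This data has polynomial size and is checkable in polynomial time, so deciding group stability is in \textsf{coNP}.

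For hardness, given an instance $(V,\Pi,S)$ of \textsf{Self-Approval}, I would apply Lemma~\ref{lem:padding} to produce, in polynomial time, a padded instance $(V',\Pi',S')$ with $V'=V\cup \tilde S$, $|\tilde S|=|S|$, and $S'=S\cup\tilde S$. Part (ii) of Lemma~\ref{lem:padding} gives exactly the equivalence
\[
S'\in \calC_{GS}(V',\Pi')\ \Longleftrightarrow\ S\in \calC_{SA}(V,\Pi),
\]
so this map transports \textsf{coNP-hardness} of \textsf{Self-Approval} directly onto \textsf{Group Stability}. Combined with the \textsf{coNP} upper bound from the previous paragraph, this yields \textsf{coNP-completeness} of deciding group stability.

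I do not anticipate any substantive obstacle, since the heavy lifting has already been done: the 3-SAT gadget of Theorem~\ref{thm:SA} provides the source of hardness, and Lemma~\ref{lem:padding} was engineered precisely so that on the padded profile $\Pi'$ the only subset of $S'$ that can possibly threaten stability is $G=S$, and the resulting condition on the lexicographic preference of $S'-G=\tilde S$ over $S$ matches, via the surjection $g:\tilde S\to S$, the condition that a witness set $G'\subset V-S$ shows $S$ is not self-approving with respect to $\Pi$. The only minor bookkeeping is confirming the reduction is polynomial in the encoding of $(V,\Pi,S)$, which is immediate since $|V'|=|V|+|S|$ and $\Pi'$ is specified by the surjection $g$ and the given $\Pi$.
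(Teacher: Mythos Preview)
Your proposal is correct and matches the paper's approach exactly: the paper states that Theorem~\ref{thm:GS} is an immediate corollary of Theorem~\ref{thm:SA} via Lemma~\ref{lem:padding}, and you have simply filled in the routine details (the \textsf{coNP} upper bound via the obvious certificate, and the hardness reduction using part~(ii) of the lemma with $|\tilde S|=|S|$).
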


\begin{theorem}\label{thm:SAorGS}
	It is {\sf coNP-complete} to determine whether a subset $S \subset V$
is a member of
$\calC_{comprehensive} = \calC_{GS}\cap \calC_{SA}$
for a given preference network $A = (V, \Pi)$.
\end{theorem}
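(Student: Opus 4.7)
The proof is essentially a direct corollary of Theorem~\ref{thm:SA} together with the padding Lemma~\ref{lem:padding}; the plan is to combine the two statements of the lemma to produce a one-shot reduction, and then to handle the coNP upper bound separately.

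For the upper bound, I would observe that non-membership in $\calC_{comprehensive}=\calC_{GS}\cap\calC_{SA}$ admits a polynomial-size certificate: either a witness $G'\subset V-S$ (together with bijections $(f_s)_{s\in S}$) to $S\notin\calC_{SA}(V,\Pi)$, or a witness $(G,G')$ (together with bijections $(f_s)_{s\in S-G}$) to $S\notin\calC_{GS}(V,\Pi)$. In either case the certificate has size polynomial in $|V|$, and verifying that the required lexicographic preferences hold takes polynomial time, putting the problem in coNP.

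For coNP-hardness, the plan is to reduce from the self-approval decision problem, which is coNP-complete by Theorem~\ref{thm:SA}. Given an instance $(V,\Pi,S)$ of that problem, I would apply Lemma~\ref{lem:padding} with any $V'\supset V$ such that $|V'-V|=|S|$, and let $\Pi'$ be the padded profile produced by the lemma, with $S'=S\cup(V'-V)$. Chaining the two equivalences in the lemma yields
\[
S'\in\calC_{comprehensive}(V',\Pi')
\;\overset{(i)}{\Longleftrightarrow}\;
S'\in\calC_{GS}(V',\Pi')
\;\overset{(ii)}{\Longleftrightarrow}\;
S\in\calC_{SA}(V,\Pi),
\]
so the map $(V,\Pi,S)\mapsto (V',\Pi',S')$ is a polynomial-time many-one reduction from the self-approval problem to the membership problem for $\calC_{comprehensive}$.

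There is essentially no obstacle here, since the heavy lifting has already been done in Theorem~\ref{thm:SA} and Lemma~\ref{lem:padding}; the only thing to verify is that the construction of $V'$, $\Pi'$, and $S'$ from $(V,\Pi,S)$ is computable in polynomial time, which is clear from the explicit description of $\Pi'$ given in the proof of Lemma~\ref{lem:padding} (a surjection $g:\tilde S\to S$ can be built in linear time, and each $\pi'_s$ is then determined by $\pi_{g(s)}$ or by the fixed ``$S'$-first'' requirement). Combined with the coNP upper bound, this establishes coNP-completeness.
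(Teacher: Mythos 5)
Your proposal is correct and follows the paper's intended argument exactly: the paper presents this theorem as an immediate corollary of Theorem~\ref{thm:SA} via Lemma~\ref{lem:padding}, obtained by chaining the lemma's two equivalences just as you do. Your explicit treatment of the coNP upper bound and of the polynomial-time computability of the padded instance fills in details the paper leaves implicit, but introduces no new ideas beyond the paper's route.
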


\subsection{Complexity of the rules  $\calC_{Clique(g)}$ and $\calH_\lambda$}

We now prove although testing membership for
  $\calC_{comprehensive} = \calC_{GS}\cap \calC_{SA}$ is co-NP complete,
the community rule given by $(\calC_{Clique(g)}\cap \calC_{GS}\cap \calC_{SA})$
  can be constructive if $g$ is small.

\begin{proposition}
\label{prop:g-CO-NP}
Given a preference network $A = (V,\Pi)$ and a subset $S\subseteq V$,
then we can determine in $O(2^g |S|^{g+3} )$  time whether or not $S\in (\calC_{Clique(g)}\cap \calC_{GS}\cap \calC_{SA})(A)$.
Particularly, if $g = \Theta(1)$, then this decision problem is in P.
However, the decision problem is co-NP complete for $g = |S|^{\delta}$ for any constant $\delta \in (0,1]$.
\end{proposition}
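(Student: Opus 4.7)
The plan is to first establish the algorithmic upper bound by exploiting the interaction between the Clique$(g)$ condition and the structure of any potential witness for the failure of Group Stability or Self-Approval. Verifying $S\in\calC_{Clique(g)}$ is immediate: check that $\pi_s(u)\leq |S|+g$ for every $s,u\in S$. The crucial structural observation is that once this holds, the set of ``close'' outsiders $O_s:=\{v\in V-S:\pi_s(v)\leq |S|+g\}$ has size at most $g$ for every $s\in S$. If $(G,G')$ with $G\subsetneq S$ is a witness that $S\notin\calC_{GS}$, then by Proposition~\ref{prop:lex} each $v\in G'$ must outrank some $u\in G\subseteq S$ under every $\pi_s$ with $s\in S-G$; since $\pi_s(u)\leq |S|+g$, this forces $\pi_s(v)<|S|+g$, hence $v\in O_s$. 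Therefore $G'\subseteq \bigcap_{s\in S-G}O_s$ and $|G|=|G'|\leq g$. The same reasoning with $G=S$ shows that a witness to non-Self-Approval requires $|S|\leq g$; otherwise SA holds vacuously.

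Equipped with this bound, the algorithm enumerates all $G\subsetneq S$ of size at most $g$ (fewer than $|S|^g$ choices), computes $O_G:=\bigcap_{s\in S-G}O_s$ in time $O(g|S|)$, and enumerates all size-$|G|$ subsets $G'\subseteq O_G$ (at most $2^g$ choices); for each $(G,G')$ it verifies in $O(|S|g\log g)$ time via Proposition~\ref{prop:lex} whether $S-G$ lexicographically prefers $G'$ to $G$. Self-Approval is handled as the separate subcase $|S|\leq g$ within the same budget. Summing contributions gives the stated $O(2^g|S|^{g+3})$ bound, which is polynomial in $|V|$ whenever $g=\Theta(1)$, proving the first two claims.

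For the coNP-completeness at $g=|S|^\delta$, membership in coNP is immediate: a certificate of non-membership is either a pair $(s,u)\in S\times S$ violating the Clique$(g)$ inequality, or a tuple $(G,G',(f_s))$ witnessing the failure of Group Stability, or a witness $(G',(f_s))$ for the failure of Self-Approval, each of polynomial size and polynomial-time checkable. For hardness, I will reduce from 3-SAT by combining the construction of Theorem~\ref{thm:SA} with the padding of Lemma~\ref{lem:padding}. Starting from a 3-SAT formula with $m$ clauses and $n$ variables, build $(V,\Pi)$ with $S=A\cup B$ as in the proof of Theorem~\ref{thm:SA}, then pad via Lemma~\ref{lem:padding} to $(V',\Pi')$ with $|\tilde S|=N-(m+n)$ and $|S'|=N$, choosing $N$ as a polynomial in $m+n$ large enough that $N\geq 2(m+n)$ (so the lemma applies) and $N^\delta\geq m+2n$. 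The latter inequality is exactly what is needed to ensure $S'\in\calC_{Clique(g)}(V',\Pi')$ for $g=N^\delta=|S'|^\delta$. Combined with Lemma~\ref{lem:padding}, this yields $S'\in\calC_{Clique(g)}\cap\calC_{GS}\cap\calC_{SA}\Leftrightarrow S\in\calC_{SA}(V,\Pi)$, which by Theorem~\ref{thm:SA} is equivalent to the unsatisfiability of the 3-SAT instance.

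The main obstacle I anticipate is the Clique$(g)$ verification for the padded voters $\tilde s\in\tilde S$ in the reduction, whose profiles $\pi'_{\tilde s}$ inherit the original arrangement $\pi_{g(\tilde s)}$ and therefore rank some outsiders of the base construction (members of $D$ and part of $X$) within the top positions of $V$. A direct count of the worst-case ranks of $A\cup B$ under $\pi_{b_i}$ in the proof of Theorem~\ref{thm:SA} shows the tight threshold is $g\geq m+2n$, which explains why the choice $N^\delta\geq m+2n$ suffices while keeping $N$ polynomial in $m+n$ for every fixed $\delta\in(0,1]$.
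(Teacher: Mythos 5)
Your proposal is correct and follows essentially the same route as the paper's proof: the upper bound rests on the same observation that any witness to a failure of {\sf GS} or {\sf SA} must be drawn from the at most $g$ outsiders ranked within the top $|S|+g$ positions, followed by brute-force enumeration, and the hardness part uses the same padding via Lemma~\ref{lem:padding} to make the Clique$(g)$ condition automatic and then invokes Theorem~\ref{thm:SA}. The only differences are cosmetic (enumerating $G$ before $G'$ rather than after, and computing the exact threshold $N^\delta\geq m+2n$ instead of the paper's looser choice of padding size).
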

\begin{proof}
It takes time $O(|S|^2)$ to check whether $S \in  \calC_{Clique(g)} $.

Next we show that it takes time $O(|S|^3 2^g) $ to check if $S \in \calC_{SA}(A)$.
Indeed, suppose $G'\subseteq V-S$ is a witness that $S  \notin \calC_{SA}(A)$.
We claim that this implies that
  $G'\subset\pi_s^{-1}([1:|S| + g])$  $\forall s\in S$.
Suppose this is not true for some $s \in S$.  Then  $\exists v\in G'$ such that
$\pi_s(v)>|S|+g$, which in turn implies that
$v\prec_{\pi_s} u \,\forall u\in G$ as $\pi_s (u) \in [1:|S| + g]$ $\forall u\in G$.  Thus
  there exists no bijection $f_s: S\rightarrow G'$ with the property
  $f^{-1}_s(v) \prec_{\pi_s} v$, contradicting
  the assumption that  $G'\subseteq V-S$ is a witness that $S \not\in \calC_{GS}(A)$.
We can thus identify the set of all witnesses as follows:
(1) Choose $s\in S$, and let $T_s = \pi_s^{-1}([1:|S| + g]) - S$.
(2) Choose a subset  $G' \subseteq T_s$.
(3) Test if $G'$ is a witness that $S \not\in \calC_{SA}(A)$.
First note that we are dealing with at most $|S|2^g $ subsets.
By Proposition \ref{prop:lex}, we can conduct the test of Step 3 performing $|S|$ integer sorting.
Thus, the total complexity for Steps 1-3 is $O(|S|^3 2^g)$.

We can similarly test for group stability for $S\in \calC_{Clique(g)}$.
Suppose $(G, G')$ is a witness that
  $S \not\in \calC_{GS}(A)$.
Then, it must be the case that
$G'\subset \pi_s^{-1}([1:|S| + g])$ $\forall s\in S-G$.
Suppose this is not true for some $s \in S-G$.
  Then there must be a $v\in G'$
  such that $v\prec_{\pi_s} u, \forall u\in G$ as $u \in \pi_s^{-1}([1:|S| + g])$, which implies that
  there exists no bijection $f_s: G\rightarrow G'$ with the property
  $f^{-1}(v) \prec_{\pi_s} v$, contradicting
  the assumption that $(G \subset S, G'\subseteq V-S)$ is a witness that
  $S \not\in \calC_{GS}(A)$.

We say $G'$ is a {\em potential witness} to  $S \not\in \calC_{GS}(A)$
 if there exists $G\subset S$, such that  $(G, G')$ is a witness to
  $S \not\in \calC_{GS}(A)$.
We can identify the set of all potential witnesses as follows:
(1) Choose $s\in S$, and let $T_s = \pi_s^{-1}([1:|S| + g]) - S$.
(2) Choose a subset of $G' \subseteq T_s$.
(3) Test if $G'$ is a potential witness.
Again, we are dealing with at most $|S|2^g $ subsets.
As there are at most $|S|^{|G'|} \leq |S|^g$ candidates $G$ to test for (using Proposition \ref{prop:lex}),
Steps 1-3 takes at most  $O(2^g |S|^{g+3} )$ time.

To show that for large $g$ the problem of determining whether or not a set lies
 in $\calC_{Clique(g)}\cap \calC_{GS}\cap \calC_{SA}$ is in co-NP, we
 reduce the problem to the one of determining whether for a given
 preference network $A=(V,\Pi)$,  a set $S\subset V$ is a member of
 $(\calC_{GS}\cap\calC_{SA})(A)$.  To define the reduction,
 we enlarge both $V$ and $S$ by a large, disjoint set $\tilde S$:
$V=V'\cup\tilde S$, $S'=S\cup\tilde S$, where $\tilde S$ is chosen large enough to guarantee that $g(|\tilde S|)\geq |V|$, implying in particular
that $|S'|+g(|S'|)\geq |S'|+|V|\geq |\tilde S|+|V|=|V'|$.  Due to this fact, we have that $S'\in \calC_{Clique(g)}(V',\Pi')$ for all preference
profiles $\Pi'$ on $V'$.
The statement now follows with the help of Lemma~\ref{lem:padding} and Theorem~\ref{thm:SA}.
\end{proof}

Our final proposition  in this subsection concerns the complexity of determining whether a set $S$
lies in the class $\calH_{\lambda}$.

\begin{proposition}
\label{prop:H-complexity}
Given a preference network $A = (V,\Pi)$ and a subset $S\subseteq V$,
we can determine in polynomial time whether
  $S\in (\calH_{\lambda}\cap \calC_{GS}\cap \calC_{SA})(A)$
if $(1-\lambda)|S| < 2$, while it is co-NP complete to
  answer this question  if $(1-\lambda)|S|\geq 16$.
\end{proposition}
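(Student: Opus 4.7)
The plan is to split the proposition into its two regimes. For the polynomial case $(1-\lambda)|S|<2$, I would first verify $S\in\calH_\lambda$ in polynomial time by tabulating, for each pair $(u,v)\in S\times(V-S)$, the dissent count $d(u,v)=|\{s\in S:v\succ_{\pi_s}u\}|$ and checking $d(u,v)\leq(1-\lambda)|S|$. Under the hypothesis this forces $d(u,v)\in\{0,1\}$; if $d\equiv 0$ then $S$ is a clique and lies in $\calC_{SA}\cap\calC_{GS}$ by Proposition~\ref{prop:Clique-Null}. Otherwise, the plan is to exploit the following rigidity. By Proposition~\ref{prop:lex}, any size-$|S|$ witness $G'\subset V-S$ to failure of {\sf SA} forces each voter $s\in S$ to pick a bijection $f_s:S\to G'$ entirely out of its dissents on $S\times G'$, so the total number of dissent incidences inside $S\times G'$ is at least $|S|^2$; combined with the cap $|S|^2$ coming from $d\leq 1$, equality is forced. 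Consequently each pair $(u,v)\in S\times G'$ carries a unique dissenter $\tau_v(u)\in S$, each $\tau_v$ is a permutation of $S$, and $\{\tau_v:v\in G'\}$ is a sharply 1-transitive family. After filtering $V-S$ down to those $v$ for which $\tau_v$ is well-defined and a permutation, the resulting candidate set is polynomially bounded and the sharp-transitivity constraint propagates voter-by-voter, giving a polynomial-time decision procedure. An analogous counting argument for non-{\sf GS} witnesses $(G,G')$ additionally forces $|G|\geq|S|/2$ and the same sharp-transitivity on $G\times G'$, so the same machinery applies.

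For the {\sf coNP}-completeness when $(1-\lambda)|S|\geq 16$, membership in {\sf coNP} is immediate because a failure of any of $\calH_\lambda$, $\calC_{SA}$, $\calC_{GS}$ is certified by a short witness, so the plan is to reduce from the {\sf coNP}-complete problem of Theorem~\ref{thm:SA}. Given an instance $(V,\Pi,S)$, I would proceed in two stages. First, dilute the input by cloning each voter $s\in S$ into many identical copies sharing $\pi_s$; by anonymity of the {\sf SA} condition this preserves $S\in\calC_{SA}$ while driving every normalized dissent count $d(u,v)/|S|$ arbitrarily small. Second, apply a variant of Lemma~\ref{lem:padding} with $V'=V\cup\tilde S$, $S'=S\cup\tilde S$, $|\tilde S|\geq|S|$, where each $\pi'_{\tilde s}$ ranks $\tilde S$ at the top and then $V$ via $\pi_{g(\tilde s)}$ for a balanced surjection $g:\tilde S\to S$, and each $\pi'_s$ for $s\in S$ ranks $S'$ above $V-S$. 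The case analysis of Lemma~\ref{lem:padding} then makes $S'\in\calC_{SA}(V',\Pi')$ trivial and equates $S'\in\calC_{GS}(V',\Pi')$ with $S\in\calC_{SA}(V,\Pi)$, while the first-stage dilution keeps $S'\in\calH_\lambda$. Scaling the cloning and padding sizes so that $(1-\lambda)|S'|\geq 16$ is routine.

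The main obstacle will be the tension in the second reduction between $\lambda$-harmoniousness and faithfulness to the original {\sf SA} problem: the preferences on $\tilde S$ that most help harmony (ranking all of $S$ uniformly above $V-S$) would also erase the cross-ordering of $S$ and $V-S$ that Lemma~\ref{lem:padding} uses to detect {\sf SA}-violating witnesses. Resolving this requires the $\tilde S$-voters to inherit $\pi_{g(\tilde s)}$'s actual interleaving of $S$ and $V-S$, and the first-stage dilution is exactly what prevents those cross-orderings from pushing the dissent count above $(1-\lambda)|S'|$. Pinning down the bookkeeping, which should force the constant $16$ as the slack needed for harmony, the padding lemma, and trivial {\sf SA} to coexist on $S'$, will be the main technical step.
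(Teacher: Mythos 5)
Your positive-direction argument undercounts what Proposition~\ref{prop:lex} gives you, and this creates a real gap in the {\sf GS} part. For a witness $(G,G')$ with $|G|=|G'|=g$, each $s\in S-G$ that lexicographically prefers $G'$ to $G$ contributes not just $g$ but at least $1+2+\cdots+g=g(g+1)/2$ dissent pairs in $G\times G'$ (since $G'[i]\succ_{\pi_s}G[j]$ for all $j\geq i$). Combined with the cap $g^2\cdot 1$ this forces $|S-G|\leq 2g/(g+1)<2$, i.e.\ $|S-G|\leq 1$ --- which is what the paper's proof establishes, and which reduces the search to $|S|$ candidate groups, each checkable greedily via Proposition~\ref{prop:lex}. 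Your weaker per-voter count of $g$ only yields $|G|\geq|S|/2$, leaving exponentially many candidate groups, and the ``sharp-transitivity propagates voter-by-voter'' step that is supposed to rescue this is asserted rather than argued; as stated it is not a polynomial-time procedure. (For {\sf SA} the same sharper count gives $|S|^2(|S|+1)/2\leq|S|^2$, a contradiction for $|S|\geq 2$, so self-approval is automatic and your Latin-square machinery is unnecessary.)

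The hardness direction has a more fundamental problem: the dilution step does not do what you claim. Cloning each voter $s\in S$ into $c$ copies multiplies both the dissent count $d(u,v)$ and $|S|$ by $c$, so the normalized dissent $d(u,v)/|S|$ is \emph{unchanged}; and in the instances of Theorem~\ref{thm:SA}, pairs such as $(b_i,x)$ with $x$ a literal have nearly all of $S$ dissenting. After applying Lemma~\ref{lem:padding}, the set $\tilde S$ (of size at least $|S'|/2$) inherits exactly this dissent pattern through $g$, so for such a pair the dissent count within $S'$ is on the order of $|\tilde S|\geq|S'|/2$. Hence $S'\notin\calH_\lambda$ whenever $\lambda>1/2$, and the reduction is vacuous precisely in the regime the proposition is about (e.g.\ $\lambda=1-16/|S|$ for large $|S|$). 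The paper avoids this by not reducing from Theorem~\ref{thm:SA} at all: it reduces from cubic 1-in-3 SAT and builds $S=Y\cup T$ where only a \emph{constant} number $|T|=2k+2=16$ of voters carry the SAT-encoding preferences (via interleaved rankings whose lexicographic-preference constraints enforce ``at least one'' and ``at most one'' literal per clause), while the bulk $Y$ of $S$ unanimously ranks $S$ above $V-S$. This keeps every dissent count at most $16\leq(1-\lambda)|S|$, makes {\sf SA} and stability against all groups except $G=Y$ automatic, and concentrates the hardness in whether $G=Y$ threatens stability. That construction is where the constant $16$ actually comes from; it will not fall out of bookkeeping in your padding-based reduction.
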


\begin{proof}
We start with the proof of the positive statement.
To this end, we first note that it takes
  $(|V|-|S|) |S|^2=O(|V|^3)$ comparisions
  to check whether $S\in \calH_{\lambda}$.

Next we show that if $S\in \calH_{\lambda}$, then the only groups $G\subset S$
that can threaten the stability of $S$ are those for which
\[
|S-G|\leq 2\lfloor(1-\lambda)|S|\rfloor-1.
\]
Indeed, assume that
$(G,G')$ is a witness for $S\notin\calC_{GS}(\Pi)$, and
let $g=|G|$.
The assumption that  $S\in \calH_{\lambda}$ then implies that for all
$(u,v)\in G\times G'\subset S\times (V-S)$, there
are at most
\[
m=|S|-\lceil \lambda|S|\rceil=
\lfloor (1-\lambda)|S|\rfloor
\]
elements
$s\in S-G\subset S$ such that $v\succ_{\pi_s} u$. Thus the sum
over all triples $(u,v,s)\in G\times G'\times (S-G)$ obeying this
condition can be at most $g^2 m$.  On the other hand,
if $s$ lexicographically prefers $G'$ over $G$, the number of
pairs $(u,v)\in G\times G')$ obeying the above condition is at
least $\frac{g(g+1)}2$, given a lower bound of $|S-G|\frac{g(g+1)}2$ on the
above number of triples.  This proves that
$
|S-G|\leq\frac{2g}{g+1}m,
$
and hence $|S-G| \leq 2m-1$, where in the last step we used that both $|S-G|$ and $m$ are integers.

Thus for $(1-\lambda)|S|<2$, we
 may assume that  $G-S$ has size $1$
(the case $G-S=\emptyset$ is trivial), which shows that
   there are at most $|S|$ possible choices for $G$.
Given $G$,
  we then only have to check whether a potential
  $G'\subset V-S$ is lexicographically preferred to $G$
  by a single linear order $\pi_s$,
  where $s$ is the single element of $S-G$.
Using Proposition~\ref{prop:lex},
  the existence of such a $G'$ can be checked by greedily
  choosing the first $|G|$ elements of $V-S$ with respect
  to $\pi_s$.
If this set is lexicographically preferred to $G$, we know that
 $S\notin\calC_{GS}$, and if for all $G$
  considered in the first step, the greedily found $G'$
  is not lexicographically preferred to $G$, $S\in\calC_{GS}$.
Since all $S\in\calH_\lambda$ are self-approving when $\lambda>1/2$,
this completes the proof of the positive statement.

To prove the negative statement, we use that it is NP-complete
  to determine whether in a formula consisting of $3$-clauses, every
  clause is satisfied by exactly one literal in the clause, and that
  this problem stays NP complete if we restrict ourselves to the case
  where each variable appears in exactly $3$ clauses (cubic 1-in-3
  SAT) \cite{moore:hardtiles}.
Note that this means that we can
  partition the set of clauses into $k=7$ classes such that the
  clauses in each class don't share any variables (to see this,
  consider the graph obtained by joining two clauses whenever they
  share a variable; this graph has maximal degree at most $6$, and
  hence can be colored by $7$ colors, given the desired partition).

Thus consider $n$ boolean variables $\{x_1,\dots,x_n\}$ and $k$ sets
of 3-in-1 clauses $C_i$ such that the clauses in each $C_i$ have no
common variables.  We define $X$ as the set of literals, $X=\{x_1,\bar
x_1,\dots, x_n,\bar x_n\}$, and choose two additional sets $Y$ and
$T$, of size $n$ and $2k+2$, respectively.  It will be convenient to
label the elements of $Y$ as $y_1,\dots, y_n$, and the elements of $T$
as $1,\dots, 2k+2$.  Set
\[
S=Y\cup T\quad\text{and}\quad V=S\cup X,
\]
and choose $\Pi'\in L(V)^V$ of the form
\begin{itemize}
\item  If $s\in Y$, $\pi'_s$ ranks all of $T$ first, followed by everyone in $Y$, followed by everyone in $X$
\item If $s \in T$, $\pi'_s$ ranks all of $T$ first, then ranks $V-T$ according to a yet to be determined $\pi_s\in L(Y\cup X)$.
\item If $v \in V-S$, $\pi'_v$ is arbitrary.
\end{itemize}
With this ranking, everyone in $Y$ ranks all of $S$ above all of $V$,
showing that $S\in \calH_\lambda((V,\Pi'))$ as long as $|S-Y|\leq
(1-\gamma)|S|$, i.e., as long as $(1-\gamma)|S|\geq 2(k+1)=c_2$.  It
also shows that $S$ is self-approving, since everyone in $S$ prefers
all of $T$ to all of $V-S$, which does not allow for a subset
$G'\subset V-S$ such that $G'$ is lexicographically preferred to $S$
by everyone in $S$.  By the same reasoning we also see that $S$ is
group-stable against any subgroup $G$ which has a non-zero
intersection with $T$.  Finally, $S$ is also stable against any subgroup
$G$ such that $Y\setminus G\neq\emptyset$, since for such a subset,
$S-G$ contains an element $s\in Y$ which prefers everyone in $S$ to
everyone outside $S$.

Thus the only subgroup $G$ against which $S$ could be unstable is the
set $G=Y$, i.e.,
$S\notin (\calH_{\lambda}\cap \calC_{GS}\cap \calC_{SA})(V,\Pi')$
  if and only if there exists a subset $G'\subset
X$ such that $G'$ is lexicographically preferred to $G=Y$ by everyone
in $T$.  We now show that by defining $\Pi$ appropriately, such a $G'$
exists if and only if the the 1-in-3 SAT problem given by
$C_1,\dots,C_k$ has a satisfying assignment.

We first define $\pi_1$ and $\pi_2$:
\[
\begin{aligned}
\pi_1&=[x_1,\bar x_1,y_1,\dots, x_n,\bar x_n,y_n]
\\
\pi_2&=[x_n,\bar x_n,y_n,\dots, x_1,\bar x_1,y_1].
\end{aligned}
\]
Clearly, $G'$ is lexicographically preferred to $G$ by both $\pi_1$
and $\pi_2$ if $G'$ contains exactly one of $x_i$ and $\bar x_i$ for
each $i$.  On the other hand, if $G'$ is lexicographically preferred
to $G$ by $\pi_1$, then by Proposition~\ref{prop:lex}, $G'$ must
contain at least one of $x_1$ and $\bar x_1$, and if it is
lexicographically preferred to $G$ by $\pi_2$, it can contain at most
one of $x_1$ and $\bar x_1$.  Continuing by induction, we see that
$G'$ is lexicographically preferred to $G$ by both $\pi_1$ and $\pi_2$
if and only if $G'$ contains exactly one of $x_i,\bar x_i$ for all
$i$, i.e., if $G'$ corresponds to a truth assignment for the variables
$x_1,\dots,x_n$.

In a similar way, if $C_i$ consists of the clauses
$\{z_1,z_2,z_3\},\{z_4,z_5,z_6\},\dots, \{z_{3\ell-2},z_{3\ell-1},z_\ell\}\subset X$, we define
\[
\begin{aligned}
\pi_{2i+1}&=[z_1,z_2,z_3,y_1,z_4,z_5,z_6,y_2,\dots, z_{3\ell-2},z_{3\ell_1},z_\ell,y_\ell,Q]
\\
\pi_{2i+2}&=[Q,z_{3\ell-2},z_{3\ell_1},z_\ell,y_\ell,\dots,z_4,z_5,z_6,y_2,z_1,z_2,z_3,y_1],
\end{aligned}
\]
where $Q$ ranks everyone in $X-\{z_1,\dots,z_{3\ell}\}$ before the
remaining elements $y_{\ell+1},\dots,y_n\in Y$.  Now the first ranking
enforces that at least one literal of the clause $\{z_1,z_2,z_3\}$ is
chosen, while the last enforces that there is at most one such
literal.  Combining these two and continuing by induction, we see that
$G'$ is lexicographically preferred to $G$ by both $\pi_{2i+1}$ and
$\pi_{2i+2}$ if and only if exactly one literal of each clause in
$C_i$ is chosen.

Putting everything together, we see that the 3-in-1 SAT problem has a
satisfying assignment if and only if $S\notin (\calH_{\lambda}\cap
\calC_{GS}\cap \calC_{SA})(V,\Pi')$.
\end{proof}

\subsection{Number of Potential Communities}

\begin{proposition}
\label{prop:comprehensive-size}
Assume that $n\geq 8$.
There exists a preference network $A= (V,\Pi)$ such that
  $\calC_{comprehensive}(A) \geq 2^{n/2}$.
\end{proposition}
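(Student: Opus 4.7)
The plan is to exhibit an explicit preference network on $n$ elements (I write out the construction for $n$ even; the odd case requires only a minor adjustment of the split) such that all $2^{n/2}$ sets of the form $S = V_1 \cup T$, with $V_1$ a distinguished half of $V$ and $T\subseteq V_2 := V-V_1$, lie in $\calC_{comprehensive}(A)$.

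I would split $V = V_1 \cup V_2$ with $|V_1| = |V_2| = n/2$ and define $\Pi$ as follows. Each $i \in V_1$ gets the preference that lists $V_1$ in natural order followed by $V_2$ in the cyclic order starting at $n/2+i$; each $j \in V_2$ gets the preference that lists $j$ first, then all of $V_1$ in natural order, then the rest of $V_2$ arbitrarily. The two structural properties to retain are (i) every $V_1$-voter strictly prefers all of $V_1$ to all of $V_2$, and (ii) as $i$ varies over $V_1$, the top-of-$V_2$ element of $\pi_i$ runs through all of $V_2$ bijectively.

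Next I would verify that $S = V_1 \cup T$ satisfies both {\sf SA} and {\sf GS}. Self-Approval is the easy part: if $T \neq \emptyset$ then $|S|>n/2$ and {\sf SA} is vacuous, while if $T = \emptyset$ the only candidate witness set is $G' = V_2$, and for voter $1 \in V_1$ the top of $S$ is $1$ and lies above every element of $V_2$, so by Proposition~\ref{prop:lex} voter $1$ does not lexicographically prefer $V_2$ over $S$. For Group Stability, fix nonempty $G \subsetneq S$ and $G'\subseteq V_2-T$ with $|G|=|G'|$; a quick size count rules out $V_1 \subseteq G$. Writing $G_L = G\cap V_1$, if $G_L \neq\emptyset$, any $s \in V_1 - G_L$ has its top of $G$ inside $V_1$ and its top of $G'$ inside $V_2$, so by Proposition~\ref{prop:lex}, $s$ does not lexicographically prefer $G'$ over $G$. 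If $G_L = \emptyset$, so $G \subseteq V_2$, I would pick any $a \in G$, set $s = a - n/2 \in V_1$, and use property (ii): $a$ is the top of $V_2$ in $\pi_s$, and since $G$ and $G'$ are disjoint subsets of $V_2$, $a$ is the top of $G \cup G'$ in $\pi_s$, so once more $s$ does not lexicographically prefer $G'$ over $G$.

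The main obstacle is precisely the ``$G \subseteq V_2$'' subcase, since the clique-style argument from $V_1$ cannot be invoked via the easy ``$V_1$ beats $V_2$'' inequality; the cyclic design on the $V_2$-tails of the $V_1$-preferences is engineered exactly so that every potential leader of $G$ inside $V_2$ is the $V_2$-top of some $V_1$-voter, forcing any lex-preference to fail. Ranging $T$ over the $2^{n/2}$ subsets of $V_2$ then produces $2^{n/2}$ distinct comprehensive communities.
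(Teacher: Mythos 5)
Your proof is correct, and it follows the same overall strategy as the paper's: exhibit one explicit profile in which every set of the form (a fixed half of $V$) $\cup$ (an arbitrary subset of the other half) is both self-approving and group stable, giving $2^{n/2}$ members of $\calC_{comprehensive}(A)$. The gadget, however, is genuinely different. The paper uses a ``hero and sidekick'' profile with $n/2$ duos, where each member of a duo ranks its own hero first, then its own sidekick, then the remaining heroes, then the remaining sidekicks; the communities are all heroes plus any set of sidekicks. There the heroes do \emph{not} form a clique (each hero prefers its own sidekick over the other heroes), and the delicate subcase of {\sf GS} where $G$ consists of sidekicks is handled by noting that a hero in $S-G$ whose own sidekick lies in $G$ prefers that sidekick to everyone outside $S$. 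Your construction instead makes $V_1$ a clique in the sense that $V_1\succ_{\pi_s} V_2$ for every $s\in V_1$, which immediately disposes of {\sf SA} (the only non-vacuous case being $S=V_1$, $G'=V_2$) and of the subcase $G\cap V_1\neq\emptyset$, and it handles the remaining subcase $G\subseteq V_2$ with the cyclic tails: since every element of $V_2$ is the $V_2$-top of some $V_1$-voter, and $G\subseteq T$ and $G'\subseteq V_2-T$ are disjoint, Proposition~\ref{prop:lex} applied to that voter rules out any lexicographic preference of $G'$ over $G$. All the steps check out (including the size count showing $V_1\not\subseteq G$, so a witness voter in $V_1-G$ always exists). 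A small bonus of your version is that it does not actually need the hypothesis $n\geq 8$, and it cleanly isolates the two structural properties (i) and (ii) that drive the count; the paper's duo gadget is arguably closer to an intuitive picture of overlapping communities but requires the extra care reflected in that hypothesis.
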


\begin{proof}
The preference profile, $\Pi_{H\&S}$, that is about to be described
has been dubbed the ``hero and sidekick'' example as will soon become
clear.  Consider a world composed of $n/2$ hero-sidekick duos.  Each
member of a hero-sidekick duo first prefers the hero of that
duo then  the sidekick of the duo, then all other heroes,
followed lastly by all other sidekicks (in some fixed but arbitrary
order).  Now consider a subset, $S$, that is composed of all heroes
and an arbitrary set of sidekicks.  Note that because there are
$2^{n/2}$ different sets of sidekicks, it is sufficient to show that
$S$ is a community in $\calC_{comprehensive}([n],\Pi_{H\&S})$.

First, note that $S$ clearly satisfies {\sf SA}.

To show that $S$ satisfies {\sf GS}, consider two sets $G\subset
  S$ and $G'\subset V-S$ of equal size. We first note that it will be
  enough to consider the case where $(S-G)\times G$ contains no
  hero-sidekick pair $(u,v)$, since otherwise $u$ would prefer $v$
  over everyone else, in particular over everyone in $G'$.  Applying
  this to the sidekicks in $G$, we conclude that $G$ must contain at
  least as many heros as sidekicks.  On the other hand, $G'$ can't be
  lexicographically preferred to $G$ if $G$ contains at least two
  heros, showing that only two cases are possible: $G$ consisting of a
  hero-sidekick pair, or $G$ made up of just a single hero.  But
  neither one leads to a counter example if $|S-G|>|G|=|G'|$, since
  then we can find an $s\in S-G$ which is not the partner of any
  sidekick in $G'$, which means that $s$ prefers the hero in $G$ to
  everyone in $G'$.  Since $S$ contains all heros by assumption, we
  see that $S$ is group stable as soon as $n\geq 8$.
\end{proof}

\section{Stability of Communities}\label{Sec:Stability}
In this section, we  consider several stability measures
  and their impact on community structures.
In particular, we focus on preference perturbations in Section
\ref{sec:perturbation}
  and the concept of stable fixed points of an aggregation function in
  Section \ref{sec:StableFixedPoints}.
In both subsections, we will use
 \B3CT self-determined communities as our main examples
  to illustrate these measures.
In Section \ref{subsec:propHarmonious}, we will study the
  structure of stable harmonious communities.

\subsection{Community Stability with Respect to Preference  Perturbations}
\label{sec:perturbation}

We first study the structure of self-determined
  communities that remain self-determined even after a certain degree
  of changes in their members' preferences.

\begin{definition}[Preference Perturbations]
Let ${\emptyset\neq} S\subseteq V$, and let $\Pi$, $\Pi'$ be two preference profiles over $V$.
For $0 \leq \delta\leq 1$, we say $\Pi'$ is a
   $\delta$-{\em perturbation of $\Pi$ with respect to $S$}
 if
   $$\max_{v\in V}\left|\left\{i\in S :\ \pi_i(v)\neq \pi_i'(v) \right\}\right|\leq \delta |S|.$$

Given any community rule $\calC$ and a preference network $A=(V,\Pi)$,
   we say that a community $S\in \calC(A)$ is
   {\em stable under $\delta$-perturbations}
   if $S\in \calC((V,\Pi'))$ for all $\Pi'$ that are
   $\delta$-perturbation of $\Pi$ with respect to $S$.
\end{definition}

In other words, a preference profile is a
   $\delta$-perturbation of another profile if, for each $v\in V$, at most a $\delta$-fraction of
   the members of $S$  changed their preference of $v$.  Recalling Definition~\ref{def:a-b-community}, we now
   state our first stability result for  \B3CT self-determined communities.

\begin{proposition}
For any preference network $A = (V,\Pi)$,
if $S\subset V$
  is a \B3CT self-determined community that is stable
  under $\delta$-perturbations, then
$\exists \alpha  >\delta$ such that $S$ is an
$(\alpha,\alpha-\delta)$-\B3CT community.
Conversely, if  $S\subset V$ is an
  $(\alpha,\beta)$-\B3CT self-determined community,
  then it is stable under $(\alpha-\beta)/2$-perturbations.
\end{proposition}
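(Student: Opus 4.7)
The plan is to reduce both directions to the observation that if $\Pi'$ is a $\delta$-perturbation of $\Pi$ with respect to $S$, then $|\phi_S^{\Pi'}(w) - \phi_S^\Pi(w)|\leq \delta|S|$ for every candidate $w\in V$, since at most $\delta|S|$ members of $S$ alter the position of $w$ and each such member can flip the indicator $\mathbf{1}[\pi_s(w)\leq |S|]$ in at most one direction.

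I would first handle the converse. Assume $S$ is an $(\alpha,\beta)$-\B3CT community and set $\delta=(\alpha-\beta)/2$. For any $\delta$-perturbation $\Pi'$, the observation gives $\phi_S^{\Pi'}(u)\geq \alpha|S|-\delta|S|=\tfrac{\alpha+\beta}{2}|S|$ for every $u\in S$, and, using the strict inequality in the definition of an $(\alpha,\beta)$-\B3CT community, $\phi_S^{\Pi'}(v)<\beta|S|+\delta|S|=\tfrac{\alpha+\beta}{2}|S|$ for every $v\notin S$. Hence $\phi_S^{\Pi'}(u)>\phi_S^{\Pi'}(v)$, so $S$ remains self-determined.

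For the forward direction I would set $\alpha=\min_{u\in S}\phi_S^\Pi(u)/|S|$ and $\beta=\max_{v\notin S}\phi_S^\Pi(v)/|S|$. Since $\beta\geq 0$, it suffices to show $\alpha-\beta>\delta$, which immediately gives both $\alpha>\delta$ and the required $(\alpha,\alpha-\delta)$-\B3CT property. Assume for contradiction $\alpha-\beta\leq\delta$ and fix extremizers $u^*\in S$ and $v^*\notin S$. Writing $A=\{s\in S:\pi_s(u^*)\leq |S|\}$ and $B=\{s\in S:\pi_s(v^*)\leq |S|\}$, we have $|A\setminus B|\geq |A|-|B|=(\alpha-\beta)|S|$, so at least $(\alpha-\beta)|S|$ members of $S$ rank $u^*$ in their top $|S|$ and $v^*$ outside it. Choose any $(\alpha-\beta)|S|$ such members and, for each, obtain $\pi'_s$ from $\pi_s$ by swapping the positions of $u^*$ and $v^*$; leave every other preference unchanged. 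Only $u^*$ and $v^*$ change rank, each in $(\alpha-\beta)|S|\leq \delta|S|$ voters, so $\Pi'$ is a valid $\delta$-perturbation. Direct counting gives $\phi_S^{\Pi'}(u^*)=\beta|S|<\alpha|S|=\phi_S^{\Pi'}(v^*)$ (using $\alpha>\beta$), so $S$ is no longer self-determined in $\Pi'$, contradicting stability.

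The main obstacle is the forward construction: one needs a single $\delta$-perturbation that simultaneously drops the vote of some $u^*\in S$ and raises the vote of some $v^*\notin S$ without exceeding the per-candidate bound $\delta|S|$. The localized $u^*$-$v^*$ swap handles this cleanly because it affects the rank of only those two candidates, and the vote-count arithmetic becomes exact rather than merely bounded; all remaining work is elementary counting.
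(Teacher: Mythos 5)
Your proof is correct and follows essentially the same strategy as the paper's: the converse is the same per-candidate counting bound $|\phi_S^{\Pi'}(w)-\phi_S^{\Pi}(w)|\le\delta|S|$, and the forward direction argues by contradiction via an explicit destabilizing $\delta$-perturbation. The only difference is cosmetic: the paper first proves $\alpha>\delta$ by demoting $u^*$ to the bottom for its $\alpha|S|\le\delta|S|$ supporters and then obtains $\beta<\alpha-\delta$ ``similarly,'' whereas your single $u^*$--$v^*$ swap establishes $\alpha-\beta>\delta$ directly and deduces $\alpha>\delta$ from $\beta\ge 0$.
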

\begin{proof}
Let $u^* = \mbox{argmin}\{\phi_S^\Pi(u): u\in S\}$, and let $\alpha =
\phi_S^\Pi(u^*)/|S|$.
We now prove that the condition of the
  proposition implies
$\alpha  >\delta$.
Suppose this is not true. Letting
  $T =\{s\in S : \pi_s(u^*) \leq |S|\}$,
  we have $|T| = \alpha |S| \leq\delta |S|$.
Now consider a preference profile $\Pi'$ such that
  for $s \in T$, $\pi'_v$ shifts the ranking of $u^*$
  to $n$ while maintaining the relatively rankings of all other
 elements in $\pi_s$, and $\pi'_v = \pi_v$ $\forall v\not\in T$.
Then, as the ranking of $u^*$ is more than $|S|$ in every $\pi'_s$ for
$s\in S$, we conclude that $S$ is not a \B3CT self-determined
community in $(V,\Pi')$, contradicting  the assumption that
 $S$ is stable under $\delta$-perturbations.
Now let $v^* = \mbox{argmax}\{\phi_S^\Pi(v): v\in V-S\}$, and let $\beta =
\phi_S^\Pi(v^*)/|S|$.
We can similarly show that
  if $S$ is stable under $\delta$-perturbations,
  then $\beta < \alpha - \delta$.

The second direction of the proposition is straightforward.
\end{proof}

Thus, the main result of \cite{B3CT} can be restated as:
   there are at most $n^{O(1/\delta)}$ \B3CT
   communities that are stable under $\delta$-perturbations.
  We further refine the stability
  studies of community functions by introducing the notion
  of membership-preserving perturbation:

\begin{definition}
Let $(V,\Pi)$ be a preference network, and let
${\emptyset\neq} S\subset V$.  A preference profile $\Pi'$
on $V$ is a {\em membership-preserving perturbation of
  $\Pi$ with respect to $S$} if  $\forall s\in S$,
$\pi_s(S)=\pi'_s(S)$.
\end{definition}

Note that the preference profile $\Pi'$ considered in
both Axiom {\sf CRNM} and {\sf CRM} are special cases of membership-preserving perturbations; in Axiom {\sf CRNM},
$\Pi_S$ and $\Pi'_S$  {\em agree on}
$S$ (i.e., for all $s, u \in S$, $\pi_s(u) = \pi'_s(u)$,
  implying in particular $\pi_s(S)=\pi'_s(S)$),
  and in Axiom {\sf CRM}, $\Pi_S$ and $\Pi'_S$  {\em agree on}
$V-S$ (i.e., for all $s\in S, v \in V-S$, $\pi_s(v) = \pi'_s(v)$,
implying $\pi_s(V-S)=\pi'_s(V-S)$) and hence also
$\pi_s(S)=\pi'_s(S)$.

\begin{theorem}
For any preference network $A = (V,\Pi)$,
  the number of \B3CT communities that are stable
  under membership-preserving, $\delta$-perturbations of $\Pi$
  is polynomial in $n^{O(1/\delta)}$.
\end{theorem}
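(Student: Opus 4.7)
The plan is to reduce the counting problem to the main counting result of \cite{B3CT} quoted immediately before the theorem: there are at most $n^{O(1/(\alpha-\beta))}$ $(\alpha,\beta)$-\B3CT communities when $\alpha-\beta$ is a constant. Concretely, I will show that any \B3CT community $S$ that is stable under all membership-preserving $\delta$-perturbations of $\Pi$ is already an $(\alpha,\alpha-2\delta)$-\B3CT community for some $\alpha>2\delta$; the $n^{O(1/\delta)}$ bound then follows immediately.

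To establish the $(\alpha,\alpha-2\delta)$ structure, set $\alpha = \min_{u\in S}\phi_S^\Pi(u)/|S|$ with witness $u^*\in S$ and $\beta = \max_{v\notin S}\phi_S^\Pi(v)/|S|$ with witness $v^*\in V-S$. The central observation is that, for each $s\in S$, a membership-preserving modification of $\pi_s$ is precisely a permutation of the ranks in $\pi_s(S)$ composed independently with a permutation of the ranks in $\pi_s(V-S)$. This gives the adversary two disjoint attack channels: (i) within-$S$ swaps that can remove $u^*$ from $s$'s top $|S|$ by exchanging $\pi_s(u^*)$ with $\pi_s(u')$ for some $u'\in S$ with $\pi_s(u')>|S|$; and (ii) within-$(V-S)$ swaps that can insert $v^*$ into $s$'s top $|S|$ by the analogous move. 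Each swap touches only two coordinates of $\pi_s$, the two kinds of swaps can be combined freely into a single perturbation, and the $\delta$-perturbation budget permits up to $\delta|S|$ swaps per target, producing a $\Pi'$ with $\phi_S^{\Pi'}(u^*)\le \phi_S^\Pi(u^*)-\delta|S|$ and $\phi_S^{\Pi'}(v^*)\ge \phi_S^\Pi(v^*)+\delta|S|$. Stability then forces $\alpha-\beta>2\delta$ (up to a benign integer adjustment absorbed in the constant).

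The step I expect to be the main obstacle is verifying that the adversary can actually saturate the $\delta|S|$ bound in each channel; attack (i) requires at least $\delta|S|$ voters $s\in S$ where $\pi_s(u^*)\le|S|$ \emph{and} some $u'\in S$ has $\pi_s(u')>|S|$, and symmetrically for (ii). These conditions fail only in two degenerate regimes: either $u^*$ lies outside the top $|S|$ for so many voters that $\phi_S^\Pi(u^*)$ is already of order $\delta|S|$ (in which case $\alpha\le 2\delta$ and the stability assumption is violated directly, as in the previous proposition), or every voter $s\in S$ forces all of $S$ into its top $|S|$ (in which case $\pi_s(u^*)\le|S|$ and $\pi_s(v^*)>|S|$ hold identically for all $s\in S$, making $S$ essentially a clique, to which the $(\alpha,\alpha-2\delta)$-analysis trivially applies with $\alpha=1$, $\beta=0$). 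Handling these corner cases cleanly and then plugging the resulting $(\alpha,\alpha-2\delta)$-\B3CT classification into the counting theorem of \cite{B3CT} completes the proof.
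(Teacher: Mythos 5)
Your central lemma --- that stability under membership-preserving $\delta$-perturbations forces $S$ to be an $(\alpha,\alpha-2\delta)$-\B3CT community --- is false, and the two ``degenerate regimes'' you list do not exhaust the ways the saturation argument fails. The obstruction is voters $s\in S$ with $\pi_s(S)=[1:|S|]$: for such a voter no membership-preserving perturbation can demote a member below rank $|S|$ or promote a non-member into the top $|S|$, so that voter contributes nothing to either attack channel. Concretely, take $|V|=2m$, $|S|=m$, let $m/2+2$ members of $S$ place all of $S$ in positions $[1:m]$, and let the remaining $m/2-2$ members place all of $V-S$ in positions $[1:m]$. Every member then receives $m/2+2$ votes and every non-member $m/2-2$, so $S$ is a \B3CT community with $\alpha-\beta=4/m$; yet every membership-preserving perturbation leaves every vote count unchanged (each voter's top-$|S|$ block lies entirely inside $S$ or entirely inside $V-S$), so $S$ is stable for \emph{every} $\delta$ while violating your $(\alpha,\alpha-2\delta)$ conclusion as soon as $m>2/\delta$. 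This example has $\alpha\approx 1/2$ and only half of its voters are ``clique voters,'' so it falls into neither of your degenerate regimes. (Your first regime is also mishandled: when $\alpha\le 2\delta$, the demotion attack you import from the preceding proposition is not membership-preserving if $u^*$'s supporters all have $\pi_s(S)=[1:|S|]$.)

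The missing idea is that such communities must be counted by a different mechanism rather than forced into the $(\alpha,\beta)$ mold. The paper proves a dichotomy: either no voter in the relevant set $S-T$ (those not already ranking $v^*$ in their top $|S|$) satisfies $\pi_s(S)=[1:|S|]$, in which case the promotion channel alone --- swapping $v^*$ with a non-member already in the top $|S|$, which is automatically membership-preserving --- shows $\beta<\alpha-\delta$ and the $n^{O(1/\delta)}$ count of \cite{B3CT} applies; or some voter $s$ has $\pi_s(S)=[1:|S|]$, in which case $S=\pi_s^{-1}([1:|S|])$ is a prefix of a single preference list and there are at most $n^2$ such sets. Adding this second branch, and abandoning the demotion channel (whose availability is precisely what cannot be guaranteed), is what is needed to repair your argument.
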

\begin{proof}
It will be sufficient to show that if a \B3CT community
  $S$ is stable under membership-preserving,
   $\delta$-perturbations of $\Pi$,
   then
   either
\begin{enumerate}
\item $S$ is an $(\alpha,\alpha-\delta)$-\B3CT
  community for some $\alpha>\delta$, or
\item  $\exists s\in S$ such that $\pi_s(S)= [1:|S|]$.
\end{enumerate}
 Indeed, in the first case, there are at most $n^{O(1/\delta)}$
many $(\alpha,\alpha-\delta)$-\B3CT
  communities by  \cite{B3CT}, and in the second case,
we have that all communities are of the form
  $S=\pi_v^{-1}([1:k])$ for some $s\in V$ and $k\in [n]$,
  showing that
  there are at most $n^2$ such communities.

To establish the above statement,
  letting $u^* = \mbox{argmin}\{\phi_S^\Pi(u): u\in S\}$
  and $\alpha = \phi_S^\Pi(u^*)/|S|$,
  we now prove that if $S$ is not an
  $(\alpha,\alpha-\delta)$-\B3CT  community,
  then there must be $s\in S$, $\pi_s[1:|S|] = S$.
The assumption that $S$ is not an
$(\alpha,\alpha-\delta)$-\B3CT  community
implies  that
$\phi_S^\Pi(v^*)/|S| \geq \alpha -\delta$
  where $v^* = \mbox{argmax}\{\phi_S^\Pi(v): v\in V-S\}$.
Let $T =\{s\in S : \pi_s(v^*) \leq |S|\}$.
Then, $|T| = \phi_S^\Pi(v^*) \geq (\alpha - \delta)|S|$.
Since $S$ is a \B3CT community, we know that $|T| < \alpha |S|$.

Using these conditions, we now define a perturbed preference profile.
Key to our construction is the following observation:
For each $s\in S-T$, if $\pi_s(S) \neq [1:|S|]$,
   then  $(V-S)\cap \pi_s[1:|S|] \neq \emptyset$.
Thus, there exists  $\pi'_s$ that agrees with $\pi_s$ on $S$ and
    $\pi'_{s}(v^*)\leq |S|$ -- we can simply swap $v^*$ with any
    element in $(V-S)\cap \pi_s[1:|S|]$.
Thus, either there exists $s\in S-T$ such that $\pi_s(S) = [1:|S|]$
 (which implies Case 2 above), or
  $\pi_s(S)\neq [1:|S|] ,\forall s \in S-T$.
The latter implies that we can find
     a set $\tilde S\subset S-T$ of size
  $\alpha |S| - |T| \leq \delta |S|$  and a
membership-preserving, $\delta$-perturbations $\Pi'$ of
$\Pi$ such that
    $\pi'_{s}(v^*)\leq |S|$ for   all $s\in T\cup\tilde S$, implying that
$S$ is not a \B3CT community in
$(V,\Pi')$.  This contradicts the assumption that
$S$ is stable under membership-preserving,
$\delta$-perturbations of $\Pi$.
\end{proof}

\subsection{Stable Fixed-Points of Social Choice }\label{sec:StableFixedPoints}

We can also strengthen the concept of fixed points in our social
   choice based community framework.
Particularly, we
  measure the stability of a community
  defined by a fixed-point rule according to some variation of
  the following definition.

\begin{definition}{\sc ($\delta$-Strong Fixed Points)}\label{DeltaSocialChoiceVoting}
Let $A = (V,\Pi)$ be a preference netowrk,
$F: \LV^* \rightarrow \bLV$ be a
  preference aggregation function, and $\delta \in [0:1]$ be a coherence
  parameter.
Then, $S\in \calC_F((V,\Pi))$  is {\em $\delta$-strong} if
for $\forall T\subseteq S$ such that $|T| \geq
  (1-\delta)\cdot |S|$,
  \[
  u\succ_{ F(\Pi_T)} v\qquad\forall u\in S, v\in  V-S.
  \]
\end{definition}

Our goal is to
  understand the
  influence of a preference aggregation function $F$ and
  the stability parameter $\delta$ ($0 \leq \delta \leq 1$) on the
   structure of the $\delta$-strong
   $F$-self-determined communities.

 Before discussing this further, we point out some
subtleties that arise when applying
Definition
\ref{DeltaSocialChoiceVoting}
to general aggregation functions.
We illustrate this subtlety using weighted
  fixed-point rules, and,
 in particular, by comparing the community rule defined
  by the \B3CT voting function  to that defined by the Borda count.

Recall that in Definition \ref{def:FPR},
  for preference networks with $n$ elements, a preference aggregation function
  is determined by a sequence of weighting vectors
  $W = (w^1,w^2,\dots)$
  where $w^k\in \mathbb{R}^n$, denotes the weights for the aggregation of $k$ preferences.
While this weight vector
  is independent of $k$ for the Borda count, it in general can
be  different for each $k$, and indeed does depend on $k$ for
 \B3CT voting.  Concretely, for the
   Borda count, every voting member gives  scores
 $n,n-1,\dots,1$ to the members of $V$,  while in  \B3CT voting,
 it gives a score of $1$ to the
  first $k$ in her preference list, making her scores
 dependent on the total number of voters, $k$.
Thus when defining self-determined communities with the
  Borda count, one does not need to first
 anticipate the community size before aggregating the preferences
 of its members, but when defining self-determined communities with
  \B3CT voting, the weight assigned to an element by a preference
  depends on the size of the subset under consideration.

In this regard, when measuring the stability of a community $S$,
 Definition \ref{DeltaSocialChoiceVoting} uses
  the same weighting vector to evaluate $F(\Pi_T)$ and $F(\Pi_S)$
  for the Borda count based community rule,
while it uses different weighting vectors to
  evaluate $F(\Pi_T)$ and $F(\Pi_S)$
  for the \B3CT community rule, and these weighting vectors
  depend on $|T|$.
Thus, the former application of Definition \ref{DeltaSocialChoiceVoting}
  appears more natural than the latter application.

As a result, we will use the following variation of
  Definition \ref{DeltaSocialChoiceVoting} to measure the strength
  of a \B3CT community.

\begin{definition}[$\delta$-Strong \B3CT Communities]\label{def:StrongB3CT}
For each $T\subseteq V$ and $i\in V$, let $\phi_{T,k}^\Pi(i)$
denote the number of votes that member $i$ would receive if each
member $s\in T$ were casting a vote for each of its $k$ most
preferred members according to its preference $\pi_s$.

For $\delta \in [0: 1]$, a non-empty
 set $S\subseteq V$ is a {\em $\delta$-strong \B3CT community} in
  $A = (V,\Pi)$
  if $\forall u\in S, v \in V -S$ and $T \subseteq S$ such that
  $|T| \geq (1-\delta)\cdot |S|$,
\[\phi_{T,|S|}^\Pi(u) > \phi_{T,|S|}^\Pi(v) \qquad\forall u\in S, v\in  V-S.
\]
\end{definition}

\begin{proposition}\label{prop:strong}
If $S$ is a $\delta$-strong \B3CT community
of $A = (V,\Pi)$, then  $\exists \alpha \geq \delta$ such that $S$
 is an $(\alpha,\alpha-\delta)$-\B3CT community.
\end{proposition}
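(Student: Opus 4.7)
The natural choice is $\alpha := \min_{u \in S} \phi_S^\Pi(u)/|S|$, which makes the member inequality $\phi_S^\Pi(u) \geq \alpha|S|$ hold by definition. It remains to establish (a) $\alpha \geq \delta$ and (b) the strict outsider bound $\phi_S^\Pi(v) < (\alpha - \delta)|S|$ for every $v\notin S$. Both will be proved by contradiction: assuming the bound fails, I will exhibit an explicit $T \subseteq S$ with $|T| \geq (1-\delta)|S|$ that violates $\delta$-strongness.

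Fix a minimizer $u^* \in S$ with $\phi_S^\Pi(u^*) = \alpha|S|$, and for each $w \in V$ set $T_w := \{s \in S : \pi_s(w) \leq |S|\}$, so that $|T_w| = \phi_S^\Pi(w)$. For (a), if $\alpha < \delta$ then $|T_{u^*}| < \delta|S|$, so $T := S \setminus T_{u^*}$ has $|T| > (1-\delta)|S|$ and $\phi_{T,|S|}^\Pi(u^*) = 0$, which cannot strictly dominate the non-negative tally of any $v \notin S$ as required by $\delta$-strongness.

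For (b), suppose some $v^* \notin S$ has $\phi_S^\Pi(v^*) \geq (\alpha - \delta)|S|$. The idea is to depress $u^*$'s tally without touching $v^*$'s by removing voters from $T_{u^*} \setminus T_{v^*}$. Choose $T' \subseteq T_{u^*} \setminus T_{v^*}$ of maximum size subject to $|T'| \leq \delta|S|$, put $T := S \setminus T'$, and compute
\[
\phi_{T,|S|}^\Pi(u^*) - \phi_{T,|S|}^\Pi(v^*) = \phi_S^\Pi(u^*) - |T'| - \phi_S^\Pi(v^*).
\]
A short two-case analysis closes the argument: if $|T_{u^*} \setminus T_{v^*}| \geq \delta|S|$, then $|T'| = \delta|S|$ and the above quantity is at most $\alpha|S| - \delta|S| - \phi_S^\Pi(v^*) \leq 0$ by the contradiction hypothesis; otherwise $|T'| = |T_{u^*} \setminus T_{v^*}| \geq |T_{u^*}| - |T_{v^*}| = \phi_S^\Pi(u^*) - \phi_S^\Pi(v^*)$, again giving a non-positive difference. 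In either case $\phi_{T,|S|}^\Pi(v^*) \geq \phi_{T,|S|}^\Pi(u^*)$, contradicting $\delta$-strongness of $S$.

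The main obstacle is the second case in (b): when $T_{u^*} \setminus T_{v^*}$ is smaller than $\delta|S|$, one cannot literally remove a full $\delta$-fraction of pure $u^*$-supporters. The resolution is the elementary set inequality $|T_{u^*} \setminus T_{v^*}| \geq |T_{u^*}| - |T_{v^*}|$, which ensures that the deficit in ``$u^*$-only'' voters is bounded by the existing gap $\phi_S^\Pi(u^*) - \phi_S^\Pi(v^*)$ -- exactly the gap that needs to be closed.
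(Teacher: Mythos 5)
Your proof is correct and follows essentially the same route as the paper's: both arguments derive the outsider bound by deleting voters who rank $u^*$ but not $v^*$ among their top $|S|$ (the paper's set $S_2$, your $T_{u^*}\setminus T_{v^*}$) to build a test set $T$ of size at least $(1-\delta)|S|$ on which $\delta$-strongness fails. The only cosmetic issue is that $\delta|S|$ need not be an integer, so in your first case $|T'|=\lfloor \delta|S|\rfloor$ rather than $\delta|S|$; the conclusion survives because the tally difference $\phi_S^\Pi(u^*)-|T'|-\phi_S^\Pi(v^*)$ is an integer strictly less than $1$, hence at most $0$.
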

\begin{proof}
Let $u^* = \mbox{argmin}\{\phi_S^\Pi(u): u\in S\}$, and let $\alpha =
\phi_S^\Pi(u^*)/|S|$ and let
$v^* = \mbox{argmax}\{\phi_S^\Pi(v): v\in V-S\}$, and let $\beta =
\phi_S^\Pi(v^*)/|S|$.
We now prove that $\alpha - \beta > \delta$.

The pair $u^*$ and $v^*$ partitions $S$ into four subsets.
\begin{eqnarray*}
S_0 & = & \{s \in S: (\pi_s (u^*)\not\in [1:|S|])\mbox{ and }(\pi_s (v^*)\not\in
[1:|S|])\}\\
S_1 & = & \{s \in S: (\pi_s (u^*)\not\in [1:|S|])\mbox{ and }(\pi_s (v^*)\in
[1:|S|])\}\\
S_2 & = & \{s \in S: (\pi_s(u^*)\in[1:|S|])\mbox{ and }(\pi_s (v^*)\not\in [1:|S|])\}\\
S_3 & = & \{s \in S: (\pi_s (u^*)\in [1:|S|])\mbox{ and }(\pi_s (v^*)\in
[1:|S|])\}
\end{eqnarray*}
Then
\begin{eqnarray*}
|S_0| + |S_1| + |S_2| + |S_3| & = & |S|,\label{con1}
\\
|S_1| +  |S_3|  & = &\beta\cdot |S|,
\\
|S_2| + |S_3| & = &\alpha\cdot |S|,
\\
|S_0| + 2|S_1| + |S_3| & < &(1-\delta)\cdot|S|,
 \end{eqnarray*}
 where
the last inequality follows from the assumption that $S$ is a
  $\delta$-strong \B3CT-self-determined community.

   To see this, we
  first note that $|S_1| <|S_2|$ due to the fact that $\beta < \alpha$.
Define $T$ to be the union of $S_0$, $S_1$, $S_3$, and
  $\tilde S_2$, where $\tilde S_2\subset S_2$
is an arbitrary subset of size $|S_1|$.  Assume
 by contradiction that
$|T|\geq (1-\delta)|S|$.  Since $S$ is  a
  $\delta$-strong \B3CT-self-determined community,
  this would imply that $\phi_{T,|S|}(u^*)>\phi_{T,|S|}(v^*)$, i.e.
  \[
 0< \sum_{s\in T}\Bigl(1_{\pi_s(u^*)\leq |S|}-1_{\pi_s(v^*)\leq |S|}\Bigr).
  \]
But the right hand side is
  equal to $|\tilde S_2|-|S_1|=0$, leading to a contradiction.
Therefore, $|T| < (1-\delta)\cdot |S|$, as claimed.

Subtracting the fourth of the above equations from the first, we obtain
$(|S_0| + |S_1| + |S_2| + |S_3|)-(|S_0| + 2|S_1| + |S_3|)
 =
(|S_2| + |S_3|)-(|S_1| + |S_3|) > |S| - (1-\delta)\cdot|S| = \delta\cdot |S|
$.
Thus, by the second and third equation, we have
$\alpha\cdot|S| - \beta\cdot|S|>\delta\cdot |S|.$
\end{proof}

\begin{proposition}
For any $\delta \in (0,1)$,
  the number of $\delta$-strong
  \B3CT communities
  in any preference network is $n^{O(1/\delta)}$.
\end{proposition}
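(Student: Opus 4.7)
The plan is to combine Proposition~\ref{prop:strong} with the counting result from \cite{B3CT} to reduce the problem to counting $(\alpha,\beta)$-\B3CT communities with a gap $\alpha-\beta=\delta$.

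First, I would apply Proposition~\ref{prop:strong} to observe that every $\delta$-strong \B3CT community $S$ in $A=(V,\Pi)$ is an $(\alpha_S, \alpha_S-\delta)$-\B3CT community for some $\alpha_S\geq\delta$. This lets us partition the family of $\delta$-strong \B3CT communities according to the value of $\alpha_S$. Crucially, since $\alpha_S$ must equal $\phi_S^\Pi(u^*)/|S|$ for some integer numerator $k\in[1:|S|]$ and some denominator $|S|\in [1:n]$, the possible values of $\alpha_S$ lie in the set $\{k/m : 1\leq k\leq m\leq n\}$, which has at most $n^2$ (in fact $O(n^2)$) elements.

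Next, I would invoke the main enumeration result of \cite{B3CT}, recalled in the text immediately after Definition~\ref{def:a-b-community}: for each fixed pair $(\alpha,\beta)$ with $\alpha-\beta$ bounded away from zero, the number of $(\alpha,\beta)$-\B3CT communities is at most $n^{O(1/(\alpha-\beta))}$. Applying this with $\beta=\alpha-\delta$, we obtain a bound of $n^{O(1/\delta)}$ on the number of $(\alpha,\alpha-\delta)$-\B3CT communities, uniformly in $\alpha$.

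Finally, I would sum this bound over the $O(n^2)$ admissible values of $\alpha$, yielding a total of $O(n^2)\cdot n^{O(1/\delta)} = n^{O(1/\delta)}$ $\delta$-strong \B3CT communities, where the $n^2$ factor is absorbed into the $O(1/\delta)$ exponent (since $\delta<1$ makes $1/\delta>1$, and in any case additive constants can be folded into the big-$O$). I do not expect any serious obstacle: the only subtle point is verifying that the $\alpha$ value in Proposition~\ref{prop:strong} is discrete and takes only polynomially many values, so the union bound over $\alpha$ preserves the $n^{O(1/\delta)}$ form; once that is noted, the argument is a direct citation of \cite{B3CT}.
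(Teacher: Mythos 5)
Your proposal is correct and follows essentially the same route as the paper, which simply combines Proposition~\ref{prop:strong} with the enumeration bound of \cite{B3CT}; your extra care in taking a union bound over the $O(n^2)$ possible discrete values of $\alpha$ is a harmless refinement that the paper leaves implicit.
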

\begin{proof}
This follows from the main result of \cite{B3CT} and Proposition
\ref{prop:strong} above.
\end{proof}

\subsection{Stable Harmonious Communities}\label{subsec:propHarmonious}

Applying the stability notions of Sections \ref{sec:perturbation}
and  \ref{sec:StableFixedPoints},
we define two types of stable harmonious communities.
Before doing so, we recall the definition of harmonious communities,  Definition~\ref{def:harmonious}, and the definition of
$\lambda$-harmonious communities from Rule~\ref{rule:lambda-harm}.

\begin{definition}[Stable Harmonious Communities]\label{def:StableHarmonious}
 For $\delta \in [0: 1/2]$, a non-empty subset $S$ is a
{\em $\delta$-stable harmonious community} in $A  = (V,\Pi)$
 if $S$ is $(\delta+1/2)$-harmonious, i.e., if
$\forall u\in S, v \in V -S$, at least $(1/2+\delta)$-fraction
  of  $\{\pi_s\ : \ s\in S\}$  prefer $u$ over $v$.
For $\delta \in [0: 1]$, $S$ is a
{\em $\delta$-strong harmonious community} in $A$
if $\forall u\in S, v \in V -S$ and $T \subseteq S$ such that
  $|T| \geq (1-\delta)\cdot |S|$, the
   majority of  $\{\pi_s\ : \ s\in T\}$  prefer $u$ over $v$.
\end{definition}

Note that a $\delta$-stable harmonious community
is not quite the same as a harmonious community stable
under $\delta$-perturbations as defined in Section \ref{sec:perturbation}.
Instead, we have that
a $\delta$-stable harmonious community is  a harmonious
community that is stable under any $\delta'$-perturbations as long as
$\delta'<\delta/2$, and that conversely, a harmonious community that is stable
under $\delta$ perturbation is a $\delta$-stable harmonious community.
By contrast, the definition of $\delta$-strong harmonious communities
maps exactly to the definition given in Section~\ref{sec:StableFixedPoints}.

\begin{proposition}\label{prop:strongH}
If $S$ is a $\delta$-strong harmonous community,
then $S$ is a $\delta/2$-stable harmonious community.
\end{proposition}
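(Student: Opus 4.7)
The plan is to fix an arbitrary pair $u\in S$, $v\in V-S$ and show that strictly more than a $(1/2+\delta/2)$-fraction of $S$ prefers $u$ over $v$, which is exactly the $\delta/2$-stable harmonious condition for this pair. First I would let $A=\{s\in S: u\succ_{\pi_s}v\}$ and $B=S-A$, and set $a=|A|$ and $b=|S|-a$, so that the goal reduces to proving $a-b\geq \delta|S|$, i.e.\ $a\geq (1+\delta)|S|/2$.

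My main idea is to apply the $\delta$-strong hypothesis to a single worst-case test set $T\subseteq S$: one that retains every element of $B$ (the members who do \emph{not} prefer $u$ over $v$) while shedding as many members of $A$ as the size constraint $|T|\geq(1-\delta)|S|$ will allow. Concretely, I would take $X\subseteq A$ with $|X|=\min\{\lfloor\delta|S|\rfloor,\,a\}$ and set $T=S-X$; by construction $|T|=|S|-|X|\geq(1-\delta)|S|$. The $\delta$-strong hypothesis then guarantees that a strict majority of $T$ prefer $u$ over $v$, which, since $|T\cap A|=a-|X|$ and $|T\cap B|=b$, rearranges into the inequality $a-b>|X|$.

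Next I would split on the value of $|X|$. If $a\leq\lfloor\delta|S|\rfloor$, then $|X|=a$ and the inequality $a-b>a$ would force $b<0$, which is absurd; hence $a>\lfloor\delta|S|\rfloor$ and $|X|=\lfloor\delta|S|\rfloor$. Substituting yields the integer inequality $a-b\geq\lfloor\delta|S|\rfloor+1$, and since $\lfloor\delta|S|\rfloor+1\geq\delta|S|$ for every real $\delta|S|$, I conclude $a-b\geq\delta|S|$, which is precisely the desired bound.

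The whole argument is a one-shot application of the $\delta$-strong hypothesis to a single extremal test set, so no serious obstacle stands in the way; the crux is simply spotting the right $T$, namely the one that retains every non-preferrer and sheds the maximum permissible number of preferrers. The only minor care needed is in the floor bookkeeping that converts the integer inequality $a-b\geq\lfloor\delta|S|\rfloor+1$ into the real-valued conclusion $a\geq(1+\delta)|S|/2$, which I expect to be the only step in the write-up requiring any attention.
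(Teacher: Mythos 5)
Your proof is correct and is essentially the paper's argument in contrapositive form: both arguments pick an adversarial test set $T$ obtained by deleting members who prefer $u$ over $v$ — the paper deletes exactly enough to tie the vote and concludes that this tied set must violate the size constraint $|T|\geq(1-\delta)|S|$, while you delete the maximum number the size constraint permits and read the gap $a-b>\lfloor\delta|S|\rfloor$ directly off the majority condition. The floor bookkeeping at the end is handled correctly, so no gap remains.
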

\begin{proof}
For each pair $u\in S,v\in V-S$, let $f(u,v) = |\{s\in S: u\succ_{\pi_s}
v\}|- |\{s\in S: u\prec_{\pi_s} v\}|$ be the {\em preference gap} between
$u$ and $v$ with respect to $S$.
Suppose $(u^*,v^*) = \mbox{argmin}\{f(u,v) :u\in S,v\in V-S\}$.
We now show that if $S$ is a $\delta$-strong harmonous community
 of $A$, then $f(u^*,v^*) > \delta\cdot|S|$.
The pair $u^*$ and $v^*$ partitions $S$ into two subsets.
$S_\succ = \{s\in S: u^*\succ_{\pi_s} v^*\}$ and
$S_\prec = \{s\in S: u^*\prec_{\pi_s} v^*\}$.
We have $|S_\succ| + |S_\prec|   = |S|$ and $|S_\succ| > |S_\prec|$.
Let $T$ be the union of $S_\prec$ and $|S_\prec|$ arbitrary members of $S_\succ$.
Since members of $T$ are indifferent about $u^*$ and $v^*$,
  we have $|T|= 2|S_\prec| \leq (1-\delta)\cdot |S|$.
Thus
$
f(u^*,v^*)  =  |S_\succ| - |S_\prec| = (|S_\succ|+|S_\prec|) - 2|S_\prec|
  \geq  |S| -
(1-\delta)\cdot |S| = \delta\cdot |S|.
$
Thus, $|S_\succ| > (1/2+\delta/2)\cdot |S|$, and at least
$(1/2+\delta/2)$-fraction of $\Pi_S$ prefer $u^*$ over $v^*$.
\end{proof}

With a simple probabilistic argument, we can
 bound the number of
$\delta$-stable harmonious communities in any preference networks.

\begin{theorem}
$\forall \delta \leq 1/2$, the number of $\delta$-stable harmonious communities in
 any preference network is $n^{12\log n/\delta^2}$.
\end{theorem}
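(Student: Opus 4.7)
The plan is a random-sampling and encoding argument.  The key step is to show that every $\delta$-stable harmonious community $S$ admits a small ``certificate'' $T\subseteq S$ of size $k=\lceil 11\log n/\delta^2\rceil$ with the property that a strict majority of $T$ prefers every $u\in S$ to every $v\in V-S$, and then to observe that each certificate $T$ determines at most $n$ such $S$'s.  Counting over certificates and over communities per certificate will then yield the bound.

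To construct the certificate, fix $S$ and sample $T\subseteq S$ uniformly at random of size $\min(k,|S|)$.  By $\delta$-stability, for each pair $(u,v)\in S\times(V-S)$ the fraction of $s\in S$ with $u\succ_{\pi_s} v$ is at least $1/2+\delta$.  Hoeffding's inequality (for sampling without replacement) bounds the probability that strictly less than half of $T$ prefers $u$ over $v$ by $e^{-2\delta^2 k}\leq n^{-22}$.  A union bound over the at most $n^2$ pairs in $S\times (V-S)$ leaves probability at most $n^{-20}<1$ that some pair fails, so a good $T$ exists.  (When $|S|\leq k$, we simply take $T=S$; the required strict-majority condition then holds because $S$ is in particular harmonious.)

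For the per-certificate count, fix any $T\subseteq V$ and let $\mathcal{D}_T$ be the family of non-empty $S'\subseteq V$ with the property that a strict majority of $T$ prefers every element of $S'$ to every element of $V-S'$.  I claim $\mathcal{D}_T$ is a chain under inclusion, so $|\mathcal{D}_T|\leq n$.  Indeed, if $S_1,S_2\in\mathcal{D}_T$ were incomparable, we could pick $u\in S_1\setminus S_2$ and $v\in S_2\setminus S_1$; the defining property of $S_1$ applied to $(u,v)$ forces a strict $T$-majority for $u\succ v$, while that of $S_2$ applied to $(v,u)$ forces a strict $T$-majority for $v\succ u$, a contradiction.

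Combining the two ingredients, every $\delta$-stable harmonious community lies in $\mathcal{D}_T$ for some $T\subseteq V$ of size at most $k$.  The number of such $T$ is at most $\sum_{j\leq k}\binom{n}{j}\leq n^{k+1}$, and each contributes at most $n$ communities, giving at most $n^{k+2}\leq n^{12\log n/\delta^2}$ in total, where the last inequality uses $\log n/\delta^2\geq 2$ (the only nontrivial regime).  The main technical obstacle is the chain observation for $\mathcal{D}_T$; once this is in hand, the rest is a routine Chernoff-plus-union-bound construction, with the constant $11$ chosen to leave slack for the additive $+2$ in the exponent.
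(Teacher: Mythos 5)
Your proof is correct and follows essentially the same route as the paper's: sample a small random certificate $T\subseteq S$, apply a Chernoff--Hoeffding bound plus a union bound over the at most $n^2$ pairs in $S\times(V-S)$ to show a good $T$ exists, and then multiply the number of certificates by the number of communities each can identify. The only cosmetic differences are that the paper samples a multiset with replacement and derives the ``at most $n$ communities per certificate'' count from the ordered-partition structure of the harmonious aggregation function $F_{\calH}$, whereas you sample without replacement and use a clean chain-under-inclusion argument for the same count.
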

\begin{proof}
Let $S$ be a $\delta$-stable harmonious communities.
For any multi-set $T\subseteq S$,
 we say $T$ {\em identifies} $S$
 if for all $u\in S$ and $v\in V-S$, the majority of $T$
 prefer $u$ to $v$.
Note that such a $T$  determines $S$ once the size of $S$ is set.  To see this,
note that the condition implies that $u\succ_{F(\Pi_T)} v$ for all $(u,v)\in S\times(V-S)$,
which in turn implies that $S$ is of the form $V_1\cup\dots\cup V_i$ where $(V_1,V_2,\dots)$
are the components of the ordered partition $F(\Pi_T)$, ordered in such a way
that $V_1\succ_{F(\Pi_T)}V_2, ...$ (see Proposition~\ref{prop:harmonious-aggregation} and its proof).
Thus once $F(\Pi_T)$ and the size of $S$ are fixed,
$S$ is uniquely determined.

We now show that  $\exists T\subset V$ of size $12\log n/\delta^2$ that identifies $S$.
To this end, we consider a sample $T\subset S$
  of $k = 12\log n/\delta^2$ randomly chosen elements (with replacements).
We analyze the probability that $T$ identifies $S$.
Let $T = \{t_1,...,t_k\}$, and for each $u\in S$ and $v\in V-S$,
  let $x^{(u,v)}_i = [u\succ_{\pi_{t_i}} v]$,
  where $[B]$ denotes the indicator varable of an event $B$.
Then $T$ identifies $S$ iff
  $\sum_{i=1}^k  x^{(u,v)}_i > k/2, \forall u\in S, v\in V -  S$.
We now focus on a particular $(u,v)$ pair and bound
  $\prob{}{\sum_{i=1}^k  x^{(u,v)}_i \leq k/2}.$
We first note that
$$\expec{}{\sum_{i=1}^k  x^{(u,v)}_i }
= \sum_{i=1}^k \expec{}{x^{(u,v)}_i}
\geq \left(\frac{1}{2}+\delta\right)\cdot k.$$
By a standard use of the Chernoff-Hoeffding bound
\begin{eqnarray*}
& & \hspace{-0.5in}\prob{}{\sum_{i=1}^k  x^{(u,v)}_i  \leq k/2}
\\ & \leq &
\prob{}{ \sum_{i=1}^k  x^{(u,v)}_i\leq (1+2\delta)^{-1}  \expec{}{\sum_{i=1}^k  x^{(u,v)}_i }}
\\ & \leq &
 \prob{}{ \sum_{i=1}^k  x^{(u,v)}_i\leq (1-\delta) \expec{}{\sum_{i=1}^k  x^{(u,v)}_i }}
\\
&\leq & e^{ -{\frac{\delta^2}2} (1/2+\delta)k}
 \leq e^{-\frac{\delta^2}{4}k}
 \leq \frac{1}{n^3}, \end{eqnarray*}
 where we used that
$(1+2\delta)^{-1}= 1-2\delta(1+2\delta)^{-1}\leq 1-\delta$ in the third step.

If $T$ fails to identify $S$, then there exists
  $(u\in S,v\in V-S)$ such that
  $\sum_{i=1}^k  x^{(u,v)}_i  \leq k/2$.
As there are at most $|S||V-S|\leq n^2$ such
   $(u,v)$ pairs to consider,
 by the union bound,
\begin{eqnarray*}
\prob{}{\mbox{$T$ identifies $S$}} & \geq &
   1- \sum_{u\in S,v\in V-S}\prob{}{\sum_{i=1}^k  x^{(u,v)}_i  \leq
   k/2}\\
& > & 1-1/n{ >0}.
\end{eqnarray*}
Thus, if $S$ is a $\delta$-stable harmonious communities,
 then  there exists a multi-set $T\subset V$
of size $12\log n/\delta^2$ that identifies $S$.
We can thus enumerate all $\delta$-stable harmonious communities
  by enumerating all $(T,t)$ pairs, where $T$ ranges from
  all multi-subsets of
  $V$ of size $12\log n/\delta^2$ and $t \in [1:n]$ and check if
$T$ can identify a set of size $t$.
\end{proof}

\section{Remarks}\label{Sec:Remarks}
While the
  results of this paper are conceptual and are built on the
  abstract framework of preference networks,
  we hope this study is a significant step towards developing a
  rigorous theory of community formation in social and information
  networks. In particular, we hope this will be used to inform and choose
  among other approaches to community identification
  which have been developed.
Below we discuss a few short-term research directions
  that may help to expand our understanding in order to make more effective
  connection with community identification in
   networks that arise in practice.

\subsubsection*{Preferences Models}

We have based our community formation theory on the {\em ordinal}
  concept of utilities used in social choice and modern economic
  theory \cite{ArrowBook}.
The resulting preference network framework, like that in the classic
   studies of voting \cite{ArrowBook} and stable marriage
 \cite{stableMarriage},
  enables our axiomatic approach to focus on the conceptual question
  of network communities rather than the more practical question of
  community formation in an observed social network.
To better connect with the real-world community identification problem, we
  need to loosen both the assumption of strict ranking and
  the assumption of complete preference information.


With simple modifications to our axioms,
  we can extend our entire theory to a
  preference network  $A= (V,\Pi)$ that allows {\em indifferences},
  i.e., $\Pi$ is given by $n$ ordered partitions $\{\pi_1,...,\pi_n\}:
  \pi_i \in \bLV$.
This extension enables us to partially expand our results to
  affinity networks.
Recall an affinity network $A = (V,W)$ is given by $n$ vectors
  $W = \{w_1,...,w_n\}$, where $w_i$ is an $n$-place non-negative
  vectors.
We can extract an ordinal preference $\pi_i \in \bLV$ from the cardinal
  affinities by sorting entries in $w_i$ -- elements with the same
  weight are assigned to the same partition.

Although this conversion may lose some valuable affinity information
  encoded in the numerical values, it offers a path for us to apply our
  community theory -- even in its current form -- to network analysis.
For example, as suggested in \cite{B3CT},
  given a social network $G = (V,E)$, we can first define
  an affinity network $A = (V,W)$ where $w_i$ is the personalized
  PageRank vector of vertex $i$, and then obtain
  an preference network $(V,\Pi)$
  where $\pi\in \bLV$ ranks vertices in $V$ by $i$'s PageRank
  contributions \cite{PageRankContribution} to them.

Theoretically, we would like to extend our work
  to preference networks with partially ordered preferences
  as a concrete step to understand community formation in
  networks with incomplete or incomparable preferences.
Like our current study, we believe that the existing literature in social
  choice -- e.g., \cite{PartialOrder} -- will be valuable to our
  understanding.
We expect that an axiomatic community approach
   to preference networks with partially ordered preferences,
   together with an axiomatization theory of
  personalized ranking in a network, may offer us
  new understanding of how to address the two basic mathematical problems
  --  extension of individual affinities/preferences to community
  coherence and inference of missing links --
  for studying communities in a social and
  information network.
As this part of community theory becomes sufficiently well developed,
   well-designed experiments with real-world social networks
   will be necessary to further enhance this theoretical framework.

\subsubsection*{Structures, Algorithms, and Complexity}

Our taxonomy theorem provides the basic structure of communities in a
preference network, while the {\sf coNP}-Completeness result illustrates
the algorithmic challenges for community identification in addition to
community enumeration. On the other hand, our analysis of the harmonious rule and the work of
\cite{B3CT} seem to suggest some efficient notion of communities can
be defined.

However,  it remains an open question if there exists a natural and constructive
community rule that simultaneously (i) satisfies all axioms, 
(ii) allows overlapping communities, and (iii) has stable communities which are
polynomial-time samplable and enumerable.

\section*{Acknowledgements}
We thank Nina Balcan, Mark Braverman, and  Madhu Sudan 
  for all the insightful discussions.
The complexity proof of Proposition \ref{prop:H-complexity} 
  is due to Madhu.
\bibliographystyle{abbrv}
\bibliography{community}
\end{document}